\documentclass[11pt]{amsproc}
\pagestyle{plain}
\usepackage{amsmath,amssymb,amsthm,epsfig,color,bm}
\usepackage{hyperref}
\newtheorem{theorem}{Theorem}

\newtheorem{remark}[theorem]{Remark}
\newtheorem{lemma}[theorem]{Lemma}
\newtheorem{proposition}[theorem]{Proposition}

\newtheorem{corollary}[theorem]{Corollary}

\newcommand{\re}{\mathop{\mathrm{Re}}\nolimits}
\numberwithin{equation}{section}
\numberwithin{theorem}{section}
\newcommand{\be}{\begin{equation}}
\newcommand{\ee}{\end{equation}}
\newcommand{\bs}{\begin{split}}
\newcommand{\es}{\end{split}}
\def\pmtwo#1#2#3#4{\left( \begin{array}{cc}#1&#2\\#3&#4\end{array}\right)}
\newcommand{\vertiii}[1]{{\left\vert\kern-0.25ex\left\vert\kern-0.25ex\left\vert #1 
    \right\vert\kern-0.25ex\right\vert\kern-0.25ex\right\vert}}

\title{Asymptotics for a class of planar orthogonal polynomials and truncated unitary matrices}

\author{Alfredo Dea\~{n}o\address{alfredo.deanho@uc3m.es. Departamento de Matem\'aticas, Universidad Carlos III de Madrid, Spain.}, Kenneth T-R McLaughlin\address{kmclaughlin@tulane.edu. School of Science \& Engineering, Tulane University, United States.}, Leslie Molag\address{lmolag@math.uc3m.es. Department of Mathematics, Carlos III University of Madrid, Spain.} and Nick Simm\address{N.J.Simm@sussex.ac.uk. School of Mathematical and Physical Sciences, the University of Sussex, United Kingdom.}}

\begin{document}
\maketitle
\begin{abstract}
We carry out the asymptotic analysis as $n \to \infty$ of a class of orthogonal polynomials $p_{n}(z)$ of degree $n$, defined with respect to the planar measure 
\begin{equation*}
d\mu(z) = (1-|z|^{2})^{\alpha-1}|z-x|^{\gamma}\mathbf{1}_{|z| < 1}d^{2}z, 
\end{equation*}
where $d^{2}z$ is the two dimensional area measure, $\alpha$ is a parameter that can grow with $n$, while $\gamma>-2$ and $x>0$ are fixed. This measure arises naturally in the study of characteristic polynomials of non-Hermitian ensembles and generalises the example of a Gaussian weight that was recently studied by several authors. We obtain asymptotics in all regions of the complex plane and via an appropriate differential identity, we obtain the asymptotic expansion of the partition function. The main approach is to convert the planar orthogonality to one defined on suitable contours in the complex plane. Then the asymptotic analysis is performed using the Deift-Zhou steepest descent method for the associated Riemann-Hilbert problem.\\

\noindent\textbf{Keywords}: Planar orthogonal polynomials $\cdot$ partition function $\cdot$ random matrices $\cdot$ Riemann-Hilbert problems.\\

\noindent\textbf{Mathematics Subject Classification}: 33C45 $\cdot$ 33E17 $\cdot$ 60B20 $\cdot$ 41A60.
\end{abstract}

\section{Introduction and main results}
In this paper, we study the asymptotics of orthogonal polynomials with respect to the measure
\begin{equation}
d\mu(z) = (1-|z|^{2})^{\alpha-1}|z-x|^{\gamma}\mathbf{1}_{|z| < 1}d^{2}z \label{trunc-meas}
\end{equation}
where $d^{2}z$ is the two dimensional area measure and $\alpha>0$, $\gamma>-2$ are parameters and $x>0$. By standard theory, \eqref{trunc-meas} determines a sequence of planar orthogonal polynomials $p_{n}(z)$ of degree $n=0,1,2,3,\ldots$ satisfying
\begin{equation}
\int_{\mathbb{D}}p_{j}(z)\overline{p_{k}(z)}d\mu(z) = \delta_{j,k}, \label{planar-op}
\end{equation}
where $\mathbb{D} = \{z \in \mathbb{C} : |z| < 1\}$ is the unit disc. The class of measures \eqref{trunc-meas} occurs naturally in random matrix theory, as follows. Let
\begin{equation}
U_{N} = \begin{pmatrix} B_{n} & *\\ * & * \end{pmatrix}, \label{uconst}
\end{equation}
where $U_{N}$ is an $N \times N$ Haar distributed unitary matrix, and the principal sub-block $B_{n}$ is of size $n \times n$ with $n < N$. The eigenvalues of $B_{n}$ belong to the closed unit disc. It is known from \cite{ZS00} that the eigenvalues $z_{1},\ldots,z_{n}$ of $B_{n}$ have the following joint probability density function
\begin{equation}
P(z_1,\ldots,z_n) = \frac{1}{Z_{N,n}}\prod_{j=1}^{n}(1-|z_j|^{2})^{\alpha-1}\mathbf{1}_{|z_{j}|\leq1}|\Delta(\bm z)|^{2} \label{jpdf-trunc}
\end{equation} 
where $Z_{N,n}$ is the normalization constant, $\alpha = N-n$, and
\begin{equation}
\Delta(\bm z) = \prod_{1 \leq i < j \leq n}(z_{j}-z_{i}).
\end{equation}
In random matrix theory, the moments of the characteristic polynomial are of interest and have been studied in different contexts. For example, in the purely unitary case $n=N$, they are used to model the value distribution of the Riemann-zeta function on the critical line \cite{Keating_Snaith_2000}. In the general non-Hermitian setting, characteristic polynomials of various ensembles are expected to be related to the two-dimensional Gaussian free field and Gaussian multiplicative chaos, see \cite{WW19} for a general overview of this connection.

The moments are related to the leading coefficients of the orthogonal polynomials as
\begin{equation}
\begin{split}
R_\gamma(x):=\mathbb{E}\left(|\det(B_{n}-x)|^{\gamma}\right)=\frac{n!}{Z_{N,n}}\,\prod_{j=0}^{n-1}\chi_{j}^{-2} \label{momsgin}
\end{split}
\end{equation}
where $p_{j}(z) = \chi_{j}z^{j}+O(z^{j-1})$ is the degree $j$ orthonormal polynomial defined by \eqref{planar-op}. Hence, asymptotic information on the planar orthogonal polynomials can be used to determine asymptotics of \eqref{momsgin}. The limiting regime we are most interested in here is the following. Let $N=N_{n}$ be a sequence of positive integers such that $\mu := n/N_{n} \to \tilde{\mu} \in (0,1)$ as $n \to \infty$. This is typically known as the \textit{regime of strong non-unitarity} and is the regime we mostly consider in this paper. In this regime, the limiting eigenvalue distribution of $B_{n}$ is supported on a smaller disc of radius $\sqrt{\tilde{\mu}}$ centered at the origin \cite{ZS00}.

The truncations generalise a well-studied ensemble of random matrices known as the complex Ginibre ensemble. This corresponds to replacing the algebraic factors in \eqref{trunc-meas} with a Gaussian weight $e^{-n|z|^{2}}$ and can be realised by rescaling $z \to z\sqrt{n}/\sqrt{N}$ and $x \to x/\sqrt{N}$, letting $N \to \infty$ with a fixed $n$ in \eqref{trunc-meas}. Then the joint probability density function of $B_{n}$ becomes
\begin{equation}
\frac{1}{Z_{n}}\prod_{j=1}^{n}e^{-n|z_{j}|^{2}}|\Delta(\bm z)|^{2} \label{gin-pdf}
\end{equation}
and measure \eqref{trunc-meas} becomes
\begin{equation}
d\mu_{\mathrm{Gin}}(z) = e^{-n|z|^{2}}|z-x|^{\gamma} d^2z, \quad z \in \mathbb{C}.\label{giN-neas}
\end{equation}
The underlying random matrix with eigenvalue distribution \eqref{gin-pdf} is defined by sampling an $n \times n$ matrix $G_{n}$ of independent and identically distributed complex normal random variables with mean $0$ and variance $1/n$. In the limit $n \to \infty$, the eigenvalues of $G_{n}$ are distributed uniformly on the unit disc in the complex plane and this is known as the circular law. 
The asymptotics of the planar orthogonal polynomials with respect to \eqref{giN-neas} were first studied in \cite{BBLM14} when $\gamma$ scales with $N$.  More recently, \cite{LY17} carried out a complete analysis for fixed $\gamma > -2$ and also explored the discontinuous behavior for $\gamma$ approaching $0$.  Planar orthogonal polynomials with respect to a different class of measures with discrete rotational symmetry were studied in \cite{BGM15}, where a logarithmic singularity appears (with  $|\gamma|<2$) after suitable transformations. In both of these works $x$ is assumed to be bounded away from $1$.  This was followed by the work \cite{BRG18} in the critical case $x = 1+\tau/\sqrt{n}$ where $\tau \in \mathbb{R}$ is fixed and $|\gamma|<2$. Beyond these works, there were several recent papers investigating Gaussian measures of type \eqref{gin-pdf} where the factor $|z-x|^{\gamma}$ is referred to as the insertion of a point charge \cite{AKS21,B24,BY21a, byun2025three, BY23}. A spherical counterpart to \eqref{trunc-meas} was studied in \cite{BKSY25, byun2025properties, byun2025orthogonal}. Furthermore, the moments of the characteristic polynomial, averaged with respect to \eqref{gin-pdf} have been studied for $|x|<1$ in \cite{WW19}. One of our contributions is to establish a similar asymptotic expansion for the truncated unitary ensemble.

We first present the main results for the asymptotics of the polynomials $p_{n}(z)$ orthogonal with respect to measure \eqref{trunc-meas}. In contrast to the case of orthogonal polynomials on the real line, the asymptotic analysis of orthogonal polynomials with respect to general planar weights is largely open, see \cite{HW21,H24} for recent progress. These results mainly hold in a small neighbourhood of an associated potential theoretic droplet boundary. For the measure \eqref{trunc-meas} we will obtain the asymptotics on all regions of the complex plane.

We work in the strong asymptotic regime defined by $\mu = \frac{n}{N} \to \tilde{\mu} \in (0,1)$, so that $\alpha = n(\mu^{-1}-1) > 0$ and the limiting support of the eigenvalues is the disc $|z|<\sqrt{\tilde{\mu}}$. Let $\Gamma_1$ be the Jordan curve passing through $z=1$ determined by
\begin{align} \label{eq:levelCurveGamma1}
|z| = \bigg{|}\frac{1-x^{2}}{1-x^{2}z}\bigg{|}^{\tilde{\mu}^{-1}-1},
\end{align}
see Figure \ref{fig:zeros002b}. 
It is proved in Lemma \ref{eq:GammarJordan} below that $\Gamma_1$ is a Jordan curve containing $[0,1)$ in its interior. 
We denote the interior and exterior of $\Gamma_1$ by $\mathrm{Int}(\Gamma_1)$ and $\mathrm{Ext}(\Gamma_1)$ respectively. Furthermore, let $D_{1}(\delta)$ be the disc of radius $\delta>0$ centered at $1$, and let $U$ be a thin neighbourhood of the contour $\Gamma_1$. It is convenient to state our result in terms of the (rescaled) \textit{monic} polynomials $P_{n}(z) = x^{-n}p_{n}(xz)\chi_{n}^{-1}$. The notation $\mathcal O(n^{-\infty})$ below means that the expression goes to $0$ faster than any negative power of $n$ (in fact exponentially) as $n\to\infty$. 
\begin{theorem} \label{thm:mainPoly}
Let $\gamma\in\mathbb C$ be fixed and assume $\mathrm{Re}(\gamma)>-2$. 
Consider the strong regime where $\mu := \frac{n}{N} \to \tilde{\mu}\in (0,1)$. Fix $\delta>0$ and let $\delta_n=\frac{\delta}n$. Then as $n\to\infty$, uniformly in the regions for $z$ described below, uniformly in any compact subset of $x \in (0,\sqrt{\tilde{\mu}})$, $z \in \mathbb{C}$ and $\mathrm{Re}(\gamma)>-2$, we have
\begin{equation*}
P_{n}(z) = 
\begin{cases} 
\begin{aligned}
&z^{n}\left(\frac{z}{z-1}\right)^{\frac{\gamma}{2}}(1+\mathcal O(n^{-\infty})), & z \in \mathrm{Ext}(\Gamma_1)\setminus U\,\\
&\frac{1}{\Gamma\left(\frac{\gamma}{2}\right)}\left(\frac{1-\tilde{\mu}^{-1}x^{2}}{1-x^{2}}\right)^{\frac{\gamma}{2}-1} \times\\
&\left(\frac{1-x^{2}}{1-x^{2}z}\right)^{\alpha+\frac{\gamma}{2}}n^{\frac{\gamma}{2}-1}\frac{1}{1-z}(1+\mathcal O(n^{-1})), & z \in \mathrm{Int}(\Gamma_1)\setminus U\\
& \exp\left(n \frac{1-\tilde\mu^{-1}x^2}{1-x^2}(z-1)\right)
\frac{\Gamma(\frac\gamma2,n\frac{1-\tilde\mu^{-1}x^2}{1-x^2}(z-1))}{\Gamma(\frac\gamma2)} \times\\
& \left(\frac{1-x^{2}}{1-x^{2}z}\right)^{\alpha+\frac{\gamma}{2}}(z-1)^{-\frac\gamma2}
(1+\mathcal O(n^{-1})), & z \in D_{1}(\delta_n).
\end{aligned} 
\end{cases}
\end{equation*}
When $z \in U\setminus D_{1}(\delta_n)$ the asymptotics are given by the sum of the first two cases above, i.e., uniformly as $n\to\infty$
\begin{equation*}
\begin{split}
& P_{n}(z) = \left(\frac{1-x^{2}}{1-x^{2}z}\right)^{\alpha+\frac{\gamma}{2}}n^{\frac{\gamma}{2}-1}\frac{1}{1-z}\frac{1}{\Gamma\left(\frac{\gamma}{2}\right)}\left(\frac{1-\tilde{\mu}^{-1}x^{2}}{1-x^{2}}\right)^{\frac{\gamma}{2}-1}(1+\mathcal O(n^{-1}))\\
&+z^{n}\left(\frac{z}{z-1}\right)^{\frac{\gamma}{2}}(1+\mathcal O(n^{-\infty})), 
\quad z \in U\setminus D_{1}(\delta_n).
\end{split}
\end{equation*}
\end{theorem}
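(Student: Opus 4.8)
The plan is to convert the planar orthogonality \eqref{planar-op} into a non-Hermitian orthogonality on contours in $\mathbb{C}$, and then run the Deift–Zhou steepest descent analysis on the corresponding $2\times 2$ Riemann–Hilbert problem. First I would exploit the algebraic structure of the weight $(1-|z|^2)^{\alpha-1}$ on the disc: using polar coordinates and integration by parts in the radial variable (or, equivalently, the reproducing-kernel identity for the Bergman-type space associated with $(1-|z|^2)^{\alpha-1}$), one reduces $\int_{\mathbb{D}} P_n(z)\overline{z^k}\,d\mu(z)=0$ for $k=0,\dots,n-1$ to a contour-integral condition. The factor $|z-x|^\gamma=(z-x)^{\gamma/2}(\bar z-x)^{\gamma/2}$ complicates this step because $\bar z$ appears; the standard trick (cf.\ the Gaussian case in \cite{LY17,BBLM14}) is to expand one of the factors $(\bar z - x)^{\gamma/2}$ in a power series and to absorb the remaining analytic factor $(z-x)^{\gamma/2}$ into the weight, producing an orthogonality of the form $\oint_{\Sigma} P_n(s)\, s^{k}\, w(s)\,ds = 0$, $k=0,\dots,n-1$, on a suitable closed contour $\Sigma$ encircling the origin, with a weight $w(s)$ built from $(1-x^2 s)^{-\alpha}(s-1)^{-\gamma/2}$-type factors; the binomial series in $\gamma$ converges because $\mathrm{Re}(\gamma)>-2$ and $x<\sqrt{\tilde\mu}<1$.

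Next I would set up the standard $2\times 2$ RHP $Y$ whose $(1,1)$ entry is the monic polynomial $P_n$, normalized by $Y(z)\sim (I+O(1/z))\,\mathrm{diag}(z^n,z^{-n})$ at infinity, with jump $\begin{pmatrix}1 & w(s)\\ 0 & 1\end{pmatrix}$ across $\Sigma$. The $g$-function is the complex logarithmic potential of the equilibrium (here simply $g(z)=\log z$, since the relevant external field after the rescaling $z\to xz$ is trivial and the ``droplet'' for the contour problem is the point $0$, i.e.\ the zeros of $P_n$ accumulate on a curve rather than filling a region), so the first transformation $Y\mapsto T$ uses $z^{\pm n}=e^{\pm n\log z}$. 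The opening of lenses is dictated by the sign of $\mathrm{Re}$ of the exponent appearing in $w$: the relevant phase function is $\phi(z)=\log z + (\tilde\mu^{-1}-1)\log\frac{1-x^2}{1-x^2 z}$, and its zero level set is precisely the Jordan curve $\Gamma_1$ of \eqref{eq:levelCurveGamma1}; by Lemma \ref{eq:GammarJordan} this curve is well defined and encircles $[0,1)$. I deform $\Sigma$ onto $\Gamma_1$ and open lenses so that the jumps become exponentially close to the identity away from the single stationary point of $\phi$, which sits at $z=1$ (where $\Gamma_1$ passes and where the factor $(s-1)^{-\gamma/2}$ in $w$ has its branch point). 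Away from a shrinking disc $D_1(\delta_n)$, $\delta_n=\delta/n$, the global parametrix $N(z)$ solves the RHP with the jump $w$ on $\Gamma_1$ alone and is built explicitly via a scalar Szegő-type function $\exp\!\big(\tfrac{1}{2\pi i}\oint_{\Gamma_1}\frac{\log w(s)}{s-z}\,ds\big)$; tracing $N$ through the transformations $T\to S\to R$ and back gives the first two cases of the theorem, where the $z\in\mathrm{Ext}(\Gamma_1)$ formula comes from the $e^{ng}$ factor dominating and the $z\in\mathrm{Int}(\Gamma_1)$ formula, carrying the characteristic $n^{\gamma/2-1}$ and the Gamma-function prefactor $1/\Gamma(\gamma/2)$, comes from the subdominant solution of the parametrix.

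The local analysis near $z=1$ is where the incomplete Gamma function enters, and this is the step I expect to be the main obstacle. In the disc $D_1(\delta_n)$ — whose radius is $O(1/n)$, a \emph{shrinking} neighbourhood, which is nonstandard — the weight $w$ behaves like $c\,e^{n\beta(z-1)}(z-1)^{-\gamma/2}$ with $\beta=\frac{1-\tilde\mu^{-1}x^2}{1-x^2}$ (read off from linearizing $\phi$ at $1$), and the local RHP is solved not by the generic Airy/Bessel/confluent-hypergeometric parametrix but by one whose entries are incomplete Gamma functions $\Gamma(\gamma/2, n\beta(z-1))$ — essentially because the model problem is $\oint e^{\zeta} \zeta^{-\gamma/2}$ on a ray, whose RHP solution is expressible through $\Gamma(\gamma/2,\cdot)$; this is the same mechanism that produces incomplete-Gamma local parametrices in \cite{LY17,BRG18}. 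The delicate points are: (i) constructing this local parametrix and verifying it matches $N$ on $\partial D_1(\delta_n)$ to the required $1+O(1/n)$ accuracy, given that the matching boundary is itself shrinking; (ii) checking that the small-norm estimate $\|R-I\|=O(1/n)$ still holds with a contour that has a shrinking circular piece — one must track the $n$-dependence of the jump of $R$ on $\partial D_1(\delta_n)$ carefully, using that the mismatch there is $O(1/(n\delta_n))=O(1)$ pointwise but integrates against a contour of length $O(\delta_n)=O(1/n)$, yielding the claimed $O(1/n)$; and (iii) in the overlap annulus $U\setminus D_1(\delta_n)$, reconciling the large-$\zeta$ asymptotics $\Gamma(\gamma/2,\zeta)\sim \Gamma(\gamma/2) - \zeta^{\gamma/2-1}e^{-\zeta}(1+O(1/\zeta))$ of the incomplete Gamma function with the sum of the two global formulas, which is exactly the additive structure asserted in the last display of the theorem. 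Once the local and global parametrices are patched and the standard small-norm argument for $R$ is in place, unravelling $Y=R\,N\,(\text{lens factors})\,(\text{$g$-function factors})$ and extracting the $(1,1)$ entry gives $P_n(z)=x^{-n}p_n(xz)\chi_n^{-1}$ in each region, completing the proof.
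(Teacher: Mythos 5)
Your overall skeleton (planar-to-contour reduction, RHP for $Y$, steepest descent) matches the paper, but at the two decisive technical points your route diverges and, as written, has genuine gaps. First, the $g$-function: you take $g(z)=\log z$ globally on the grounds that ``the external field is trivial,'' but it is not --- the weight contains $(1-x^{2}z)^{\alpha}$ with $\alpha=cn$ growing linearly in $n$, so the potential is $V(z)=\log z-c\log(1-x^{2}z)$ and the correct $g$ is the piecewise function equal to $-c\log(1-x^{2}z)+\ell$ \emph{inside} the contour and $\log z$ outside (the paper's \eqref{eq:defg}). With $g=\log z$ everywhere the normalized jump $\bigl(\begin{smallmatrix}1 & z^{2n}w_{n}\\ 0&1\end{smallmatrix}\bigr)$ is already exponentially close to $I$ on $\Gamma_1$, and a naive small-norm conclusion gives only $P_{n}(z)=z^{n}(1+\mathcal O(e^{-cn}))$; in $\mathrm{Int}(\Gamma_1)$ the true answer $n^{\gamma/2-1}(1-x^{2}z)^{-\alpha-\gamma/2}/(1-z)\cdots$ is exponentially \emph{larger} than $z^{n}$, so it is swallowed by your error term and cannot be recovered from ``the subdominant solution of the parametrix'' without the interior branch of $g$ supplying the $(1-x^{2}z)^{-\alpha-\gamma/2}$ factor. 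Second, your local parametrix on the shrinking disc $D_{1}(\delta_n)$, $\delta_n=\delta/n$: your own estimate in point (ii) --- that the mismatch on $\partial D_{1}(\delta_n)$ is $O(1)$ pointwise --- is fatal to the small-norm argument, because the operator norm of the Cauchy operator $C_{J_R}$ on $L^{2}(\Sigma_R)$ is controlled by $\|J_{R}-I\|_{L^\infty}$, not by the arclength; a jump that is $I+O(1)$ on any piece of the contour does not give a convergent Neumann series with small remainder, so ``$\|R-I\|=O(1/n)$'' does not follow. Moreover, passing the lens contour through the branch point $z=1$ reintroduces exactly the difficulties the paper is built to avoid: the jump is not in $L^{2}$ for $\gamma>0$ and $R$ is unbounded at $z=1$ for $\gamma\in(-2,0)$, which in \cite{LY17,WW19} forces $k$-dependent corrections to the parametrix for $\gamma\in(2k,2(k+1)]$.

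The paper's actual route avoids all of this: it keeps both lens contours $\Gamma_{r_1},\Gamma_{r_2}$ with $1<r_1<r\le r_2\le z_0$ strictly away from $z=1$, so the $R$-problem has analytic jumps that are uniformly exponentially close to $I$ and no local parametrix is ever built. The key extra idea is not to discard the correction to $R=I$ but to expand the Neumann series to first order, which expresses $Y_{11}$ in each region, up to exponentially small errors, as an explicit scalar Cauchy integral $\frac{1}{2\pi i}\oint_{\Gamma_{r_1}}e^{n\tilde\phi(s)}h_{-\gamma}(s)\frac{ds}{s-z}$ times the interior factor $(1-x^{2}z)^{-\alpha-\gamma/2}$ (equations \eqref{eq:YinIntGammar1}--\eqref{eq:behavYSigmaInftyUnbounded}). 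That scalar integral is then deformed to pass near $z=1$, conformally mapped by $\psi=\phi(1)-\phi$, and identified term by term with incomplete Gamma functions via Lemma \ref{lem:ointsnue-kssw} and analytic continuation in $\gamma$ (Proposition \ref{prop:intGammar1}); this single computation simultaneously yields all four regimes of the theorem, including the $D_{1}(\delta_n)$ case and the additive overlap formula, with no matching on a shrinking boundary required. If you want to salvage your approach you would need to replace the shrinking disc by a fixed-radius one and handle the $\gamma$-dependent non-integrability at $z=1$, which is precisely the machinery of \cite{LY17}; the paper's contour-integral argument is the cleaner alternative.
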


\begin{figure}[h!]
\centerline{\includegraphics[scale=1]{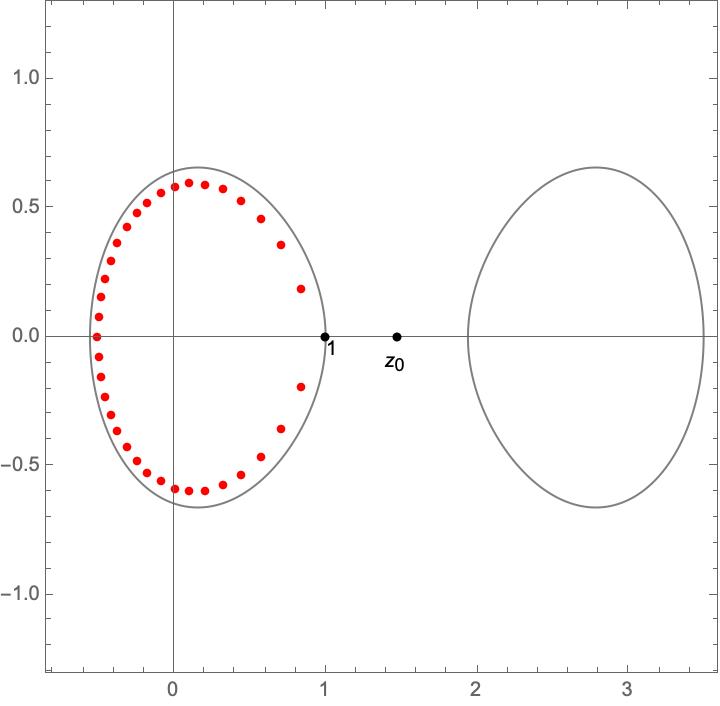}}
\caption{The zeros of the orthogonal polynomials $p_n(x z)$ as defined in \eqref{planar-op}. Here $N=2n=75$, $\gamma=\frac12$ and $x=\frac{7}{12}$. The curve on the left is $\Gamma_1$. The curve on the right is another component of the level curve $\varphi(z):=|z| |1-x^2 z|=1-x^2$ (note that $\tilde\mu=\frac12$). The point $z_0=\tilde\mu x^{-2}=\frac{72}{49}$ is the only point where two level curves of $\varphi$ meet and it is a saddle point.}
\label{fig:zeros002b}
\end{figure}

\begin{remark}
While the estimate in the region $U\setminus D_1(\delta_M)$ suggests a branch cut at $[0,1]$, the estimate for $P_n$ is actually well-defined. For $z<1$ the corresponding jump is cancelled by terms in the error term (see the proof of Proposition \ref{prop:intGammar1}). 
\end{remark}

\begin{remark}
For large $n$, the zeros of $P_n$ must lie in the thin neighborhood $U$ of $\Gamma_1$. See Figure \ref{fig:zeros002b}.
\end{remark}

Of course, one may obtain the behaviors of the original polynomials $p_n(z)$ as defined in \eqref{planar-op} simply by a rescaling $z\to x^{-1}z$ and multiplication by an overall factor, and the asymptotic behaviors, exhibiting singular behavior near the point charge insertion $z=x$, are seen to be comparable to \cite{LY17}[Theorem 3]. For completeness, we mention that the excluded case $x=0$ is trivial since the rotational symmetry gives
\begin{align*}
p_n(z) = \frac1{\sqrt{2\pi B(\alpha,n+\frac\gamma2+1)}} z^n
\end{align*}
and thus $P_n(z)=z^n$.\\

We now present our results on the moments of the characteristic polynomial of matrices sampled from the model of truncated unitary matrices.
   
\begin{theorem}
\label{th:trunc-asympt}
Let $B_{n}$ be the $n \times n$ matrix in \eqref{uconst}. Consider the strong regime where $\mu := \frac{n}{N} \to \tilde{\mu}$ as $n \to \infty$ with $0 < \tilde{\mu}<1$. Then for any fixed $\delta>0$, as $n \to \infty$ the following asymptotic formula holds uniformly for $0 \leq x < \sqrt{\tilde{\mu}}-\delta$ and uniformly for compact subsets of $\mathrm{Re}(\gamma)>-2$,
\begin{equation}
\mathbb{E}(|\det(B_{n}-x)|^{\gamma}) = n^{\frac{\gamma^{2}}{8}}\mu^{\frac{\gamma n}{2}}\left(\frac{1-\mu}{1-x^{2}}\right)^{\frac{\alpha\gamma}{2}}\,C_{\gamma,\mu}(x)\Big(1+\mathcal O\Big(\frac{1}{n}\Big)\Big), \label{rgamx_as}
\end{equation}
where $\alpha=N-n$ and 
\begin{equation*}
C_{\gamma,\mu}(x) = \frac{(2\pi)^{\frac{\gamma}{4}}}{G(1+\frac{\gamma}{2})}\,\left(\frac{\sqrt{1-\mu}}{1-x^{2}}\right)^{\frac{\gamma^{2}}{4}}.
\end{equation*}
\end{theorem}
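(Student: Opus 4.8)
The plan is to obtain \eqref{rgamx_as} from the exact relation \eqref{momsgin} together with a differential identity in the parameter $x$ and the exactly solvable base case $x=0$. Write $w(z)=(1-|z|^2)^{\alpha-1}|z-x|^\gamma\mathbf 1_{|z|<1}$ for the density of $\mu$ in \eqref{trunc-meas} and $D_n(\gamma,x)=\prod_{j=0}^{n-1}\chi_j^{-2}$ for the associated partition function; since the weight carries no $x$-dependence at $\gamma=0$, $D_n(0,x)=Z_{N,n}/n!$ is a constant and \eqref{momsgin} reads $R_\gamma(x)=D_n(\gamma,x)/D_n(0,x)$. Differentiating the orthonormality $\int_{\mathbb D}|p_j|^2\,d\mu=1$ in $x$, and using that $\partial_x p_j-\tfrac{\partial_x\chi_j}{\chi_j}p_j$ has degree $\le j-1$ and is hence orthogonal to $p_j$, gives $\partial_x\log\chi_j^{-2}=\int_{\mathbb D}|p_j|^2\,(\partial_x\log w)\,d\mu=-\gamma\int_{\mathbb D}|p_j|^2\,\re(1/(z-x))\,d\mu$, whence, summing over $j=0,\dots,n-1$,
\[
\partial_x\log R_\gamma(x)=-\gamma\int_{\mathbb D}\mathcal K_n(z,z)\,\re\!\Big(\frac1{z-x}\Big)\,d\mu(z),\qquad \mathcal K_n(z,z)=\sum_{j=0}^{n-1}|p_j(z)|^2 .
\]
The asymptotics of the reproducing kernel $\mathcal K_n$ needed on the right are part of the output of the Deift--Zhou steepest-descent analysis of the contour Riemann--Hilbert problem to which \eqref{planar-op} is reduced --- the same analysis underlying Theorem \ref{thm:mainPoly}.

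Next I would evaluate the integral asymptotically, uniformly for $0\le x\le \sqrt{\tilde\mu}-\delta$ and for $\gamma$ in compact subsets of $\{\re\gamma>-2\}$. To leading order $\mathcal K_n(z,z)\,d\mu(z)$ is $n$ times the equilibrium measure $d\nu(z)=\frac{\tilde\mu^{-1}-1}{\pi(1-|z|^2)^2}\mathbf 1_{|z|<\sqrt{\tilde\mu}}\,d^2z$, a rotationally symmetric probability measure supported on $\{|z|\le\sqrt{\tilde\mu}\}$; since $x$ lies strictly inside its support, $\re\int_{\mathbb D}(z-x)^{-1}\,d\nu(z)=-x^{-1}\nu(\{|w|<x\})=-\frac{(1-\tilde\mu)\,x}{\tilde\mu(1-x^2)}$, which produces the leading term $\frac{\gamma\alpha x}{1-x^2}(1+o(1))$. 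The only other contribution of comparable precision comes from the microscopic disc of radius $\mathcal O(n^{-1/2})$ about $z=x$, where $\mathcal K_n$ is governed by the point-charge parametrix built from the incomplete Gamma function $\Gamma(\tfrac\gamma2,\cdot)$ --- precisely the object appearing in the third case of Theorem \ref{thm:mainPoly}; a careful evaluation shows this region adds $\frac{\gamma^2 x}{2(1-x^2)}+\mathcal O(n^{-1})$ and, crucially, nothing of size $\log n$ (the candidate logarithm is $x$-independent at this order and is annihilated by $\partial_x$, consistent with the $x$-free factor $n^{\gamma^2/8}$ in \eqref{rgamx_as}). The hard-edge layer at $|z|\sim\sqrt{\tilde\mu}$ is rotationally symmetric to leading order and so transparent to $\re(1/(z-x))$ up to $\mathcal O(n^{-1})$. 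Altogether
\[
\partial_x\log R_\gamma(x)=\frac{\gamma\,(\alpha+\tfrac\gamma2)\,x}{1-x^2}+\mathcal O\!\Big(\frac1n\Big),\qquad 0\le x\le\sqrt{\tilde\mu}-\delta .
\]

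Integrating from $0$ to $x$ gives $\log R_\gamma(x)=\log R_\gamma(0)+\tfrac{\gamma(\alpha+\gamma/2)}2\log\tfrac1{1-x^2}+\mathcal O(n^{-1})$, and it remains to supply the base case. By the rotational symmetry recorded at the end of the introduction, for $x=0$ the orthonormal polynomials are monomials and $\chi_j^{-2}$ is a Beta integral, so
\[
R_\gamma(0)=\prod_{j=0}^{n-1}\frac{B(j+\tfrac\gamma2+1,\alpha)}{B(j+1,\alpha)}
=\frac{G(n+1+\tfrac\gamma2)\,G(N+1)\,G(1+\tfrac\gamma2+\alpha)}{G(N+1+\tfrac\gamma2)\,G(n+1)\,G(1+\tfrac\gamma2)\,G(1+\alpha)},
\]
using $G(z+1)=\Gamma(z)G(z)$ and $n+\alpha=N$. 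The asymptotic expansion of the Barnes $G$-function (uniform over compact $\gamma$-sets, in the regime $n/N\to\tilde\mu$) then yields exactly the right-hand side of \eqref{rgamx_as} at $x=0$; in particular the non-elementary factors $n^{\gamma^2/8}$, $(2\pi)^{\gamma/4}$ and $G(1+\tfrac\gamma2)^{-1}$ originate here. Combining this with the integrated identity and collecting the powers of $1-x^2$ yields \eqref{rgamx_as}. (One could instead use a differential identity in $\gamma$ with base case $\gamma=0$, where $R_0\equiv1$; then the Barnes $G$ would emerge from integrating the point-charge-parametrix contribution over $\gamma$.)

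The step I expect to be the main obstacle is the asymptotic evaluation in the second paragraph, and within it the microscopic analysis near $z=x$: one must extract the $\mathcal O(1)$ term $\frac{\gamma^2 x}{2(1-x^2)}$ from the incomplete-Gamma parametrix to sufficient accuracy, confirm the absence of a $\log n$ contribution to $\partial_x\log R_\gamma$, handle the principal-value nature of $\re(1/(z-x))$ at the singularity (where $\mathcal K_n\,d\mu$ has the integrable zero or pole $\sim|z-x|^{\re\gamma}$), and propagate uniform error bounds as $x$ approaches $\sqrt{\tilde\mu}-\delta$. The macroscopic and hard-edge contributions, by contrast, are essentially forced by the rotational symmetry of the limiting picture together with the polynomial asymptotics already in hand.
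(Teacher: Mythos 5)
Your overall skeleton coincides with the paper's: a differential identity in $x$ for $\log R_\gamma(x)$, asymptotic evaluation of its right-hand side, integration from $0$ to $x$, and the explicit base case $R_\gamma(0)=\prod_{j=0}^{n-1}\frac{\Gamma(\frac\gamma2+j+1)\Gamma(j+\alpha+1)}{\Gamma(j+1)\Gamma(\frac\gamma2+j+\alpha+1)}$ evaluated via Barnes $G$ asymptotics (your closed form and the extraction of $n^{\gamma^2/8}$, $(2\pi)^{\gamma/4}$, $G(1+\frac\gamma2)^{-1}$ agree with \eqref{rgam0}--\eqref{rgam0_as}). The target identity $\partial_x\log R_\gamma(x)=\frac{\gamma}{2}(\frac\gamma2+\alpha)\frac{2x}{1-x^2}+o(1)$ is also exactly what the paper proves. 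The difference is \emph{which} differential identity you use, and this is where the proposal has a genuine gap.

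Your identity expresses $\partial_x\log R_\gamma(x)$ as $-\gamma\int_{\mathbb D}\mathcal K_n(z,z)\re(1/(z-x))\,d\mu(z)$ with the full reproducing kernel $\mathcal K_n(z,z)=\sum_{j=0}^{n-1}|p_j(z)|^2$. The derivation of the identity is fine, but the evaluation to accuracy $\mathcal O(1/n)$ is asserted rather than performed, and it cannot be assembled from the results in hand. First, Theorem \ref{thm:mainPoly} gives asymptotics only for the single polynomial $p_n$; the kernel requires all degrees $j\le n-1$ with the \emph{same} fixed $\alpha=N-n$, uniformly in $j$, and there is no planar Christoffel--Darboux formula to collapse the sum. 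Second, the degrees that dominate $\mathcal K_n$ in the microscopic window $|z-x|=\mathcal O(n^{-1/2})$ are $j\approx Nx^2$, for which $x$ sits exactly at the edge of the degree-$j$ droplet of radius $\sqrt{j/N}$ --- i.e.\ the double-scaling regime that the paper explicitly postpones because the contour is then pinned at the branch point $z=1$ and a local parametrix is unavoidable. So the ``careful evaluation'' producing the crucial $\frac{\gamma^2x}{2(1-x^2)}$ term, the exclusion of a $\log n$ contribution, and the uniformity up to $x=\sqrt{\tilde\mu}-\delta$ all rest on kernel asymptotics that are strictly harder than anything proved here. The paper sidesteps all of this: Lemma \ref{lem:diffid} derives the differential identity at the level of the \emph{contour} bi-orthogonality, where the Christoffel--Darboux formula for the biorthogonal system collapses the sum over $j$ into degree-$n$ and degree-$(n+1)$ data evaluated at the single exterior point $u=x^{-2}$. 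There the global parametrix plus the small-norm estimate \eqref{eq:behavYSigmaInftyUnbounded} already give the answer with \emph{exponentially} small errors, no local parametrix and no kernel asymptotics needed. If you want to salvage your route, you would either need to import the known bulk point-charge kernel asymptotics for this ensemble as an external input, or switch to the contour-level identity as the paper does.
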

\begin{remark}
In the particular case $\gamma=2k$ with $k \in \mathbb{N}$, Theorem \ref{th:trunc-asympt} reduces to a particular case of \cite[Theorem 1.3]{SS23}, proved using a special duality that interchanges the roles of $k$ and $n$. The latter approach is limited to integers $k \in \mathbb{N}$.
\end{remark}
Note that $R_{\gamma}(x)$ is by definition the moment generating function of the random variable $\log|\det(B_{n}-x)|$. As a consequence of the above asymptotics we obtain the following central limit theorem.
\begin{corollary}
\label{cor:clt}
Consider the asymptotic regime described in Theorem \ref{th:trunc-asympt}. Then for any fixed $x \in [0,\sqrt{\tilde\mu})$, as $n \to \infty$ we have the convergence in distribution to a standard normal random variable
\begin{equation}
\frac{\log|\det(B_{n}-x)| - \kappa_{1}(n,x)/2}{\frac{1}{2}\sqrt{\log n}} \overset{d}{\longrightarrow} \mathcal{N}(0,1) \label{clt}
\end{equation}
where
\begin{equation}
\kappa_{1}(M,x) = n\log(\mu)+\alpha\log\left(\frac{1-\mu}{1-x^{2}}\right).
\end{equation}
\end{corollary}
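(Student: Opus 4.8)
The plan is to deduce the central limit theorem from the moment asymptotics of Theorem \ref{th:trunc-asympt} by the classical method of characteristic functions, using that by \eqref{momsgin} the quantity $R_\gamma(x) = \mathbb{E}(|\det(B_n-x)|^\gamma) = \mathbb{E}\bigl(e^{\gamma\log|\det(B_n-x)|}\bigr)$ is the moment generating function of $\log|\det(B_n-x)|$. Fix $x \in [0,\sqrt{\tilde\mu})$ and put $\delta := \tfrac12(\sqrt{\tilde\mu}-x)>0$, so that $x<\sqrt{\tilde\mu}-\delta$ and Theorem \ref{th:trunc-asympt} applies. For fixed $t \in \mathbb{R}$ the numbers $\gamma_n := \tfrac{2\mathrm i t}{\sqrt{\log n}}$ (together with $0$) form a compact subset of $\{\re\gamma>-2\}$ --- indeed, they lie in a fixed bounded neighbourhood of the origin for all large $n$ --- so \eqref{rgamx_as} may be applied with $\gamma$ replaced by $\gamma_n$, which is precisely what the uniformity over compact subsets of $\{\re\gamma>-2\}$ in Theorem \ref{th:trunc-asympt} ensures.

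Writing $X_n$ for the random variable on the left of \eqref{clt}, one has $\mathrm i t X_n = \gamma_n\log|\det(B_n-x)| - \mathrm i t\,\kappa_1(n,x)/\sqrt{\log n}$, so that
\begin{equation*}
\begin{split}
\mathbb{E}\bigl(e^{\mathrm i t X_n}\bigr)
&= e^{-\mathrm i t\,\kappa_1(n,x)/\sqrt{\log n}}\;\mathbb{E}\bigl(e^{\gamma_n\log|\det(B_n-x)|}\bigr)\\
&= e^{-\mathrm i t\,\kappa_1(n,x)/\sqrt{\log n}}\; n^{\gamma_n^2/8}\,\mu^{\gamma_n n/2}\Bigl(\tfrac{1-\mu}{1-x^2}\Bigr)^{\alpha\gamma_n/2}\,C_{\gamma_n,\mu}(x)\bigl(1+\mathcal O(n^{-1})\bigr).
\end{split}
\end{equation*}
Since $\kappa_1(n,x)=n\log\mu+\alpha\log\bigl(\tfrac{1-\mu}{1-x^2}\bigr)$ is expressed through the finite-$n$ quantities $\mu$ and $\alpha$, the product $\mu^{\gamma_n n/2}\bigl(\tfrac{1-\mu}{1-x^2}\bigr)^{\alpha\gamma_n/2}$ equals $e^{\gamma_n\kappa_1(n,x)/2}=e^{\mathrm i t\,\kappa_1(n,x)/\sqrt{\log n}}$ and cancels the leading exponential \emph{exactly}. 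Also $\gamma_n^2/8=-t^2/(2\log n)$, whence $n^{\gamma_n^2/8}=e^{-t^2/2}$ for every $n$, while $\gamma_n\to0$ and $\mu\to\tilde\mu\in(0,1)$ force $C_{\gamma_n,\mu}(x)\to C_{0,\tilde\mu}(x)=1$ by continuity of $\gamma\mapsto\tfrac{(2\pi)^{\gamma/4}}{G(1+\gamma/2)}\bigl(\tfrac{\sqrt{1-\mu}}{1-x^2}\bigr)^{\gamma^2/4}$ at $\gamma=0$ (using $G(1)=1$), and the relative error tends to $1$. Hence $\mathbb{E}(e^{\mathrm i t X_n})\to e^{-t^2/2}$ for every $t\in\mathbb{R}$, the characteristic function of $\mathcal N(0,1)$, and L\'evy's continuity theorem gives $X_n\overset{d}{\longrightarrow}\mathcal N(0,1)$, i.e.\ \eqref{clt}.

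The identical computation with the real choice $\gamma_n=2t/\sqrt{\log n}$ instead yields $\mathbb{E}(e^{tX_n})\to e^{t^2/2}$ locally uniformly in $t$, which also suffices by convergence of moment generating functions; the complex version above is marginally cleaner in that it avoids any moment-determinacy issue. I do not expect any real obstacle in this argument: the one point deserving attention is the legitimacy of applying Theorem \ref{th:trunc-asympt} with an $n$-dependent exponent $\gamma_n\to0$, which is guaranteed by the stated uniformity in $\gamma$; the exactness of the $\kappa_1$-cancellation and the continuity of $C_{\gamma,\mu}(x)$ in $\gamma$ at the origin are both elementary.
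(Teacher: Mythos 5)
Your proposal is correct and follows essentially the same route as the paper: apply the uniform-in-$\gamma$ asymptotics of Theorem \ref{th:trunc-asympt} with $\gamma$ of order $t/\sqrt{\log n}$, observe that the centering $\kappa_1(n,x)$ cancels the $\mu^{\gamma n/2}\left(\frac{1-\mu}{1-x^2}\right)^{\alpha\gamma/2}$ factor exactly, that $n^{\gamma^2/8}$ produces the Gaussian, and that $C_{\gamma,\mu}(x)\to 1$. The only (harmless) cosmetic difference is that you phrase the main argument via characteristic functions with purely imaginary $\gamma_n$ and L\'evy continuity, whereas the paper uses the real moment generating function; your real-$\gamma_n$ remark is exactly the paper's proof, with the correct scaling $\gamma_n=2t/\sqrt{\log n}$.
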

\begin{remark}
This generalises a central limit theorem known to hold for the determinant, i.e. where $x=0$, see \cite[Theorem 4.4]{SS23}.
\end{remark}
The starting point for our approach is related to the one used for the Gaussian weight \eqref{giN-neas} in \cite{BBLM14,BGM15, LY17, lee2023strong, WW19, BKP23}. It involves re-writing the planar orthogonality in terms of suitable orthogonality on contours. This can then be characterised in terms of a Riemann-Hilbert problem, to which the Deift-Zhou steepest descent analysis can be applied. There are some notable differences in the way we carry out the asymptotic analysis compared to \cite{WW19,LY17}. In their construction, the steepest descent contour passes through a branch point at $z=1$ and a local parametrix is constructed around this point which is solved in terms of incomplete gamma functions. For values of $\gamma$ within a given interval $\gamma \in (2k,2(k+1)]$ where $k>-1$ is an integer, the local parametrix has to be suitably corrected depending on the value of $k$. This choice is natural as it turns out that the asymptotic behavior of the (rescaled) orthogonal polynomials in various regions is essentially dependent on the behavior at $z=1$, but it does cause some technical difficulties primarily in the construction of the global and local parametrices. Our approach instead passes the contour through a point away from the branch point and does not require the construction of a local parametrix. Utilizing the specific structure of the jumps of the final transformation, we use a small norm argument that allows us to express the orthogonal polynomials as a contour integral in those regions (see Section \ref{sec:smallNorm}). The integration contour can be deformed (see Section \ref{sec:ContourInt}) to pass through $z=1$ and an argument involving analytic continuation finally allows us to relate this integral to the incomplete gamma function. In this way, we obtain global asymptotics without ever introducing local parametrices. 

The approach described before is for $x$ in the bulk, and it uses the freedom in picking contours to avoid having to deal with the behavior near singular points in the Riemann-Hilbert analysis.  Another interesting regime is the so-called double scaling regime, where $x$ is microscopically close to the droplet boundary, i.e., $x=\sqrt{\tilde\mu}+\mathcal O(1/\sqrt n)$. Here, the freedom in picking contours is lost when $\alpha=N-n\to\infty$, which now are forced to pass through $z=1$. Several steps in our approach have to be adapted, including the construction of a local parametrix near $z=1$, and we postpone this more involved analysis to a future work. However, we are able to investigate the analogous setting in the weak regime where $\alpha=N-n$ is fixed and
\begin{equation}\label{eq:dscaling2}
x^2=1-\frac{v}{n},\qquad v>0.
\end{equation}
This regime has a qualitatively different behaviour to the strong regime in Theorem \ref{th:trunc-asympt}. In this setting, the quantity $R_{\gamma}(x)$ can be written in terms of Toeplitz determinants on the unit circle, with singularity structure matching those considered in \cite{CIK11}, and using that work, we show that the corresponding large $n$ asymptotics involve $\sigma$-Painlev\'e functions.
\begin{theorem}
Consider the function $\sigma(u)$, which is a solution of the Jimbo--Miwa--Okamoto $\sigma$-Painlev\'e V equation:
\begin{equation}\label{eq:PV}
(u\sigma'')^2
=
\left(\sigma-u\sigma'+2(\sigma')^2+2a\sigma'\right)^{2}-4(\sigma')^2(\sigma'+a+b)(\sigma'+a-b),
\end{equation}
with parameters $a=\frac{\alpha+\gamma}{2}$ and $b=\frac{\alpha}{2}$, and which is determined by the following boundary conditions:
\begin{equation}\label{eq:PVbcc}
\sigma(u)
=
\begin{cases}
a^2-b^2+\frac{a^2-b^2}{2a}\left\{u+u^{1+2a}C(a,b)\right\}\left(1+\mathcal{O}(u)\right), & \,\, u\to 0, \,\,2a\notin\mathbb{Z}\\[2mm]
a^2-b^2+\mathcal{O}(u)
+
\mathcal{O}(u^{1+2a})+
\mathcal{O}(u^{1+2a}\log u), & \,\, u\to 0, \,\,2a\in\mathbb{Z}\\[2mm]
\displaystyle -\frac{u^{-1+2a}e^{-u}}{\Gamma(a-b)\Gamma(a+b)}\left(1+\mathcal{O}(u^{-1})\right),&\,\, u\to+\infty.
\end{cases}
\end{equation}
Such a solution of \eqref{eq:PV} exists and is real analytic on $(0, +\infty)$, except for a finite set of possible poles, see \cite[Theorem 1.8]{CIK11}. 

Then, for $\alpha+\frac{\gamma}{2},\frac{\gamma}{2}\neq -1,-2,\ldots$, in the double scaling regime \eqref{eq:dscaling2}, for fixed values of $v\in(0,+\infty)$ bounded away from poles of the function $\sigma$, 
%
we have the following asymptotic behavior as $n\to\infty$:
\begin{align}\label{eq:Rgamma_PV}
R_{\gamma}(x)
=
\left(1+o(1)\right) n^{\frac{\gamma^2}{4}}
v^{-\frac{\gamma}{2}\left(\frac{\gamma}{2}+\alpha\right)}
\exp\left(-\int_{v}^{\infty} \frac{\sigma(u)}{u}du\right).
\end{align}
\end{theorem}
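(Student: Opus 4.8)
The point of departure is the exact identity \eqref{momsgin}, which reduces the problem to understanding $\prod_{j=0}^{n-1}\chi_{j}^{-2}=\prod_{j=0}^{n-1}h_{j}$, with $h_{j}$ the squared norm of the degree-$j$ monic planar orthogonal polynomial for \eqref{trunc-meas}, together with the fact that $Z_{N,n}$ is the $\gamma=0$ case and hence fully explicit (a product of Beta functions, equivalently a ratio of Barnes $G$-functions). First I would feed the weak regime $\alpha=N-n$ fixed, $x^{2}=1-v/n$, into the reduction of planar orthogonality to contour orthogonality used throughout the paper. In this regime the relevant orthogonality becomes one on the unit circle; since the product of squared norms of monic orthogonal polynomials on the circle is the associated Toeplitz determinant, this gives $\prod_{j=0}^{n-1}h_{j}=(\text{explicit prefactor})\times D_{n}(f_{v})$, with $D_{n}(f_{v})=\det[\widehat{f_{v}}(j-k)]_{j,k=0}^{n-1}$ for a weight $f_{v}$ on $|z|=1$. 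It is cleanest to establish this identity first for real $\gamma$, where the branch structure of $|z-x|^{\gamma}$ is easiest to control, and then extend to all $\mathrm{Re}\,\gamma>-2$ by analyticity in $\gamma$ of both sides.

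Next I would recognise $f_{v}$ as a symbol of the type analysed in \cite{CIK11}. Its singular part is an analytic non-vanishing background times a factor with two branch points, one inside and one outside the unit circle---these are the analytic continuations of the circle restriction of $|z-x|^{\gamma}$, located at $z=x$ and $z=1/x$, together with the contribution of the reduced weight $(1-|z|^{2})^{\alpha-1}$---all approaching $z=1$ at rate $1-x=\tfrac{v}{2n}+O(n^{-2})$ as $n\to\infty$. In the limit this is a Fisher--Hartwig singularity at $z=1$ with root parameter $\tfrac{\alpha+\gamma}{2}$ and jump parameter $\tfrac{\alpha}{2}$, so matching the normalised symbol of \cite{CIK11} fixes that paper's deformation parameter to be $\asymp v/n$ (arranged so that its natural variable $2nt$ corresponds to $v$ up to an explicit constant) and fixes the $\sigma$-Painlev\'e~V parameters in \eqref{eq:PV}--\eqref{eq:PVbcc} to $a=\tfrac{\alpha+\gamma}{2}$, $b=\tfrac{\alpha}{2}$; then indeed $\Gamma(a-b)\Gamma(a+b)=\Gamma(\tfrac{\gamma}{2})\Gamma(\alpha+\tfrac{\gamma}{2})$, matching the $u\to+\infty$ boundary condition in \eqref{eq:PVbcc}. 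The conditions $\tfrac{\gamma}{2},\alpha+\tfrac{\gamma}{2}\notin\{-1,-2,\dots\}$ and ``$v$ bounded away from the poles of $\sigma$'' are exactly the hypotheses under which the relevant results of \cite{CIK11} apply, with $\sigma$ the solution furnished by \cite[Theorem~1.8]{CIK11}.

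Now I would invoke the large-$n$ asymptotics of $D_{n}(f_{v})$ in this double-scaling regime from \cite{CIK11}: these have the shape $\log D_{n}(f_{v})=(\text{Szeg\H{o}-type terms, linear in }n)+(\text{Fisher--Hartwig }\log n\text{ and constant terms})-\int_{v}^{\infty}\tfrac{\sigma(u)}{u}\,du+o(1)$. Substituting this together with the explicit value of $Z_{N,n}$ into $R_{\gamma}(x)=\tfrac{n!}{Z_{N,n}}\prod_{j=0}^{n-1}h_{j}$, the exponentially-large-in-$n$ factors cancel among $n!/Z_{N,n}$, the reduction prefactor, and the Szeg\H{o} leading term of $D_{n}(f_{v})$; the surviving powers of $n$ (those from \cite{CIK11} together with those produced by $Z_{N,n}$ and the reduction prefactor, via Stirling's formula and the Barnes-$G$ asymptotics) combine to $n^{\gamma^{2}/4}$; the leftover elementary $v$-dependent factors assemble into $v^{-\frac{\gamma}{2}(\frac{\gamma}{2}+\alpha)}$, with $\tfrac{\gamma}{2}(\tfrac{\gamma}{2}+\alpha)=a^{2}-b^{2}$ (which makes the $v\to0$ and $v\to\infty$ limits of \eqref{eq:Rgamma_PV} consistent with the $\tilde\mu\to1$ limit of Theorem~\ref{th:trunc-asympt}); and the residual multiplicative constant is absorbed into $1+o(1)$. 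This yields \eqref{eq:Rgamma_PV}; as a check, for $\gamma=2k$ the outcome should match the $\tilde\mu\to1$ specialisation of \cite[Theorem~1.3]{SS23}.

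The asymptotic analysis itself is outsourced to \cite{CIK11}, so the main obstacle lies in the first two steps: (i) carrying out the contour reduction in the weak regime while tracking \emph{all} Gamma- and Barnes-$G$-type prefactors, so that $\prod_{j=0}^{n-1}h_{j}=(\text{explicit})\times D_{n}(f_{v})$ is an exact identity; (ii) bringing $f_{v}$ into precise correspondence with the normalised symbol of \cite{CIK11}, in particular pinning down the constant in the relation between the deformation parameter and $v/n$ and verifying that our analytic background lies in the admissible class there; and (iii) performing the cancellations in the last step to the order needed to control the $o(1)$. The analytic continuation in complex $\gamma$ is a minor additional point.
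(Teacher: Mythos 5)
Your proposal is correct and follows essentially the same route as the paper: reduce $R_\gamma(x)$ to a Toeplitz determinant via the contour orthogonality (the paper's \eqref{eq:EgammaTM}, with the explicit Gamma/Barnes-$G$ prefactor recorded in \eqref{eq:RgammaDM}), match the rescaled weight to the symbol of \cite{CIK11} with $x=e^{-t}$, $a=\frac{\alpha+\gamma}{2}$, $b=\frac{\alpha}{2}$ and $2nt=v+\mathcal O(n^{-1})$, import the Painlev\'e~V asymptotics of $\log D_n(t)$ from there, and assemble the prefactors so that $n^{-\alpha\gamma/2}\cdot(n/v)^{\frac{\gamma}{2}(\frac{\gamma}{2}+\alpha)}$ yields $n^{\gamma^2/4}v^{-\frac{\gamma}{2}(\frac{\gamma}{2}+\alpha)}$. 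The steps you flag as the main obstacles (exact prefactor bookkeeping and symbol normalisation) are exactly what the paper carries out.
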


%

\section*{Acknowledgements}
A.D. acknowledges financial support from Grant PID2021-123969NB-I00, funded by MCIN/AEI/ 10.13039/501100011033, and from grant PID2021-122154NB-I00 from Spanish Agencia Estatal de Investigaci\'on. K.M. Acknowledges the support of a Royal Society Wolfson Fellowship (grant number: RSWVF\textbackslash R2 \textbackslash 212003).
L.D.M. was funded by the Royal Society grant RF\textbackslash ERE\textbackslash210237 and the internal grant 2024\textbackslash00002\textbackslash007\textbackslash001\textbackslash023 of the Carlos III University of Madrid. 
N.S. acknowledges support from the Royal Society, grant URF\textbackslash R\textbackslash231028.

\section{Formulation as a Riemann-Hilbert problem}
The purpose of this section is to formulate a Riemann-Hilbert problem for the planar orthogonal polynomials defined by \eqref{planar-op}. The first step is a reduction to orthogonality on a suitable contour. Let $\mathcal{C}$ be a positively oriented closed loop encircling $[0,1]$ and avoiding $[1/x^{2},\infty)$, see Figure \ref{fig:curveC}. We will allow that $\mathcal C$ passes through $1$, but not through $0$. Define the weight 
\begin{equation}
w(z) = (1-x^{2}z)^{\alpha+\frac{\gamma}{2}}\left(\frac{z-1}{z}\right)^{\frac{\gamma}{2}}, \qquad z \in \mathcal{C} \label{cweight}
\end{equation} 
where $\alpha=N-n$. 
\begin{figure}[h!]
\centerline{\includegraphics[scale=1.25]{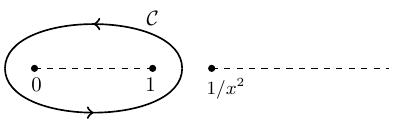}}
\caption{Curve $\mathcal{C}$ and branch cut for the orthogonal polynomials $p_k(z)$.}
\label{fig:curveC}
\end{figure}
Then
\begin{lemma}
\label{lem:planartocontour}
The planar orthogonal polynomials $\tilde{p}_j(z) = x^{-j} p_{j}(x z)$ introduced in \eqref{planar-op} are characterized by the following contour orthogonality 
\begin{equation}
\oint_{\mathcal{C}}\tilde{p}_{j}(z)z^{-k}w(z)\frac{dz}{2iz} = \delta_{j,k}\frac{1}{\hat{\chi}_{j}}, \qquad k=0,\ldots,j. \label{constr}
\end{equation}
where
\begin{equation}
\hat \chi_{j} = \frac{\Gamma(\frac{\gamma}{2}+j+1)\Gamma(\alpha)}{\Gamma(\frac{\gamma}{2}+j+\alpha+1)}\chi_{j}. \label{norm-rel}
\end{equation}
\end{lemma}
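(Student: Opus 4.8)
The plan is to reduce the planar orthogonality \eqref{planar-op} to a contour orthogonality by performing the angular integration in polar coordinates. First I would write $z = \rho e^{i\theta}$ in the planar integral $\int_{\mathbb D} \tilde p_j(z) \overline{z^k}\, d\tilde\mu(z)$, where $\tilde\mu$ is the pushforward of $\mu$ under $z \mapsto z/x$ (so that the factor $|z-x|^\gamma$ becomes $x^\gamma |z-1|^\gamma$ and the disc $\mathbb D$ rescales to $|z| < 1/x$). Since $\tilde p_j(z) = \sum_{\ell=0}^j c_\ell z^\ell$ is a polynomial and $\overline{z^k} = \rho^k e^{-ik\theta}$, the $\theta$-integral picks out only the $\ell = k$ term via $\int_0^{2\pi} e^{i(\ell - k)\theta}\,d\theta = 2\pi \delta_{\ell,k}$ — but one must be careful because the weight $(1-|z|^2 x^2)^{\alpha-1}|z-1|^\gamma$ (or whatever the rescaled algebraic factors are) also depends on $\theta$ through $|z-1|^\gamma = |\rho e^{i\theta} - 1|^\gamma$. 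So the clean orthogonality only appears after one further manipulation: expand $|z - 1|^\gamma = ((z-1)(\bar z - 1))^{\gamma/2}$ and recognize that the $\theta$-average of the non-radially-symmetric part produces exactly a radial moment that can be evaluated. The standard trick (as in \cite{LY17, WW19, BBLM14}) is to use the identity $\frac{1}{2\pi}\int_0^{2\pi} f(\rho e^{i\theta}) \overline{\rho^k e^{ik\theta}}\, d\theta = \frac{1}{2\pi i}\oint_{|z| = \rho} f(z) z^{-k-1}\,dz$ valid when $f$ is analytic inside, applied to $f(z) = \tilde p_j(z)(z-1)^{\gamma/2}(\ldots)$ with an appropriate branch, thereby converting the $\theta$-integral into a contour integral over the circle $|z| = \rho$.

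The key steps, in order, would be: (i) rewrite the planar integral in polar coordinates and separate the radial integration $\int_0^{1/x}\rho\, d\rho$ from the angular integration; (ii) for fixed $\rho$, express the angular integral as a contour integral over $|z| = \rho$ using the analyticity of $\tilde p_j(z) z^{-k} (z-1)^{\gamma/2} z^{-\gamma/2}$ inside the annulus (choosing the branch of $((z-1)/z)^{\gamma/2}$ cut along $[0,1]$, consistent with Figure \ref{fig:curveC}); (iii) deform each such circle to the fixed contour $\mathcal C$ — legitimate because the integrand is analytic in the region swept out, away from the cut $[0,1]$ and from $[1/x^2, \infty)$ where $(1-x^2 z)^{\alpha+\gamma/2}$ has its branch point; (iv) carry out the remaining radial integral $\int_0^{1/x} \rho \cdot (1 - x^2\rho^2)^{\alpha-1} \cdot (\text{power of }\rho)\, d\rho$, which after the substitution $t = x^2 \rho^2$ becomes a Beta integral $B(\tfrac\gamma2 + j + 1, \alpha) = \Gamma(\tfrac\gamma2+j+1)\Gamma(\alpha)/\Gamma(\tfrac\gamma2+j+\alpha+1)$, producing precisely the factor relating $\hat\chi_j$ to $\chi_j$ in \eqref{norm-rel}; (v) assemble the constant factors ($x^\gamma$, $\pi$, powers of $x$, the $1/(2iz)$ normalization in \eqref{constr}) and match with the stated normalization $1/\hat\chi_j$, and separately verify the vanishing for $k < j$ and $j \neq k$ follows from the Beta integral being finite and nonzero only in the diagonal case after the degree count.

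The main obstacle I expect is step (iii)–(iv): keeping careful track of the branches of $(z-1)^{\gamma/2}$, $z^{-\gamma/2}$ and $(1-x^2 z)^{\alpha+\gamma/2}$ so that the contour deformation is justified and the weight $w(z)$ in \eqref{cweight} emerges with the correct exponents and signs, and then correctly identifying the power of $\rho$ that survives the angular integration so that the radial integral is genuinely a Beta function. In particular one must check that after pulling the $(z-1)^{\gamma/2}$ factor off the circle, the leftover radial dependence combines the $(1-|z|^2 x^2)^{\alpha-1}$ from the measure with $|z|^{2(\text{something})}$ coming from $|z-1|^\gamma$ and from pairing $z^\ell$ against $\bar z^k$; a naive computation can easily misplace a factor of $\rho^\gamma$ or mislabel the Beta parameters. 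A secondary subtlety is that $\mathcal C$ is allowed to pass through $z = 1$ (where $(z-1)^{\gamma/2}$ vanishes or blows up depending on the sign of $\mathrm{Re}(\gamma)$), so one should note the integrand is still integrable there precisely because $\mathrm{Re}(\gamma) > -2$, which is exactly the stated hypothesis; this also explains why the contour may touch $1$ but must avoid $0$, where $((z-1)/z)^{\gamma/2} \sim z^{-\gamma/2}$ together with the $dz/z$ would be non-integrable.
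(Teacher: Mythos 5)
There is a genuine gap at the heart of your outline: the angular and radial integrations do \emph{not} decouple for the weight $|z-x|^{\gamma}$, and the contour-integral identity you invoke does not apply. On the circle $|z|=\rho$ one has $\bar z=\rho^{2}/z$, so the factor $(\bar z-1)^{\gamma/2}$ coming from $|z-1|^{\gamma}=((z-1)(\bar z-1))^{\gamma/2}$ becomes $(\rho^{2}/z-1)^{\gamma/2}$, which has branch points at $z=0$ and at $z=\rho^{2}$, both \emph{inside} the circle $|z|=\rho$. Hence the function $f$ to which you want to apply $\frac{1}{2\pi}\int_{0}^{2\pi}f(\rho e^{i\theta})\overline{\rho^{k}e^{ik\theta}}\,d\theta=\frac{1}{2\pi i}\oint_{|z|=\rho}f(z)z^{-k-1}dz$ is not analytic inside the contour, and moreover it depends on $\rho$ not only through the contour but through the location $z=\rho^{2}$ of its branch point. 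Consequently step (iii) fails: the circles $|z|=\rho$ carry genuinely different integrands and cannot all be deformed to one fixed contour $\mathcal C$, and the leftover radial integral in step (iv) is not a single Beta integral $B(\tfrac\gamma2+j+1,\alpha)$ --- there is no mechanism in your sketch that produces the power $\rho^{\gamma+2j}$ you would need. (Your own caveat that ``the $\theta$-average of the non-radially-symmetric part produces exactly a radial moment'' is precisely the assertion that requires proof and that does not hold in the naive form.)

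The paper's proof proceeds differently and avoids polar coordinates entirely. After the preliminary replacement of $\bar z^{k}$ by $(\bar z-x)^{k}$ (a triangular change of basis that preserves the orthogonality conditions), one introduces the explicit $\bar z$-antiderivative $h_{k}(z,\bar z,x)=\int_{x}^{\bar z}(s-x)^{\gamma/2+k}(1-zs)^{\alpha-1}\,ds$ of the full integrand and applies Green's theorem, reducing the area integral to a single line integral over the unit circle $S^{1}$. Only \emph{on} $S^{1}$, where $|z|^{2}=1$, does the substitution $u=(s-x)/(\bar z-x)$ produce the identity $1-zs=(1-u)(1-zx)$, which evaluates $h_{k}$ in closed form as the Euler Beta constant $\Gamma(\tfrac\gamma2+k+1)\Gamma(\alpha)/\Gamma(\tfrac\gamma2+k+\alpha+1)$ times $(\bar z-x)^{\gamma/2+k+1}(1-zx)^{\alpha-1}$; this is where the Gamma-factor in \eqref{norm-rel} actually comes from, not from a radial moment. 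The remaining steps (writing $\bar z=1/z$ on the circle, undoing the basis change, rescaling $z\to zx$, and tracking the cuts of $(z-1)^{\gamma/2}$, $z^{-\gamma/2}$ and $(1-x^{2}z)^{\alpha+\gamma/2}$ so the circle can be deformed to $\mathcal C$) match the flavour of your steps (ii)--(v), but the essential device --- the antiderivative plus Green's theorem, which collapses the two-dimensional integral to one boundary integral \emph{before} any contour manipulation --- is missing from your proposal, and without it the argument does not go through.
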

\begin{proof}
The proof of this Lemma is essentially contained in \cite[Section 5]{DS22} and follows from an application of Green's theorem, we give the proof in Appendix \ref{sec:planartocontour}.
\end{proof}
The advantage of reducing the planar orthogonality in \eqref{planar-op} to the contour orthogonality \eqref{constr} is that the latter can be formulated in terms of a Riemann-Hilbert problem. In order to state this, we need to introduce a second family of polynomials $\{q_{j}\}_{j=0}^{\infty}$ satisfying a suitable bi-orthogonality relation with respect to $\{\tilde{p}_{j}\}_{j=0}^{\infty}$. For these purposes it is convenient to deform the contour through $1$ and let $\mathcal{C}$ be the unit circle. As in \cite{DIK11,WW19}, we define these polynomials by the formula
\begin{equation}
\begin{split}
q_{j}(z^{-1}) &= \frac{\prod_{k=0}^{j}\frac{\Gamma(\frac{\gamma}{2}+k+1)\Gamma(\alpha)}{\Gamma(\frac{\gamma}{2}+k+\alpha+1)}}{\sqrt{T_{j}T_{j+1}}}\,\\
&\times \det\begin{pmatrix} \oint_{\mathcal{C}} w(s)\frac{ds}{2is} & \ldots & \oint_{\mathcal{C}}s^{j-1}w(s)\frac{ds}{2is} & 1\\
\vdots & \ldots & \vdots & \vdots\\
\oint_{\mathcal{C}} s^{-j}w(s)\frac{ds}{2is} & \ldots & \oint_{\mathcal{C}}s^{-1}w(s)\frac{ds}{2is} & z^{-j}
\end{pmatrix}
\end{split}
\end{equation}
where $T_{n}=T_n(x,\gamma)$ is the Toeplitz determinant
\begin{equation}\label{eq:TM}
T_{n} = \det\bigg\{\oint_{\mathcal{C}}z^{j-k}w(z)\,\frac{dz}{2i z}\bigg\}_{j,k=0}^{n-1}.
\end{equation}
These polynomials satisfy the orthogonality
\begin{equation}
\oint_{\mathcal{C}}z^{k}q_{j}(z^{-1})w(z)\,\frac{dz}{2iz} = \frac{\delta_{jk}}{\hat\chi_{j}}, \qquad k=0,\ldots,j \label{qorthog}
\end{equation}
and the bi-orthonormality
\begin{equation}
\oint_{\mathcal{C}}\tilde{p}_{k}(z)q_{j}(z^{-1})w(z)\,\frac{dz}{2iz} = \delta_{jk}. \label{biorth}
\end{equation}
The leading coefficient of $q_{j}(z^{-1})$ is $\hat{\chi}_{j}$. By Heine's identity, $T_{n}$ is related to the coefficients and the orthogonal polynomials as follows
\begin{equation}
\begin{split}
T_{n} &= \frac{1}{n!}\oint_{\mathcal{C}^{n}}|\Delta(\bm{z})|^{2}\prod_{j=1}^{n}\frac{w(z_j)\,dz_{j}}{2iz_{j}}\\
&= \frac{1}{n!}\oint_{\mathcal{C}^{n}}\det\bigg\{\frac{\tilde{p}_{j-1}(z_i)}{\chi_{j-1}}\bigg\}_{i,j=1}^{n}\det\bigg\{\frac{q_{j-1}(z_i^{-1})}{\hat{\chi}_{j-1}}\bigg\}_{i,j=1}^{n}\prod_{j=1}^{n}\frac{w(z_j)\,dz_{j}}{2iz_{j}}\\
&=\left(\prod_{j=0}^{n-1}\chi_{j}^{-1}\hat{\chi}_{j}^{-1}\right)\det\bigg\{\oint_{\mathcal{C}}\tilde{p}_{j-1}(z)q_{k-1}(z^{-1})w(z)\frac{dz}{2i z}\bigg\}_{j,k=1}^{n}\\
&=\prod_{j=0}^{n-1}\frac{\Gamma(\frac{\gamma}{2}+j+\alpha+1)}{\Gamma(\frac{\gamma}{2}+j+1)\Gamma(\alpha)}\chi_{j}^{-2}. \label{tnrelation}
\end{split}
\end{equation}
In the above, we used that $\frac{\tilde{p}_{j-1}(z)}{\chi_{j-1}}$ and $\frac{q_{j-1}(z)}{\hat{\chi}_{j-1}}$ are both monic families of polynomials to rewrite the Vandermonde factors. Then we used \eqref{biorth} and the relation between $\chi_{j}$ and $\hat{\chi}_{j}$ in \eqref{norm-rel}. By \eqref{momsgin} and \eqref{tnrelation} we get the relation
\begin{equation}\label{eq:EgammaTM}
\mathbb{E}\left(|\det(B_{n}-x)|^{\gamma}\right) = \frac{n!}{Z_{N,n}}\,\prod_{j=0}^{n-1}\frac{\Gamma(\frac{\gamma}{2}+j+1)\Gamma(\alpha)}{\Gamma(\frac{\gamma}{2}+j+\alpha+1)}T_{n}(x,\gamma).
\end{equation}
We can now state the Riemann-Hilbert problem characterising the above polynomials. Define $\sigma_3$ as the Pauli matrix $\operatorname{diag}(1,-1)$. Define the weight
\begin{equation}\label{eq:wk}
w_{n}(z) = w(z) \frac1{z^n} = (1-x^{2}z)^{\alpha+\frac{\gamma}{2}}\left(\frac{z-1}{z}\right)^{\frac{\gamma}{2}}\,\frac{1}{z^{n}},
\end{equation}
where $\re\gamma > -2$ and fix $0 < x < 1$. Let $Y : \mathbb{C}^{2 \times 2}\setminus\mathcal C \to \mathbb{C}$ be a complex $2 \times 2$ matrix satisfying the properties
\begin{itemize}
\item $Y$ is analytic on $\mathbb{C}\setminus \mathcal{C}$.
\item On $\mathcal{C}$ we have the jump
\begin{equation}
Y_{+}(z) = Y_{-}(z)\begin{pmatrix} 1 & w_{n}(z)\\ 0 & 1 \end{pmatrix}.
\end{equation}
\item As $z \to \infty$ we have the asymptotics
\begin{equation}
Y(z) = \left(I+\mathcal O\left(\frac{1}{z}\right)\right)z^{n\sigma_{3}}, \qquad z \to \infty.
\end{equation}
\item $Y(z)$ is bounded near $z=1$ when $\mathcal C$ does not pass through $1$.\\
In the case that $\mathcal C$ passes through $1$ we impose
\begin{align*}
Y(z) &= \begin{pmatrix}
\mathcal O(1) & \mathcal O((z-1)^\frac{\gamma}{2})\\
\mathcal O(1) & \mathcal O((z-1)^\frac{\gamma}{2})
\end{pmatrix}
& z\to 1.
\end{align*}
\end{itemize}
Then the unique solution to this Riemann-Hilbert problem is given by
\begin{equation}
Y_{n}(z) = \begin{pmatrix} \frac{1}{\chi_{n}}\tilde{p}_{n}(z) & \frac{1}{\chi_{n}}\displaystyle\int_{\mathcal C}\frac{\tilde{p}_{n}(s)}{s-z}w_{n}(s)\frac{ds}{2\pi i}\\
-\chi_{n-1}z^{n-1}q_{n-1}(z^{-1}) & -\chi_{n-1}\displaystyle\int_{\mathcal C}\frac{s^{n}q_{n-1}(s^{-1})}{s-z}w_{n}(s)\frac{ds}{2\pi i u} \end{pmatrix}.
\end{equation}
\begin{remark}
Note that, while the first column of $Y$ is independent of the choice of jump contour $\mathcal C$, the second column is not. This necessitates the extra condition near $z=1$. 
\end{remark}
We also show that the partition function in \eqref{momsgin} can be expressed in terms of the matrix entries of $Y$. Define
\begin{equation} \label{eq:Rgamma}
R_{\gamma}(x) = \mathbb{E}\left(|\det(B_{n}-x)|^{\gamma}\right).
\end{equation}
\begin{lemma}
\label{lem:diffid}
The following differential identity holds:
\begin{equation}
\begin{split}
\frac{d}{dx}(\log R_{\gamma}(x)) &= -2x\left(\frac{\gamma}{2}+\alpha\right)\,\left( -n u^{n+1}q_{n}(u^{-1})I_{12}(u)\right.\\
&\left.-n u+u^{3}[\partial_{z}\tilde{p}_{n}(z)]_{z=u}\,I_{22}(u)\right.\\
&\left.-u^{n+2}[\partial_{z}q_{n}(z^{-1})]_{z=u}\,I_{12}(u)\right) \label{diffid}
\end{split}
\end{equation}
where $u=x^{-2}$ and
\begin{equation}\label{eq:I12I22}
\begin{split}
I_{12}(u) &= \oint_{\mathcal C}\frac{z^{-n+1}\tilde{p}_{n}(z)}{u-z}\frac{w(z)}{2iz}dz\\
I_{22}(u) &= \oint_{\mathcal C}\frac{q_{n}(z^{-1})}{u-z}\frac{w(z)}{2iz}dz.
\end{split}
\end{equation}
\end{lemma}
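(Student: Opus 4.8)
The plan is to reduce the identity to a statement about the Toeplitz determinant $T_n(x,\gamma)$ and then differentiate. By \eqref{eq:EgammaTM}, $R_\gamma(x)$ equals $T_n(x,\gamma)$ times the factor $\frac{n!}{Z_{N,n}}\prod_{j=0}^{n-1}\frac{\Gamma(\frac\gamma2+j+1)\Gamma(\alpha)}{\Gamma(\frac\gamma2+j+\alpha+1)}$, which is independent of $x$ (recall $\alpha=N-n$), so $\frac{d}{dx}\log R_\gamma(x)=\frac{d}{dx}\log T_n(x,\gamma)$. To differentiate the Toeplitz determinant I would use $\frac{d}{dx}\log\det T_n=\Tr\big(T_n^{-1}\partial_x T_n\big)$ together with the classical expansion of $T_n^{-1}$ in terms of the biorthogonal polynomials $\{\tilde p_j\},\{q_j\}$ (equivalently, differentiate Heine's formula \eqref{tnrelation} and use the determinantal structure of the associated ensemble). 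Since only the factor $(1-x^2z)^{\alpha+\frac\gamma2}$ of the weight \eqref{cweight} depends on $x$, this yields
\[
\frac{d}{dx}\log T_n=\oint_{\mathcal C}\big(\partial_x\log w(z)\big)\,K_n(z,z)\,\frac{w(z)\,dz}{2iz},\qquad K_n(z,z):=\sum_{j=0}^{n-1}\tilde p_j(z)q_j(z^{-1}).
\]

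Next I would compute $\partial_x\log w(z)=-2x\big(\alpha+\tfrac\gamma2\big)\tfrac{z}{1-x^2z}$ and, writing $1-x^2z=x^2(u-z)$ with $u=x^{-2}$, rewrite $\tfrac{z}{1-x^2z}=\tfrac{uz}{u-z}=-u+\tfrac{u^2}{u-z}$. The $-u$ term accounts for the summand $-nu$ inside the bracket of \eqref{diffid}, because $\oint_{\mathcal C}K_n(z,z)\tfrac{w(z)\,dz}{2iz}=n$ by the biorthonormality \eqref{biorth}. Hence
\[
\frac{d}{dx}\log T_n=-2x\big(\alpha+\tfrac\gamma2\big)\left(-nu+u^2\oint_{\mathcal C}\frac{K_n(z,z)}{u-z}\,\frac{w(z)\,dz}{2iz}\right).
\]

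Then I would evaluate the remaining integral using the confluent Christoffel--Darboux formula for the biorthonormal system on the unit circle $\mathcal C$: with $\hat q_n(z):=z^nq_n(z^{-1})$ one has
\[
K_n(z,z)=z^{1-n}\big(\hat q_n(z)\tilde p_n'(z)-\tilde p_n(z)\hat q_n'(z)\big),
\]
the $\zeta\to z$ limit of $K_n(z,\zeta)=\big(\hat q_n(z)\,\zeta^{-n}\tilde p_n(\zeta)-\tilde p_n(z)\,q_n(\zeta^{-1})\big)/(1-z/\zeta)$. Substituting this and using $z^{-n}\hat q_n(z)=q_n(z^{-1})$, the integral splits into a $\tilde p_n'$-piece and a $\hat q_n'$-piece. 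Writing $\tilde p_n'(z)=\tilde p_n'(u)+(z-u)g(z)$ and $\hat q_n'(z)=\hat q_n'(u)+(z-u)h(z)$ with $g,h$ polynomials of degree $\le n-2$: the $(z-u)$-remainders are killed by the orthogonality relations \eqref{qorthog}, \eqref{constr} (since $zg$ has degree $\le n-1$, and $\tilde p_n$ is orthogonal to $z^{-k}$ for $1\le k\le n-1$), while the surviving terms --- after the partial-fraction identity $\tfrac1{u-z}=\tfrac{u}{z(u-z)}-\tfrac1z$ combined with $\oint_{\mathcal C}q_n(z^{-1})\tfrac{w(z)\,dz}{2iz}=0$ --- equal $u^3[\partial_z\tilde p_n(z)]_{z=u}\,I_{22}(u)$ and $-u^2\hat q_n'(u)\,I_{12}(u)$, with $I_{12},I_{22}$ as in \eqref{eq:I12I22}. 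Finally the algebraic identity $u^2\hat q_n'(u)=nu^{n+1}q_n(u^{-1})+u^{n+2}[\partial_zq_n(z^{-1})]_{z=u}$ rewrites $-u^2\hat q_n'(u)\,I_{12}(u)$ as the first and last terms of \eqref{diffid}, which is the claimed identity.

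The step I expect to be the main obstacle is the second one: rigorously establishing the differential identity for $\log T_n$ (justifying the trace formula for the finite Toeplitz determinant, the expansion of $T_n^{-1}$ in biorthogonal polynomials, and differentiation under the integral sign), and --- above all --- keeping track of the normalization constants $\chi_n$, $\hat\chi_n$ and the relation \eqref{norm-rel} so that the Christoffel--Darboux formula carries exactly the normalization used above. One must also keep in mind throughout that $u=x^{-2}>1$ lies outside $\mathcal C$ (which avoids $[1/x^2,\infty)$), so that every kernel $\tfrac1{u-z}$ is holomorphic on and inside $\mathcal C$ and $I_{12}(u),I_{22}(u)$ are ordinary contour integrals.
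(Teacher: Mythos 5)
Your proposal is correct and follows essentially the same route as the paper's proof in Appendix \ref{app:diffid}: reduce $\frac{d}{dx}\log R_\gamma$ to $\sum_{j=0}^{n-1}\oint\tilde p_j(z)q_j(z^{-1})[\partial_x w(z)]\frac{dz}{2iz}$ (the paper does this via the identities for $\partial_x\chi_j/\chi_j$ and $\partial_x\hat\chi_j/\hat\chi_j$ rather than a trace formula, but these are equivalent), then apply the bi-orthogonal Christoffel--Darboux formula and evaluate the resulting integrals with the same partial-fraction and polynomial-division orthogonality tricks. The only cosmetic difference is the order of operations (you split off $-u+\frac{u^2}{u-z}$ before applying Christoffel--Darboux, the paper after), and your confluent kernel $z^{1-n}(\hat q_n\tilde p_n'-\tilde p_n\hat q_n')$ agrees exactly with the paper's CD expression, so all terms of \eqref{diffid} are accounted for.
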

\begin{proof}
See Appendix \ref{app:diffid}.
\end{proof}
We mention that a similar formula, expressing the partition function in terms of the solution of an associated RHP, was recently obtained in the case of the Ginibre ensemble with point insertion \cite{byun2024free}. 

\section{Steepest descent analysis of the RHP in the strong regime}

\subsection{A suitable $g$ function and first transformation}

Let $n$ and $N=N_n$ be positive integers such that $n<N$. In this section we consider the strong regime, where $n\sim \tilde\mu N$ for some $\tilde\mu\in(0,1)$. We define 
\begin{align*}
c &:= \frac{\alpha}{n}=\frac{N-n}{n} \to \tilde\mu^{-1}-1, & n\to\infty.
\end{align*}
Defining a function (the potential)
\begin{align*}
V(z) &= \log(z) - c \log(1-x^2z), 
\end{align*}
where we give $z\mapsto \log z$ cut $(-\infty,0]$ and $z\mapsto \log(1-x^2z)$ cut $[x^{-2},\infty)$ (being positive for large positive $z$ and large negative $z$ respectively),
we may rewrite our weight \eqref{eq:wk} as
\begin{align*}
w_n(z) &= h_\gamma(z) (1-x^2z)^\frac{\gamma}{2} e^{-n V(z)},
\end{align*}
where $h_\gamma$ is the analytic function with cut $[0,1]$ defined as
\begin{align*}
h_\gamma(z) = \left(\frac{z-1}{z}\right)^{\frac{\gamma}{2}} = \exp\left(\frac\gamma 2(\log(1-1/z))\right).
\end{align*}
Note that $e^{-n V}$ has cut $[x^{-2},\infty)$ (the choice of cut $[0,\infty)$ for $\log z$ will serve a purpose later on though), while the weight $w_n$ has cut $[0,1]\cup[x^{-2},\infty)$. 
Both the weight $w_n$ and $V$ are well-defined on the curve $\mathcal C$, which we shall deform. Our first task is to find a $g$-function to normalize our RHP.
We use $\Gamma_r$ to denote a deformation of the original contour of integration $\mathcal{C}$, depending on a parameter $r\in[1,x^{-2})$ (that we keep free for now) and such that it encircles the interval $[0,1)$ and crosses the real axis at this point $r$. Eventually, we will fix the definition of $\Gamma_r$ (given $r$). 
In order to be useful for the normalization of the RHP, it is well-known that $g$ should satisfy three desired properties. The $g$ function needs to satisfy the asymptotic condition
\begin{align*}
g(z) &= \log z + \mathcal O(1/z), & z\to\infty,
\end{align*}
and furthermore, it needs to satisfy the boundary condition
\begin{equation}
g_{+}(z)+g_{-}(z) = \ell + V(z), \qquad z \in \Gamma_{r},
\end{equation}
for some constant $\ell$. Lastly, we need $e^{\pm n (g_+(z)-g_-(z))}$ to be oscillatory on $\Gamma_r$ as $n\to\infty$. We can explicitly construct a function $g$ that satisfies the first two requirements as
\begin{equation}\label{eq:defg}
g(z) = \begin{cases} -c\log(1-x^{2}z) + \ell & z \in \mathrm{Int}(\Gamma_r)\\
\log(z) & z \in \mathrm{Ext}(\Gamma_r)\setminus (-\infty,0].
\end{cases}
\end{equation}
As before, $z\mapsto \log z$ is taken to have a cut on the negative real line (although the specific choice is irrelevant). We have some freedom left in both $\ell$ and the contour $\Gamma_r$, which we will use to obtain the oscillatory behavior. We now simply define $\Gamma_r$ to be the curve passing through $r$ and encircling $[0,1]$ such that $\re(g_+(z)-g_-(z))=0$ on $\Gamma_r$, which means that
\begin{align} \label{eq:constRealGammar}
|1-x^2z|^c|z| = e^{\re \ell}. 
\end{align}
If we take 
\begin{align} \label{eq:defell}
\ell = \log r + c \log(1-x^2r),
\end{align}
then $\Gamma_r$ wil indeed pass through $r$. 
Now, with $\ell$ as in \eqref{eq:defell}, we define the function $\phi_r$ and $\phi$ via
\begin{equation}\label{eq:phir}
\phi_{r}(z)=\phi(z)-\ell = \log z+c\log(1-x^2 z)-\ell,
\end{equation}
with branch cut $(-\infty,0]\cup[x^{-2},\infty)$.
Then we may alternatively write the condition \eqref{eq:constRealGammar} as $\re \phi_r(z)=0$ or equivalently as the level set
\begin{align*}
\re \phi(z) = \phi(r).
\end{align*}
It is easy to verify that $\phi$ (and thus $\phi_r$) has a stationary point at
\begin{equation}\label{eq:z0}
z_0=\frac{1}{x^2(1+c)}.
\end{equation}
We prove some other useful properties of $\phi_r$ in Appendix \ref{Apx:phiAnalyticIso}.
 Let us consider
\begin{align} \label{eq:levelCurveTildeGamar}
\tilde\Gamma_r = \left\{z\in\mathbb C : |z|=r\left|\frac{1-x^2r}{1-x^2z}\right|^c\right\}
= \left\{z\in\mathbb C : \re\phi(z) = \phi(r)\right\}.
\end{align}
(Here $\re \phi$ is extended to $(-\infty,0)\cup (x^{-2},\infty)$ by continuity.) Our interest is in the case $c>0$, and then we have the following lemma. It turns out that $\tilde\Gamma_r$ consists of two Jordan curves, one of which will be used for the definition of $\Gamma_r$. We state the following result for $0<r\leq z_0$, although we shall only need it for $1\leq r\leq z_0$ in our RH analysis. 

\begin{lemma} \label{eq:GammarJordan}
Let $c>0$ and $0<r\leq z_0$. Then $\tilde\Gamma_r$ as defined in \eqref{eq:levelCurveTildeGamar} consists of two closed Jordan curves, which both intersect the real line in exactly two points. They can only intersect eachother in the case $r=z_0$, in which case $z_0$ is their only point of intersection. One of these Jordan curves, $\Gamma_r$, encloses the interval $[0,r)$ and passes through $r$.\\
Furthermore, $\Gamma_{r'}\subset \operatorname{Int }\Gamma_{r}$ when $0<r'<r$.
\end{lemma}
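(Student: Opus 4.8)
The plan is to analyze the level set $\tilde\Gamma_r = \{z : \re\phi(z) = \phi(r)\}$ directly, exploiting the fact that $\re\phi$ is a harmonic function on $\mathbb C \setminus ((-\infty,0] \cup [x^{-2},\infty))$ whose critical points we already know. First I would recall from \eqref{eq:z0} that $\phi'(z) = \frac{1}{z} - \frac{cx^2}{1-x^2z}$ has a single zero at $z_0 = \frac{1}{x^2(1+c)}$, and note that on the real axis the function $t \mapsto \re\phi(t) = \log t + c\log(1-x^2 t)$ (for $0 < t < x^{-2}$) is strictly concave, increasing on $(0,z_0)$ and decreasing on $(z_0, x^{-2})$, with $\re\phi(t) \to -\infty$ as $t \to 0^+$ and (since $c>0$) as $t \to x^{-2}{}^-$. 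Hence for each value $\phi(r)$ with $r \le z_0$, the equation $\re\phi(t) = \phi(r)$ has exactly two real solutions: $t = r$ itself and a second solution $t = r^* \in [z_0, x^{-2})$ (with $r^* = z_0$ iff $r = z_0$). These are precisely the two real points each Jordan curve must pass through; the key structural claim is that $\tilde\Gamma_r$ decomposes into two disjoint loops, one around $[0,r)$ through $r$, the other around $(r^*, x^{-2})$-type region through $r^*$.

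The main tool for the global topology is the argument principle / maximum principle for harmonic functions together with a count of how the level set meets the boundary of a large disc. I would argue as follows. The function $u := \re\phi - \phi(r)$ is harmonic on the slit domain and $u \to +\infty$ as $|z| \to \infty$ (since $\re\phi(z) \sim (1+c)\log|z|$, using $c > -1$) and $u \to -\infty$ approaching the slits $(-\infty,0]$ and $[x^{-2},\infty)$ from either side. A harmonic function with no critical points other than $z_0$ has level sets that are smooth $1$-manifolds away from $z_0$; at $z_0$, if $\phi(r) \ne \phi(z_0)$ the level set $\{u=0\}$ misses $z_0$ entirely and is therefore a smooth closed $1$-manifold, i.e. a disjoint union of Jordan curves. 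To see there are exactly two components I would use that each bounded component of $\{u < 0\}$ must, by the maximum principle, either contain a slit or contain a critical point; the two ends $[0,\,\cdot\,]$ near $0$ and $[\,\cdot\,, x^{-2}]$ near $x^{-2}$ of the negative set are separated (one can check $\re\phi < \phi(r)$ strictly on a connecting path only if it runs near a slit, but the two slits $(-\infty,0]$ and $[x^{-2},\infty)$ lie in different components of $\{u<0\}$ because $u > 0$ at, say, the point $z_0$ when $r < z_0$, and $z_0$ separates them on the real line). This forces exactly two boundary curves, one enclosing $[0,r)$ and passing through $r$, the other enclosing the neighbourhood of $x^{-2}$ and passing through $r^*$; the case $r = z_0$ is the degenerate one where $z_0$ is a saddle of $u$ and the two curves touch there. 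I would verify each curve meets $\mathbb R$ in exactly two points using the concavity of $\re\phi|_{\mathbb R}$ noted above, which allows no third intersection.

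For the nesting statement $\Gamma_{r'} \subset \operatorname{Int}\Gamma_r$ when $0 < r' < r$, I would argue by monotonicity: since $r, r' \le z_0$ and $\re\phi|_{(0,z_0)}$ is strictly increasing, $\phi(r') < \phi(r)$, so $\Gamma_{r'} \subset \{u < 0\}$ where $u = \re\phi - \phi(r)$; but the component of $\{\re\phi < \phi(r)\}$ containing $[0,r)$ is exactly $\operatorname{Int}\Gamma_r$ minus the part of the slit $[0,r')$... more precisely, $\operatorname{Int}\Gamma_r$ is the bounded component of $\{z : \re\phi(z) < \phi(r)\}$ whose closure contains $0$, and since $\Gamma_{r'}$ is a Jordan curve on which $\re\phi = \phi(r') < \phi(r)$ enclosing $[0,r')$, it lies in that same component.

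The step I expect to be the main obstacle is establishing \emph{rigorously} that the level set has exactly two components rather than, a priori, more — this requires ruling out spurious extra loops of $\{u=0\}$ not surrounding a slit, which by the maximum principle would force an interior critical point of $u$; since the only critical point is $z_0$ and it lies on the real axis (and for $r<z_0$ we have $u(z_0)>0$, so $z_0 \notin \{u=0\}$), such loops cannot exist. Making the connectivity argument for $\{u<0\}$ airtight — that the two slits are genuinely in different components — is the delicate point; I would handle it by exhibiting an explicit arc through $z_0$ (say a vertical segment near $z_0$, or the positive real axis itself) on which $u > 0$, separating a neighbourhood of $[0,r]$ from a neighbourhood of $[r^*, x^{-2})$, combined with the behaviour at the slits. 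The detailed estimates showing $\re\phi$ decreases to $-\infty$ at the finite slit endpoints and grows at infinity are routine and I would relegate them (together with the precise properties of $\phi_r$) to the appendix referenced as Appendix \ref{Apx:phiAnalyticIso}.
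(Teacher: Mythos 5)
Your approach — treating $\tilde\Gamma_r$ as a level set of the harmonic function $u=\re\phi-\phi(r)$, counting its components via the maximum principle and the uniqueness of the saddle at $z_0$ — is genuinely different from the paper's. The paper instead proves (in Appendix~\ref{Apx:phiAnalyticIso}) that $\phi$ is an analytic isomorphism from two explicit regions of $\mathbb C$ onto horizontal strips, and reads off the Jordan curves as the preimages of vertical segments $\{\re\zeta=\phi(r)\}$. Your route is in principle sound, and is arguably more elementary; the paper's conformal-map description buys additional information that is reused elsewhere in the argument (e.g.\ in Lemma~\ref{lem:ointsnue-ksswExpansion}).

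However, there is a concrete error in a central premise: you assert that $u\to-\infty$ ``approaching the slits $(-\infty,0]$ and $[x^{-2},\infty)$ from either side.'' That is false. The function $\re\phi(z)=\log|z|+c\log|1-x^2z|$ is continuous across both slits (the branch jumps live entirely in $\operatorname{Im}\phi$), and it tends to $-\infty$ only at the two points $z=0$ and $z=x^{-2}$, while it tends to $+\infty$ as $|z|\to\infty$ \emph{including along the slits}. This wrong picture propagates: you count ``exactly two'' real solutions of $\re\phi(t)=\phi(r)$ by restricting to $(0,x^{-2})$, but extending $\re\phi$ continuously to all of $\mathbb R$ (as the paper does right after \eqref{eq:levelCurveTildeGamar}) one gets exactly \emph{four}: one $r_-<0$, the point $r\in(0,z_0]$, one $r_+\in[z_0,x^{-2})$, and one $r_{++}>x^{-2}$. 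Thus $\Gamma_r$ crosses $\mathbb R$ at $r_-$ and $r$ (not at two points of $(0,x^{-2})$), and the second Jordan curve crosses at $r_+$ and $r_{++}$; it \emph{encircles} $x^{-2}$ rather than lying to one side of it. Your connectivity argument (``the two slits lie in different components of $\{u<0\}$'') also collapses, because the slits themselves, away from $0$ and $x^{-2}$, are not in $\{u<0\}$ at all. The fix is to replace ``slit'' by ``logarithmic singularity of $\re\phi$'' throughout: by the minimum principle, every bounded component of $\{u<0\}$ must contain one of the two singular points $0$ or $x^{-2}$ (otherwise $u$ would be harmonic there, attain its infimum in the interior, and contradict $u\equiv 0$ on the boundary), and the two singularities lie in distinct components because $u(z_0)\ge 0$ and $z_0$ separates them on $\mathbb R$. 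With that repair your count of two components, the Jordan-curve boundary, and the nesting argument (which is essentially unaffected) all go through, but as written the proof rests on an incorrect description of the level-set geometry near the branch cuts.
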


\begin{figure}
\centerline{\includegraphics[scale=1]{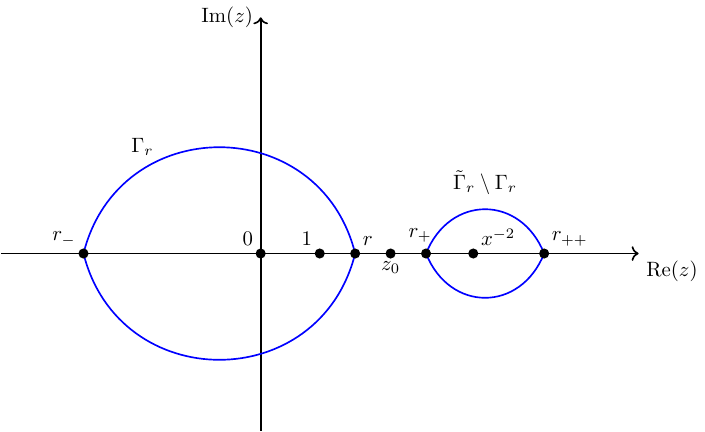}}
\caption{Graphic representation of the level curve $\tilde \Gamma_r$ as defined in \eqref{eq:levelCurveTildeGamar}, here for $r\neq z_0=\frac1{(c+1)x^2}$.}
\label{fig:levelCurveGammar}
\end{figure}

\begin{proof}
The case $c=1$ can be solved explicitly, and the case $c>1$ can be transformed $c\to 1/c$. Thus it suffices to prove our result for $c\in(0,1)$. The map $t\mapsto |\phi(t)|= |t| |1-x^2 t|^{c}$ is a bijection both from $(0,z_0)$ and $(z_0,x^{-2})$ onto $(0,\phi(z_0))$, and a bijection both from $(-\infty,0)$ and $(x^{-2},\infty)$ onto $(0,\infty)$. Therefore, given $r\in(0, z_0)$, there are precisely four real points $r_-, r, r_+$ and $r_{++}$ on $\tilde\Gamma_r$. We have $r_-<0$ and $r_{++}>r_+>r$ (see Figure \ref{fig:levelCurveGammar}). In the case $r=z_0$, we have $r=r_+$ and there are only three real points on $\tilde\Gamma_r$. It is a direct consequence of Lemma \ref{lem:phiAnalyticIso} that there exist sets $\mathcal S_1, \mathcal S_2, \mathcal S_3$ and the complex conjugates $\mathcal S_2^*, \mathcal S_3^*$ such that $\phi$ is a bijection from $\mathcal S_1\setminus (-\infty,0]$ to $|\operatorname{Im} \zeta| \in (\pi c,\pi)$, a bijection from both $\mathcal S_2$ and $\mathcal S_3^*$ to $\operatorname{Im} \zeta\in (0,\pi c)$, and a bijection from both $\mathcal S_2^*$ and $\mathcal S_3$ to $\operatorname{Im} \zeta\in (-\pi c,\pi)$, see Figure \ref{fig:levelCurveGammar2} for a graphic representation. Since $\phi$ is analytic with non-zero derivative in every point of $\partial S_1\setminus\{0\}$ we can patch regions together and conclude that $\phi$ is an analytic isomorphism from $(\overline{\mathcal S_1}\cup \mathcal S_2 \cup \mathcal S_2^*)\setminus (-\infty,0]$ to $|\operatorname{Im} \zeta|<\pi$. Similarly, $\phi$ is an analytic isomorphism from $\mathcal S_3\cup \mathcal S_3^*\cup (z_0,x^{-2})$ to $|\operatorname{Im} \zeta|<\pi c$.

Now consider the vertical line segment $L_r$ consisting of $\zeta$ with $\re\zeta=\phi(r)$ and $\operatorname{Im} \zeta\in(-\pi,\pi)$. By the preceding, $\phi^{-1}(L_r)\cap ((\overline{\mathcal S_1}\cup \mathcal S_2 \cup \mathcal S_2^*)\setminus (-\infty,0])$ is a smooth non-self-intersecting curve which starts and ends in $r_-$, and it necessarily contains $r$. On the other hand, $\phi^{-1}(L_r)\cap (\mathcal S_3\cup \mathcal S_3^*\cup (z_0,x^{-2}))$ is a smooth non-self-intersecting curve in $\mathcal S_3\cup \mathcal S_3^*\cup (z_0,x^{-2})$, which necessarily intersects the real line in $r_+$ and begins and ends in $r_{++}$. Adding $r_-$ and $r_{++}$ to these curves makes them closed and thus a Jordan curve. The last part of the lemma follows from the fact that $\phi$ is an analytic isomorphism from $(\overline{\mathcal S_1}\cup \mathcal S_2 \cup \mathcal S_2^*)\setminus (-\infty,0]$ to $|\operatorname{Im} \zeta|<\pi$.
\end{proof}

\begin{figure}
\centerline{\includegraphics[scale=1]{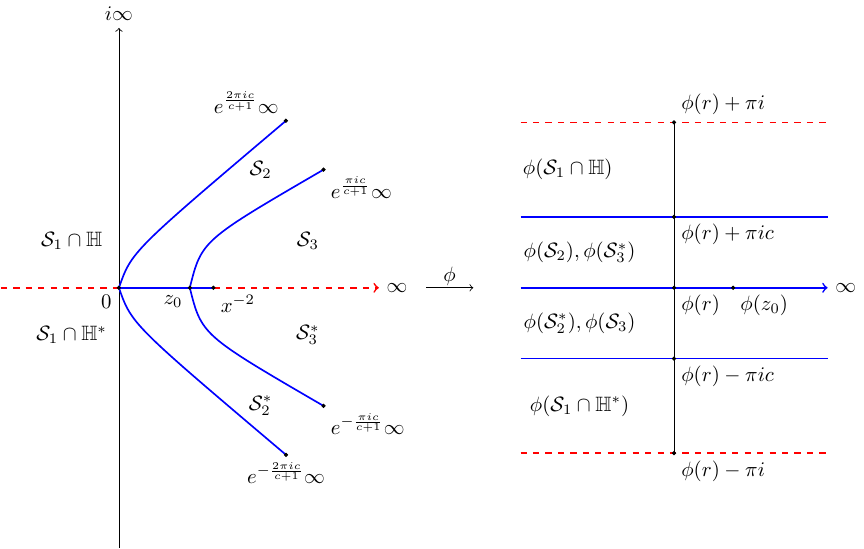}}
\caption{The image under $\zeta=\phi(z)$ of the sets $\mathcal S_1\setminus (-\infty,0), \mathcal S_2, \mathcal S_2^*, \mathcal S_3$ and $\mathcal S_3^*$ from Lemma \ref{lem:phiAnalyticIso}, which consists of horizontal strips. The inverse image of the vertical line segment with $\re\zeta=\phi(r)$ is given by the two Jordan curves in Figure \ref{fig:levelCurveGammar} (not shown in the current figure). The interval $(-\infty,0]$ corresponds to the lines $\operatorname{Im} \zeta=\pm \pi i$, while the interval $[x^{-2},\infty)$ corresponds to $\operatorname{Im} \zeta=\pm \pi i c$ (but so does the curve $\partial S_1$). The intervals $[0,z_0]$ and $[z_0,x^{-2}]$ correspond to $\zeta\in (-\infty,\phi(z_0)]$. The curve formed by $\partial S_2\cap \partial S_3$ and its complex conjugate corresponds to $\zeta\in [\phi(z_0),\infty)$.}
\label{fig:levelCurveGammar2}
\end{figure}

Finally then, given $r\in(0,z_0]$, the definition of the $g$-function is fixed, and given by \eqref{eq:defg} with the choice $\Gamma_r$ from Lemma \ref{eq:GammarJordan}. 

As the reader may verify using the residue theorem, the $g$-function can alternatively be expressed as
\begin{equation}\label{eq:defg}
g(z) = \oint_{\Gamma_r}\log(z-w)d\nu(w)
\end{equation}
where we give $\Gamma_r$ positive orientation, and $d\nu$ is the probability measure defined through
\begin{align*}
d\nu(z)
=
\frac{1}{2\pi i}\left(\frac{1}{z}-\frac{cx^2}{1-x^2 z}\right) dz, \qquad z\in \Gamma_r.
\end{align*}
Here the logarithm is defined with a branch cut on $[0,\infty)$.
A useful observation is that $\phi_r$ satisfies the following relation.
\begin{equation}
\phi_r(z)
=
\begin{cases}
-2g(z)+V(z)+\ell,&\qquad z\in\textrm{Int}(\Gamma_r)\setminus[0,r),\\
2g(z)-V(z)-\ell,&\qquad z\in\textrm{Ext}(\Gamma_r)\setminus(r,\infty),
\end{cases}
\end{equation}
Furthermore, for $z\in\Gamma_r$, we have
\begin{equation}
\phi_r(z)=-g_+(z)+g_-(z).
\end{equation}
%
A local expansion around the stationary point $z_0$ gives 
\begin{equation}\label{eq:phiatz0}
\phi_r(z)=\phi_r(z_0)-\frac{(1+c)^3 x^4}{c}(z-z_0)^2+\mathcal{O}((z-z_0)^3)
\end{equation}
as $z\to z_0$. It follows that (locally) we have two directions of steepest ascent with angle $\pm\frac{\pi}{2}$ and two directions of steepest descent with angle $0,\pi$ from the stationary point. The fact that $z_0<x^{-2}$ follows directly from the fact that $c>0$. If we want $z_0>1$, then we need to impose that $x^2(1+c)<1$. In the strong regime, we have $\frac{n}{N}\to\tilde\mu$ as $n\to\infty$, so $c\to\tilde\mu^{-1}-1$ as $n\to\infty$, therefore the condition holds for $0<x<\sqrt{\tilde\mu}$, that is, for $x$ in the bulk of the spectrum. As $x\uparrow\sqrt{\tilde\mu}$, we have $z_0\downarrow 1$.

%


Now that we have defined the $g$ function, we can define the normalization transformation. For notational convenience, it will turn out to be nice to modify the $g$-function and $\phi$, we define
\begin{align*}
\tilde g(z) &= \begin{cases} -(c+\frac\gamma{2n})\log(1-x^{2}z) + \tilde\ell & z \in \mathrm{Int}(\Gamma_r)\\
\log(z) & z \in \mathrm{Ext}(\Gamma_r)\setminus (-\infty,0],
\end{cases}\\
\tilde\phi_r(z) &= \tilde\phi(z) - \tilde\ell = \log z + (c+\frac\gamma{2n}) \log(1-x^2z) - \tilde\ell,\\
\tilde\ell &= \ell+\frac{\gamma}{2n}\log(1-x^2r).
\end{align*}
(In other words, $c$ is replaced by $c+\frac\gamma{2n}$.) We define $T$ directly from $Y$:
\begin{equation}
T(z)=e^{-\frac{n\tilde\ell}{2}\sigma_3}Y(z)e^{-n\left(\tilde g(z)-\frac{\tilde\ell}{2}\right)\sigma_3}.
\end{equation}
Notice that $T$ is well-defined on $[r,\infty)$, since the cut of $\log z$ disappears in the composition $e^{\pm n \tilde g(z)}$. In what follows we give $\Gamma_r$ positive orientation, which means that the $+$ side is in its interior while the $-$ sign is in its exterior. 
$T(z)$  satisfies the following Riemann--Hilbert problem, where $r\in[1,z_0]$:
\begin{itemize}
\item $T$ is analytic on $\mathbb{C}\setminus \Gamma_r$.
\item On $\Gamma_r$, the matrix $T(z)$ has jumps
\begin{equation}
T_{+}(z) = T_{-}(z)
\begin{pmatrix} e^{n\tilde\phi_r(z)} & h_\gamma(z)\\ 0 & e^{-n\tilde\phi_r(z)} \end{pmatrix}, \,\, z\in\Gamma_r.
\end{equation}
\item As $z \to \infty$ we have the asymptotics
\begin{equation}
T(z) = I+\mathcal O\left(\frac{1}{z}\right).
\end{equation}
\item $T(z)$ is bounded near $z=1$ when $\Gamma_r$ does not pass through $1$.\\
In the case that $r=1$ we impose
\begin{align*}
T(z) &= \begin{pmatrix}
\mathcal O(1) & \mathcal O((z-1)^\frac{\gamma}{2})\\
\mathcal O(1) & \mathcal O((z-1)^\frac{\gamma}{2})
\end{pmatrix}
& z\to 1.
\end{align*}
\end{itemize}
\subsection{Opening of lenses}
Here we assume that $1 \leq r_{1} \leq r \leq r_{2}<x^{-2}$, where $r\in[1,z_0]$, and we open lenses $\Sigma_{r_1}$ and $\Sigma_{r_2}$ in the interior and exterior of $\Gamma_r$ respectively. We pick $\Sigma_{r_2}$ in such a way that that it does not intersect with the interior of the component $\tilde\Gamma_r\setminus \Gamma_r$ (see Lemma \ref{eq:GammarJordan}). $\Sigma_{r_1}$ is a positively oriented Jordan curve passing through $r_1$ which lies in the interior of $\Gamma_r$, except possibly for the point $r_1$. $\Sigma_{r_2}$ is a postively oriented Jordan curve passing through $r_2$ which lies in the exterior of $\Gamma_r$, except possibly for the point $r_2$. See Figure \ref{fig:GammaS_Ken} below. 
The jump matrix can be factorized:
\begin{eqnarray*}
\pmtwo{e^{n\tilde\phi_r(z)}}{h_\gamma(z)}{0}{e^{-n\tilde\phi_r(z)}} =\hspace{1.5in}
\ \ \ \ \ \  && \\
\pmtwo{1}{0}
{e^{-n\tilde\phi_r(z)}h_{-\gamma}(z)}{1} 
\pmtwo{0}{h_\gamma(z)}
{-h_{-\gamma}(z)}{0}
\pmtwo{1}{0}{e^{n\tilde\phi_r(z)}h_{-\gamma}(z)}{1}.
\end{eqnarray*}
Now define $\Omega_1$ as the open region bounded by $\Sigma_{r_1}$ and $\Gamma_r$, $\Omega_2$ as the open region bounded by $\Gamma_r$ and $\Sigma_{r_2}$, and by $\Sigma_\infty$ we define the remaining region, as in Figure \ref{fig:GammaS_Ken}. Now we define
\begin{equation}\label{eq:6.7}
S(z)
=
T(z)
\begin{cases}
I, & \qquad z\in\Omega_{\infty},\\
\pmtwo{1}{0}{e^{n\tilde\phi_r(z)}h_{-\gamma}(z)}{1}^{-1}, & \qquad z\in\Omega_1,\\
\pmtwo{1}{0}
{e^{-n\tilde\phi_r(z)} h_{-\gamma}(z)}{1} , & \qquad z\in\Omega_2.
\end{cases}
\end{equation}
The quantity $S$ satisfies the Riemann-Hilbert problem:
\begin{itemize}
\item $S$ is analytic on $\mathbb{C}\setminus \Sigma_S$, where $\Sigma_S=\Sigma_{1}\cup\Gamma_{r}\cup\Sigma_{2}$, as shown in Figure \ref{fig:GammaS_Ken} (note that there is no jump across $(0,1)$).
\begin{figure}
\centerline{\includegraphics[scale=1]{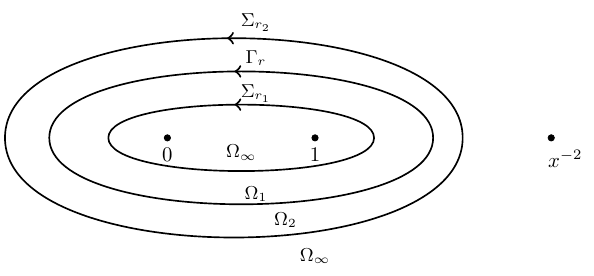}}
\caption{Contour $\Gamma_S$, for the case of strict inequalities $1<r_1<r<r_2<x^{-2}$.}
\label{fig:GammaS_Ken}
\end{figure}
\item On $\Sigma_S$, the matrix $S(z)$ has jumps
\begin{equation}
S_{+}(z) = S_{-}(z)
\begin{cases} 
\pmtwo{1}{0}
{e^{n\tilde\phi_r(z)}h_{-\gamma}(z)}{1} , & \,\, z\in\Sigma_{1},\\
\pmtwo{0}{h_\gamma(z)}
{-h_{-\gamma}(z)}{0}, & \,\, z\in\Gamma_{r},\\
\pmtwo{1}{0}{e^{-n\tilde\phi_r(z)}h_{-\gamma}(z)}{1}, & \,\, z\in\Sigma_{2}.
\end{cases}
\end{equation}
\item As $z \to \infty$ we have the asymptotics
\begin{equation}
S(z) = I+\mathcal O\left(\frac{1}{z}\right).
\end{equation}
\item $S(z)$ is bounded near $z=1$ when $r_1>1$ (and necessarily $r>1$).\\
In the case that $r_1=1$ we impose
\begin{align*}
S(z) &= \begin{pmatrix}
\mathcal O(1) & \mathcal O((z-1)^\frac{\gamma}{2})\\
\mathcal O(1) & \mathcal O((z-1)^\frac{\gamma}{2})
\end{pmatrix}
& z\to 1.
\end{align*}
\end{itemize}
Note that $\re\phi_r(z)=0$ is equivalent to $z\in\Gamma_r$, due to Lemma \ref{eq:GammarJordan} and its proof. Furthermore, in the case of strict inequalities $r_1<r<r_2$ we may pick $\Sigma_{r_1}=\Gamma_{r_1}$ and $\Sigma_{r_2}=\Gamma_{r_2}$, and Lemma \ref{eq:GammarJordan} implies that $\re \phi_r<0$ on $\Sigma_{r_1}$, and $\re\phi_r>0$ on $\Sigma_{r_2}$ when $r_2> r$ is close enough to $r$ (see the proof of Lemma \ref{eq:GammarJordan}, $r_2\in (r,r_+)$). Therefore, the jump matrices corresponding to $\Gamma_{r_1}$ and $\Gamma_{r_2}$ tend to the unit matrix as $n\to\infty$, in the case that $r_1<r<r_2$ and $r_2>r$ is close enough to $r$ (note that $e^{\pm n \phi(z)}$ and $e^{\pm n \tilde\phi(z)}$ differ only be a factor of order $1$). In the case that these equalities are not strict, a local parametrix problem may have to be considered, although we shall present an alternative approach below. 

\subsection{Global parametrix}
The global parametrix is the solution of the Riemann-Hilbert problem for $P^{(\infty)}(z)$, where the jumps of $S$ on $\Sigma_{r_1}$ and $\Sigma_{r_2}$ are ignored:
\begin{itemize}
\item $P^{(\infty)}$ is analytic on $\mathbb{C}\setminus \Gamma_{r}$.
\item On $\Gamma_{r}$, the matrix $P^{(\infty)}(z)$ has jump
\begin{equation}
P^{(\infty)}_{+}(z) = P^{(\infty)}_{-}(z)\pmtwo{0}{h_\gamma(z)}
{-h_{-\gamma}(z)}{0}.
\end{equation}
\item As $z \to \infty$ we have the asymptotics
\begin{equation}
P^{(\infty)}(z) = I+\mathcal O\left(\frac{1}{z}\right).
\end{equation}
\end{itemize}
The solution to this Riemann-Hilbert problem is explicit:
\begin{eqnarray*}
P^{(\infty)}(z) =  
\begin{cases} 
\pmtwo{0}{1}{-1}{0}, & \,\, z\in \mbox{Int}(\Gamma_{r}),\\
\pmtwo{h_{-\gamma}(z)}{0}{0}
{h_\gamma(z)}, & \,\, z\in\mbox{Ext}(\Gamma_{r}).
\end{cases}
\end{eqnarray*}

\subsection{Final transformation}

Again, let $1\leq r_1\leq r\leq r_2<x^{-2}$, where $r\in[1,z_0]$. Now let $\Sigma_{R}=\Sigma_{r_1}\cup \Sigma_{r_2}$. We next define $R$:
\begin{eqnarray}
R(z) = S(z) \left( P^{(\infty)}(z) \right)^{-1},
\end{eqnarray}
so that $R(z)$ satisfies the following Riemann--Hilbert problem:
\begin{itemize}
\item $R$ is analytic on $\mathbb{C}\setminus \Sigma_{R}$.
\item On $\Sigma_{R}$, the matrix $R(z)$ has jumps
\begin{equation}
R_{+}(z) = R_{-}(z)
\begin{cases}
\pmtwo{1}{-e^{n\tilde\phi_r(z)}h_{-\gamma}(z)}{0}{1}, & \,\, z\in\Sigma_{r_1},\\
\pmtwo{1}{0}{e^{-n\tilde\phi_r(z)}h_\gamma(z)}{1}, & \,\, z\in\Sigma_{r_2}.
\end{cases}
\end{equation}
\item As $z \to \infty$ we have the asymptotics
\begin{equation}
R(z) = I+\mathcal O\left(\frac{1}{z}\right).
\end{equation}
\item $R(z)$ is bounded near $z=1$ when $r_1>1$.\\
In the case that $r_1=1$ we impose
\begin{align*}
R(z) &= \begin{pmatrix}
\mathcal O((z-1)^\frac{\gamma}{2}) & \mathcal O(1)\\
\mathcal O((z-1)^\frac{\gamma}{2}) & \mathcal O(1)
\end{pmatrix}
& z\to 1.
\end{align*}
\end{itemize}

The case $r_1=1$ presents several difficulties. Namely, the jump matrix is not in $L^2(\Sigma_{r_1})$ when $\gamma>0$, and $R$ is unbounded near $z=1$ when $\gamma\in(-2,0)$. For this reason we exclude $r_1=1$ in what follows (but it will be important in the double scaling regime, which we postpone to a future work). 
In the case that $1< r_1<r<r_2<x^{-2}$, $R$ is in the small norm setting, provided that $r_2$ is close enough to $r$ (see the proof of Lemma \ref{eq:GammarJordan}). Without loss of generality, we take $\Sigma_1=\Gamma_{r_1}$ and $\Sigma_2=\Gamma_{r_2}$. The fact that $R$ is in the small norm setting can be used to show that
\begin{multline} \label{eq:Restimate}
R(z) = \begin{pmatrix} 1 & - \frac1{2\pi i}\oint_{\Gamma_{r_1}} e^{n\tilde\phi_r(s)} h_{-\gamma}(s) \frac{ds}{s-z}\\ 
\frac1{2\pi i}\oint_{\Gamma_{r_2}} e^{-n\tilde\phi_r(s)} h_\gamma(s) \frac{ds}{s-z} & 1 \end{pmatrix}\\
+e^{-n(\phi_r(r_2)-\phi_r(r_1))} \mathcal O\begin{pmatrix} 1 & e^{n\phi_r(r_1)}\\ e^{-n\phi_r(r_2)} & 1\end{pmatrix}
\end{multline}
uniformly on $\mathbb C\setminus\Sigma_{R}$ as $n\to\infty$, we elaborate on this in the subsequent section. Here the $\mathcal O$-term is to be read entrywise, and depends implicitly on $x$ and $\gamma$. The difference $\phi_r(r_2)-\phi_r(r_1)$ is independent of $r$ and positive when $r_2>r$ is close enough to $r$. The $\mathcal O$-term is thus exponentially small in this setting. 

\subsection{Small norm theory} \label{sec:smallNorm}
We now consider the setting $1<r_1<r\leq r_2\leq z_0$. In principle, we can consider $r_2>z_0$, but the aforementioned cases will suffice for our purposes. We get the sharpest bounds for $e^{n(\phi_r(r_1)-\phi_r(r_2))}$ when $r_1\downarrow 1$ and $r_2= z_0$. Let us derive the behavior of $R$ explicitly, under the assumption $1<r_1<r<r_2\leq z_0$, where $r_2$ is assumed to be close enough to $r$ (thus, excluding $r=r_2$ momentarily). The associated Riemann-Hilbert problem for $R(z)$ is in the small norm setting, since
\begin{align*}
J_R - I= \begin{cases}
\begin{pmatrix}
0 & - e^{n\tilde\phi_r(z)} h_{-\gamma}(z)\\
0 & 0
\end{pmatrix} & z\in \Gamma_{r_1},\\
\begin{pmatrix}
0 & 0\\
e^{-n\tilde\phi_r(z)} h_\gamma(z) & 0
\end{pmatrix} & z\in \Gamma_{r_2}.
\end{cases}
\end{align*}
is locally analytic in a neighborhood of the contour, and satisfies
\begin{eqnarray}
\label{eq:supbound}
\sup_{z\in \Sigma_R} \| J_{R} - I \| \le C e^{n \max(\phi_r(r_{1}) , -\phi_r(r_2))} ,
\end{eqnarray}
for some constant $C>0$ (implicitly depending on $x$ and $\gamma$), where $\| \cdot \|$ means matrix norm.  Note that we have $\phi_r(r_{1}) < 0$ and, since $r_2$ is close enough to $r$, $\phi_r(r_2)>0$, and we conclude that the jump matrix is exponentially close to $I$.  

The Riemann-Hilbert problem for $R$  can be turned into an equivalent system of integral equations by first writing
\begin{eqnarray}
R_{+}-R_{-} = R_{-} \left(J_{R} - I \right) \ , \ z \in \Sigma_R \ ,
\end{eqnarray}
and then using the Cauchy-Plemelj-Sokhotski formula:
\begin{eqnarray}
\label{eq:RRep}
R(z) = I + \frac{1}{2 \pi i} \int_{\Sigma_R}\frac{R_{-}(s) \left( J_{R}(s) - I \right) }{s-z}ds ,
\end{eqnarray}
and taking $-$ boundary values to arrive at
\begin{eqnarray}
R_{-} = I + \left(\frac{1}{2 \pi i} \int_{\Sigma_R}\frac{R_{-}(s) \left( J_{R}(s) - I \right) }{s-z}ds \right)_{-}
\end{eqnarray}
which can be expressed as
\begin{eqnarray}
\left( 1 - C_{J_{R}} \right)(R_{-}) = I,
\end{eqnarray}
\label{eq:inteq}
with the operator $C_{J_{R}}$ defined via
\begin{eqnarray}
&&
C_{J_{R}}(F) = \left(\frac{1}{2 \pi i} \int_{\Sigma_R}\frac{F(s) \left( J_{R}(s) - I \right) }{s-z}ds \right)_{-} \\
&&
=C_{-}\left( \left( J_{R} - I \right) F \right) \ .
\end{eqnarray}
in the last formula above, $C_{-}$ is the usual Cauchy projection operator, 
\begin{eqnarray}
C_{-}(f) = \lim_{\substack{z' \to z \\ z' \in -\mbox{ side of }  \Sigma_R}}\frac{1}{2\pi i}\int_{\Sigma_R} \frac{f(s)}{s-z} ds
\end{eqnarray}
Since $C_{-}$ is bounded on the space $L^{2}(\Sigma_R)$, we learn using \eqref{eq:supbound} that $C_{J_{R}}$ has a norm converging to $0$ as $n\to\infty$: 
\begin{eqnarray}
\vertiii{C_{J_{R}}}_{L^{2}(\Sigma_R)\mapsto L^{2}(\Sigma_R}) \le C e^{n\max(\phi_r(r_{1}),-\phi_r(r_2)) },
\end{eqnarray}
for some constant $C=C_{J_R}$ that depends implicitly on $J_R$ (and thus on $x$ and $\gamma$). 
There is a unique solution $R_{-}$ to \eqref{eq:inteq}, which is given explicitly by a Neumann series
\begin{eqnarray}
R_{-} = \sum_{j=0}^{\infty} \left(C_{J_{R}} \right)^{j}(I) \ , 
\end{eqnarray}
and returning to the representation \eqref{eq:RRep} of $R(z)$, we have
\begin{eqnarray}
R(z) = I + \frac{1}{2 \pi i} \int_{\Sigma_R}\frac{\left(
\sum_{j=0}^{\infty} \left(C_{J_{R}} \right)^{j}(I) (s)
\right) \left( J_{R}(s) - I \right) }{s-z}ds .
\end{eqnarray}
By induction, one may show that
\begin{multline*}
\sum_{j=0}^{\infty} \left(C_{J_{R}} \right)^{j}(I)(s)
= \sum_{j=0}^\infty \begin{pmatrix}
C_{\Sigma_{r_1}} C_{\Sigma_{r_2}})^{j}(1)(s) & 0\\
0 & (C_{\Sigma_{r_2}} C_{\Sigma_{r_1}})^{j}(1)(s)
\end{pmatrix}\\
+ \sum_{j=0}^\infty
\begin{pmatrix}
0 & (C_{\Sigma_{r_1}} C_{\Sigma_{r_2}})^{j}C_{\Sigma_{r_1}}(1)(s)\\
(C_{\Sigma_{r_2}} C_{\Sigma_{r_1}})^{j}C_{\Sigma_{r_2}}(1)(s) & 0
\end{pmatrix},
\end{multline*}
where we define
\begin{align*}
C_{\Sigma_{r_1}}(f)(z) &= -\frac1{2\pi i} \oint_{\Sigma_{r_1}} e^{n\tilde\phi_r(s)} h_{-\gamma}(s) f(s) \frac{ds}{s-z}\\
C_{\Sigma_{r_2}}(f)(z) &= \frac1{2\pi i} \oint_{\Sigma_{r_2}} e^{-n\tilde\phi_r(s)} h_\gamma(s) f(s) \frac{ds}{s-z}.
\end{align*}
Thus we have for $z\in\mathbb C\setminus \Sigma_R$
\begin{multline*}
\frac{1}{2 \pi i} \int_{\Sigma_R}\frac{\left(
\sum_{j=0}^{\infty} \left(C_{J_{R}} \right)^{j}(I) (s)
\right) \left( J_{R}(s) - I \right) }{s-z}ds\\
= \sum_{j=0}^\infty
\begin{pmatrix}
0 & (C_{\Sigma_{r_1}} C_{\Sigma_{r_2}})^{j}C_{\Sigma_{r_1}}(1)(s)\\
(C_{\Sigma_{r_2}} C_{\Sigma_{r_1}})^{j}C_{\Sigma_{r_2}}(1)(s) & 0
\end{pmatrix}\\
+ \sum_{j=1}^\infty \begin{pmatrix}
C_{\Sigma_{r_1}} C_{\Sigma_{r_2}})^{j}(1)(s) & 0\\
0 & (C_{\Sigma_{r_2}} C_{\Sigma_{r_1}})^{j}(1)(s)
\end{pmatrix}
\end{multline*}
and expanding up to order $1$ and bounding the remaining terms, we obtain \eqref{eq:Restimate}, that is
\begin{multline*} 
R(z) = \begin{pmatrix} 1 & - \frac1{2\pi i}\oint_{\Gamma_{r_1}} e^{n\tilde\phi_r(s)} h_{-\gamma}(s) \frac{ds}{s-z}\\ 
\frac1{2\pi i}\oint_{\Gamma_{r_2}} e^{-n\tilde\phi_r(s)} h_\gamma(s) \frac{ds}{s-z} & 1 \end{pmatrix}\\
+e^{-n(\phi_r(r_2)-\phi_r(r_1))} \mathcal O\begin{pmatrix} 1 & e^{n\phi_r(r_1)}\\ e^{-n\phi_r(r_2)} & 1\end{pmatrix},
\end{multline*}
as $n\to\infty$, uniformly on $\mathbb C\setminus \Sigma_R$. 
Let us consider the special case $r=r_2=z_0$ and $r_1>1$ close (enough) to $1$. The corresponding RHP is not in the small norm setting (at least not in an obvious way), since $\phi_r(r_2)=0$ in this case. We note that $\phi_r(r_1)<0$ for $r_1$ close enough to $1$. In particular, we can pick a fixed $\epsilon>0$ such that $R_\epsilon(z) = e^{n\epsilon \sigma_3} R(z) e^{-n\epsilon \sigma_3}$ is in the small norm setting. Then one may repeat the steps above for $R_\epsilon$ and, transforming back $R_\varepsilon\mapsto R$, reach the same conclusion, namely that \eqref{eq:Restimate} also holds in the case $r=r_2=z_0$, if $r_1>1$ is close enough to $1$. Moving back to the general case $1<r_1\leq r\leq r_2\leq z_0$, the question remains whether there is an optimal choice for $r$ (or $\ell$ equivalently). It appears there is no optimal value for $r$, since a simple transformation
\begin{align} \label{eq:defTildeR}
\widetilde R(z) = e^{n\tilde\ell\sigma_3/2} R(z) e^{-n\tilde\ell\sigma_3/2}
\end{align}
removes the dependence on $r$ in \eqref{eq:Restimate}. Namely, we have uniformly for $z$ in compact sets away from $\Sigma_R$
\begin{multline} \label{eq:tildeRestimate}
\widetilde R(z) = \begin{pmatrix} 1 & - \frac1{2\pi i}\oint_{\Gamma_{r_1}} e^{n\tilde\phi(s)} h_{-\gamma}(s) \frac{ds}{s-z}\\ 
\frac1{2\pi i}\oint_{\Gamma_{r_2}} e^{-n\tilde\phi(s)} h_\gamma(s) \frac{ds}{s-z} & 1 \end{pmatrix}\\
+e^{-n(\phi(r_2)-\phi(r_1))} \mathcal O\begin{pmatrix} 1 & e^{n\phi(r_1)}\\ e^{-n\phi(r_2)} & 1\end{pmatrix},
\end{multline}
where we remind the reader that $\phi$ is the $r$-independent function
\begin{equation} \label{eq:defPhiIndepr}
\phi(z) = \phi_r(z)+\ell = \log z + c \log(1-x^2z),
\end{equation}
and it has a cut at $(-\infty,0]\cup [x^{-2},\infty)$ (although notice that $e^{\pm n\phi(z)}$ only has a cut at $[x^{-2},\infty)$), and $\tilde\phi(z)=\phi(z)+\frac{\gamma}{2n}\log(1-x^2z)$.
The best estimates correspond to $r_1\downarrow 1$ and $r_2=z_0$. Of course, while the choice of $r$ is irrelevant for the estimates of $\widetilde R$, it does determine the different regions in which we can look at the large $n$ behavior of $Y$ (due to the normalization transformation $Y\mapsto T$).

\subsection{Undoing transformations}
We are in the setting $1<r_1\leq r\leq r_2\leq z_0$ where $r_1$ is close enough to $1$ and $r_2$ is close enough to $r$. Inverting the transformation of our RH analysis, we deduce the following relation:
\begin{align} \label{eq:asympBehavY}
Y(z) = \widetilde
R(z) e^{n\tilde\ell\sigma_3/2} P^{(\infty)}(z) e^{-n\tilde\ell\sigma_3/2}
\left.
\begin{cases}
\pmtwo{1}{0}
{-e^{-n\tilde\phi(z)}h_{-\gamma}(z)}{1} , & z \in \Omega_{1}, \\
\pmtwo{1}{0}{e^{n\tilde\phi(z)}h_{-\gamma}(z)}{1},  & z \in \Omega_{2}, \\
I , \ & z \in \Omega_{\infty} .
\end{cases} \right\}e^{ n\tilde g(z) \sigma_{3} } \ ,
\end{align}
where $\widetilde R$ is defined in \eqref{eq:defTildeR}, and $\tilde\phi(z)=\log z+(c+\frac\gamma{2n}) \log(1-x^2z)$. 
Our three contours $\Sigma_{r_1}, \Gamma_{r}, \Sigma_{r_2}$ divide $\mathbb C$ into four regions. We will describe the behavior of $Y$ in these regions. First, let us assume that $z\in \operatorname{Int }\Sigma_{r_1}$ where $\Sigma_{r_1}=\Gamma_{r_1}$. First computing $\widetilde R^{-1} Y$ via \eqref{eq:asympBehavY}, then inserting \eqref{eq:tildeRestimate}, and some rearranging yields
\begin{multline} \label{eq:YinIntGammar1}
Y(z) = \begin{pmatrix}
\frac1{2\pi i} e^{n(\tilde g(z)-\tilde\ell)}\oint_{\Gamma_{r_1}} e^{n\tilde\phi(s)} h_{-\gamma}(s) \frac{ds}{s-z} & e^{-n(\tilde g(z)-\tilde\ell)}\\
-e^{n(\tilde g(z)-\tilde\ell)} & \frac1{2\pi i} e^{-n(\tilde g(z)-\tilde\ell)}\oint_{\Gamma_{r_2}} e^{-n\tilde\phi(s)} h_\gamma(s) \frac{ds}{s-z}
\end{pmatrix}\\
+ e^{n(\phi(r_1)-\phi(r_2))}\mathcal O\begin{pmatrix}
e^{n(\phi(r_1)+g(z)-\ell)} & e^{-n(g(z)-\ell)}\\
e^{n(g(z)-\ell)} & e^{-n(\phi(r_2)+g(z)-\ell)}
\end{pmatrix}
\end{multline}
as $n\to\infty$, uniformly for $z$ in compact subsets of $\operatorname{Int }\Sigma_{r_1}$. Note that
\begin{align*}
e^{n(\tilde g(z)-\tilde\ell)}= \frac1{(1-x^2z)^{\alpha+\frac\gamma2}}.
\end{align*}
for such $z$. Now let us look at $\operatorname{Ext }\Sigma_{r_2}$. Here we find
\begin{multline} \label{eq:behavYSigmaInftyUnbounded}
Y(z) = \begin{pmatrix}
h_{-\gamma}(z) z^n & -\frac1{2\pi i}h_\gamma(z) z^{-n} \oint_{\Gamma_{r_1}} e^{n\tilde\phi(s)} h_{-\gamma}(s) \frac{ds}{s-z}\\
\frac1{2\pi i} h_\gamma(z) z^n \oint_{\Gamma_{r_2}} e^{-n\tilde\phi(s)} h_\gamma(s) \frac{ds}{s-z} & h_\gamma(z) z^{-n}
\end{pmatrix}\\
+ e^{n(\phi(r_1)-\phi(r_2))}\mathcal O\begin{pmatrix}
z^n & z^{-n} e^{n\phi(r_1)}\\
z^n e^{-n\phi(r_2)} & z^{-n}
\end{pmatrix}
\end{multline}
as $n\to\infty$, uniformly for $z$ in compact subsets of $\operatorname{Ext }\Sigma_{r_2}$. 

Due to the freedom we have in picking $r$, $r_1$ and $r_2$, the regions $\Omega_1$ and $\Omega_2$ are not important, at least not when we are satisfied with exponentially small error terms that may not be sharp. For example, for any $t\in (1,z_0]$ we have by \eqref{eq:behavYSigmaInftyUnbounded} that
\begin{align} \label{eq:Y11ExtGammar2}
P_k(z) = Y_{11}(z) = \left(\left(\frac{z}{z-1}\right)^{\frac{\gamma}{2}}  + \mathcal O(e^{n(\phi(t')-\phi(t))})\right) z^k,
\end{align}
as $n\to\infty$, uniformly for $z\in \operatorname{Ext }\Gamma_{t}$ and the error is exponentially small provided that $t'\in(1,t)$ (because $\phi$ is increasing on $(1,z_0)$). On the other hand, for $t\in(1,z_0)$ close enough to $1$, we have by \eqref{eq:YinIntGammar1} that
\begin{align} \nonumber
Y_{11}(z) =& \left(\frac1{2\pi i}\oint_{\Gamma_{r_1}} e^{n\phi(s)} (1-x^2s)^\frac{\gamma}{2} \left(\frac{s}{s-1}\right)^{\frac{\gamma}{2}}  \frac{ds}{s-z}+ \mathcal O(e^{k(2\phi(t)-\phi(z_0))})\right)\\ \label{eq:Y11IntGammar1} 
&\times  \frac1{(1-x^2z)^{\alpha+\frac\gamma2}}
\end{align}
as $n\to\infty$, uniformly for $z\in\operatorname{Int }\Gamma_{t}$. In the next section we analyze the remaining integral in \eqref{eq:Y11IntGammar1}. Its behavior is dominated by an $e^{n\phi(1)}$ term. Indeed, it is possible to find $t\in (1,z_0)$ such that $2\phi(t)-\phi(z_0)<\phi(1)$. 
Clearly $\phi(t)$ is smaller than the average of $\phi(1)$ and $\phi(z_0)$ for $t>1$ sufficiently close to $1$. The region $\Sigma_\infty$ is thus sufficient to get asymptotics on any region in $\mathbb C$. Note that, since the convergence is uniform on compact sets of the regions above, we may differentiate the entries and get estimates for those as well, which we shall need in Section \ref{se:aadi}.\\

Naively, one may suspect that we can deform $\Gamma_{r_1}$ in \eqref{eq:Y11IntGammar1} to pass through the saddle point $z_0$, but the steepest descent directions of $\phi$ are parallel to the real line (see \eqref{eq:phiatz0}), and this poses a problem (any deformation must intersect the level curve $\phi(z)=\phi(z_0)$ in a point $z\neq z_0$). On the other hand, the steepest descent directions at $z_0$ of $-\phi$ are perpendicular to the real line, and here we may indeed apply the method of steepest descent to conclude that
\begin{multline} \nonumber
\oint_{\Gamma_{r_2}} e^{-n\phi(s)} (1-x^2s)^{-\frac\gamma2}h_{\gamma}(s)  \frac{ds}{s-z}\\ \label{eq:steepestDescentGamma2}
= (1+\mathcal O(n^{-1})) \sqrt\frac{2\pi}{-\phi''(z_0) n} e^{-n\phi(z_0)} \left(\frac{z_0-1}{z_0}\right)^{\frac{\gamma}{2}}  \frac{(1-x^2z_0)^{-\frac\gamma2}}{z_0-z}
\end{multline}
as $n\to\infty$, uniformly for $z$ some finite distance away from $z_0$. Note that $-\phi''(z_0)=\frac1c(c+1)^3x^4>0$. The case that $z$ is close to $z_0$ can also be treated and yields a contribution of order $1$ as $n\to\infty$, e.g., see \cite{Molag2023Edge} (Proposition 3.1). 

\subsection{Analysis of $Y_{11}$ in a neighborhood of $\Gamma_1$} \label{sec:ContourInt}
 
Let $t>1$ be close to $1$ as described in the previous section, and consider \eqref{eq:Y11IntGammar1}.
To obtain the asymptotic behavior of $Y_{11}$ on compact subsets of $\operatorname{Int }\Gamma_t$ we are tasked with analyzing an integral of the form
\begin{align*}
\oint_{\Gamma_{t}} e^{n\tilde\phi(s)}h_{-\gamma}(s) \frac{ds}{s-z}
= \oint_{\Gamma_{t}} e^{n\phi(s)}(1-x^2s)^\frac{\gamma}{2}\left(\frac{s}{s-1}\right)^{\frac{\gamma}{2}} \frac{ds}{s-z}.
\end{align*}
Here we allow $\gamma$ to be complex. We may deform $\Gamma_t$ to any contour $\gamma_t$ that encloses $(0,1]$, and we assume that it passes through $0$ and does not pass through $1$. Note that 
\begin{align*}
\re\phi'(1) = 1-\frac{c x^2}{1-x^2} = \frac{1-(1+c) x^2}{1-x^2}>0,
\end{align*}
since $x^2(1+c)<1$ in the strong regime. 
\begin{lemma}\label{lem:ointsnue-ksswExpansion}
Let $p>\re\frac{\gamma}{2}$ be a positive integer. Then there exist analytic functions $c_0(z), c_1(z), \ldots, c_{p-1}(z)$ on $\mathbb C\setminus\{\gamma_{t}\}$ such that as $n\to\infty$
\begin{multline*}
\oint_{\Gamma_{t}} e^{n\tilde\phi(s)} h_{-\gamma}(s) \frac{ds}{s-z}\\
= \sum_{m=0}^{p-1} c_m(z) e^{n\tilde\phi(1)}\oint_{\gamma_\infty} s^{m-\frac{\gamma}{2}} e^{-n s} \frac{ds}{s+\phi(z)-\phi(1)}
+\mathcal O(e^{n\phi(1)} n^{\re\frac{\gamma}{2}-p}).
\end{multline*}
uniformly for $z\not\in \gamma_{t}$, where $\gamma_\infty$ is a closed contour going around $[0,\infty)$ with negative orientation that passes through $+\infty$, and $s^{m-\frac{\gamma}{2}}$ has a cut at $[0,\infty)$. Furthermore, we have
\begin{align*}
c_0(z) = 
\begin{cases} -(-\phi'(1))^{\frac{\gamma}{2}-1} \frac{\phi(z)-\phi(1)}{z-1}, & z\neq 1\\
	(-\phi'(1))^{\frac{\gamma}{2}}, & z=1. \end{cases}
\end{align*} 
\end{lemma}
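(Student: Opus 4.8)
The plan is to isolate the contribution of the point $s=1$ to the contour integral by a local change of variables $s = 1 - \tfrac{t}{n}$, which sends the essential singularity region of $e^{n\phi(s)}$ near $s=1$ onto a contour of $O(1)$ size in the $t$-plane. First I would deform $\Gamma_t$ to a convenient contour $\gamma_t$ around $(0,1]$ passing through $0$ (this is justified since the integrand is analytic in the region swept, away from the cut $[0,1]$ carried by $h_{-\gamma}$ and away from $z$; the branch of $(1-x^2s)^{\gamma/2}\left(\tfrac{s}{s-1}\right)^{\gamma/2}$ must be tracked carefully here). On this contour $\re\phi(s)$ attains its maximum precisely at $s=1$ because $\re\phi'(1) = \tfrac{1-(1+c)x^2}{1-x^2}>0$ and, by the properties of $\phi$ established in Appendix \ref{Apx:phiAnalyticIso} and Lemma \ref{eq:GammarJordan}, $\re\phi(s)<\phi(1)$ on the rest of $\gamma_t$. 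Thus the integral localizes near $s=1$, and the dominant scale is $e^{n\phi(1)}$.

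The second step is the local analysis. Write $\phi(s) = \phi(1) + \phi'(1)(s-1) + O((s-1)^2)$ and substitute $s = 1 + \tfrac{u}{n}$ (with an appropriate rotation so that the new contour runs along the steepest descent direction of $\re\phi$ through $1$, i.e. so that $\phi'(1)u$ has positive real part running to $-\infty$ along the tails). Then $e^{n\phi(s)} = e^{n\phi(1)} e^{\phi'(1)u}(1+O(u^2/n))$, the factor $h_{-\gamma}(s) = \left(\tfrac{s-1}{s}\right)^{-\gamma/2}$ becomes $(u/n)^{-\gamma/2}(1+O(u/n))$ up to the branch, and $\tfrac{1}{s-z} = \tfrac{1}{1-z}(1+O(u/n))$ for $z$ away from $\gamma_t$. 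The remaining analytic prefactor $(1-x^2s)^{\gamma/2}\left(\tfrac{s}{s-1}\right)^{\gamma/2}\cdot\tfrac{1}{s-z}$, call it $f(s)$, should be Taylor-expanded in $s$ around $s=1$ to order $p-1$; after reabsorbing the explicit $(s-1)^{-\gamma/2}$ that was stripped off, each Taylor coefficient multiplied by the appropriate power of $(s-1) = u/n$ produces a term of the form $c_m(z)\, e^{n\phi(1)} n^{\gamma/2-m}\int u^{m-\gamma/2}e^{\phi'(1)u}\,du$; undoing the rescaling and writing $\phi'(1)u = -ns'$, and incorporating the denominator $s-z$ exactly via the identity $\tfrac{1}{s-z}=\tfrac{1}{(s-1)+(1-z)}$ expanded so that the pole at $s'=-(\phi(z)-\phi(1))$ survives, yields the claimed representation with $\gamma_\infty$ the image contour around $[0,\infty)$. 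The error term $O(e^{n\phi(1)}n^{\re\gamma/2-p})$ comes from: (i) the Taylor remainder of $f$ at order $p$, contributing $n^{\gamma/2-p}$; (ii) the quadratic correction $e^{O(u^2/n)}-1$ in the phase; and (iii) the exponentially smaller contribution $O(e^{n\phi(t')})$ with $t'<1$ from the part of $\gamma_t$ away from $s=1$, which is absorbed since $\phi(t')<\phi(1)$.

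To pin down $c_0(z)$ explicitly, I would track the leading term: the $m=0$ Taylor coefficient of $f(s)(s-1)^{\gamma/2} = (1-x^2s)^{\gamma/2}s^{\gamma/2}\tfrac{1}{s-z}$ evaluated at $s=1$ is $\tfrac{(1-x^2)^{\gamma/2}}{1-z}$, but the normalization involves the Jacobian factor from $\phi'(1)u = -ns'$, which contributes $(-\phi'(1))^{\gamma/2-1}$ (the power $\gamma/2-1$ rather than $\gamma/2$ accounting for the single extra $du$ in the measure after the substitution), and one must also account for how the pole $\tfrac{1}{s+\phi(z)-\phi(1)}$ in the model integral relates to $\tfrac{1}{s-z}$ via $\phi(s)-\phi(1) = \phi'(1)(s-1)+\cdots$, producing the factor $\tfrac{\phi(z)-\phi(1)}{z-1}$ rather than a bare $\tfrac{1}{1-z}$. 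At $z=1$ the ratio $\tfrac{\phi(z)-\phi(1)}{z-1}\to \phi'(1)$, which converts $(-\phi'(1))^{\gamma/2-1}$ into $(-\phi'(1))^{\gamma/2}$ up to sign, matching the stated value; the overall minus sign is fixed by the orientation of $\gamma_\infty$.

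The main obstacle I anticipate is the bookkeeping of branch cuts and orientations: the factor $h_{-\gamma}$ carries a cut on $[0,1]$, the function $e^{n\phi}$ carries a cut on $[x^{-2},\infty)$, and the model integrand $s^{m-\gamma/2}$ carries a cut on $[0,\infty)$; making sure the deformation $\Gamma_t \rightsquigarrow \gamma_t$, the rescaling near $s=1$, and the passage to $\gamma_\infty$ are all consistent — so that the branch of $(-\phi'(1))^{\gamma/2}$ and the sign of $c_0$ come out exactly as stated — is delicate. A secondary technical point is establishing the \emph{uniformity} in $z\notin\gamma_t$: one must check that the expansion constants $c_m(z)$ remain analytic and that the error is uniform as $z$ approaches $\gamma_t$ from either side, which follows because the only $z$-dependence in the local analysis sits in the smooth factor $\tfrac{1}{s-z}$ and in the explicit pole of the model integral, both of which are handled exactly rather than expanded near $z$.
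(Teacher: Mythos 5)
Your proposal shares the high-level strategy with the paper (localize at $s=1$, expand a prefactor, match to incomplete-gamma model integrals), but there is a genuine gap in the crucial step of linearizing the phase.

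The paper does not use the naive rescaling $s = 1 + u/n$ with a Taylor expansion of the exponent. Instead it performs the exact conformal change of variables $\zeta = \psi(s) := \phi(1) - \phi(s)$, which maps a full neighborhood of $s=1$ biholomorphically to a neighborhood of $\zeta = 0$ (using Lemma \ref{lem:phiAnalyticIso}). This has two consequences that your approach does not reproduce. First, the phase becomes exactly $e^{n\phi(1)}e^{-n\zeta}$, so there is no residual factor $e^{nR(s)}$ to expand; all $s$-dependence beyond the leading singularity is pushed into a single analytic function $f(\zeta)$ whose Taylor coefficients give the $c_m(z)$. Your version would force you to interleave the expansion of $e^{n(\phi(s)-\phi(1)-\phi'(1)(s-1))}$ with the Taylor expansion of the prefactor to achieve the error $\mathcal O(n^{\re\gamma/2 - p})$ for general $p$, and the proposal does not actually carry this out. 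Second, and more seriously, the exact change of variables sends the pole at $s=z$ to the pole at $\zeta = \psi(z) = \phi(1) - \phi(z)$, which is exactly the $\frac{1}{s + \phi(z) - \phi(1)}$ appearing in the lemma's statement. Your linear substitution $s' = -\phi'(1)(s-1)$ would instead place the pole at $s' = -\phi'(1)(z-1)$. These agree only to first order in $z-1$. This discrepancy is not absorbed into the error: the residue at that pole is what produces the term $e^{n\tilde\phi(z)}\left(\frac{z}{z-1}\right)^{\gamma/2}$ in Proposition \ref{prop:intGammar1}(i), and a pole at $-\phi'(1)(z-1)$ would yield the incorrect exponential $e^{-n\phi'(1)(1-z)}$ rather than $e^{n(\phi(z)-\phi(1))}$. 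You do flag this issue in the $c_0$ discussion, but the proposed fix ("expanded so that the pole at $s' = -(\phi(z)-\phi(1))$ survives") is not a valid manipulation under a linear change of variables; it requires precisely the nonlinear conformal map the paper uses.

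In short: replace the rescaling $s = 1 + u/n$ by the biholomorphism $\zeta = \phi(1) - \phi(s)$, absorb the Jacobian and the rational factor $\frac{\zeta - \psi(z)}{\psi^{-1}(\zeta) - z}$ into a single analytic function $f(\zeta)$, and then the Taylor coefficients of $f$ are the $c_m(z)$. This gives the exact pole location and the uniform error without any iterative phase expansion.
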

\begin{proof} Deform $\Gamma_t$ to $\gamma_t$ as described above. 
Let $0<\delta<z_0-1$. Let $\gamma_{t+}$ be the part of $\gamma_{t}$ in the disc $|s-1|\leq \delta$, and let $\gamma_{t-}$ be the remaining part of $\gamma_{t}$. We may assume without loss of generality (possibly having chosen $\gamma_{t}$ tighter around $[0,1]$ when opening lenses) that $\re \phi(s)<\phi(1)-\epsilon$ on $\gamma_{t-}$ for some $\epsilon>0$. We thus have
\begin{multline*}
\oint_{\gamma_{t}} \left(\frac{s}{s-1}\right)^\frac{\gamma}{2} (1-x^2s)^\frac{\gamma}{2} e^{n\phi(s)} \frac{ds}{s-z}\\
= \int_{\gamma_{t+}} \left(\frac{s}{s-1}\right)^\frac{\gamma}{2} (1-x^2s)^\frac{\gamma}{2} e^{n\phi(s)} \frac{ds}{s-z}+\mathcal O(e^{n\phi(1)-n \epsilon})
\end{multline*}
as $n\to\infty$, uniformly for $z\not\in \gamma_{t}$. Now let us define 
\begin{align*}
\psi(s)=\phi(1)-\phi(s). 
\end{align*}
From Lemma \ref{lem:phiAnalyticIso} we know that $\psi$ is an analytic isomorphism from $(\overline{\mathcal S_1}\cup \mathcal S_2 \cup \mathcal S_2^*)\setminus (-\infty,0]$ onto the strip $\{z\in \mathbb C : |\operatorname{Im} z|<\pi\}$.
In particular, we can restrict it to a conformal map in an open neighborhood of $s=1$ containing $\gamma_{t+}$. Note that $\psi(1)=0$. Substitution of integration variables yields
\begin{multline*}
e^{-n\phi(1)}\oint_{\gamma_{t+}} \left(\frac{s}{s-1}\right)^\frac{\gamma}{2} (1-x^2s)^\frac{\gamma}{2} e^{n\phi(s)} \frac{ds}{s-z}\\
=\oint_{\psi(\gamma_{t+})} \left(\frac{\psi^{-1}(s)}{\psi^{-1}(s)-1}\right)^{\frac{\gamma}{2}}
(1-x^2\psi^{-1}(s))^\frac{\gamma}{2} e^{n s} \frac{(\psi^{-1})'(s)}{\psi^{-1}(s)-z} ds\\
= \oint_{\psi(\gamma_{t+})} s^{-\frac{\gamma}{2}} e^{-n s} f(s) \frac{ds}{s-\psi(z)},
\end{multline*}
where $f$ is defined by
\begin{align*}
f(s) = \left(\frac{\psi^{-1}(s)-1}{\psi^{-1}(s)}\right)^{-\frac{\gamma}{2}} s^{\frac\gamma2} (1-x^2\psi^{-1}(s))^\frac{\gamma}{2} \frac{s-\psi(z)}{\psi^{-1}(s)-z} (\psi^{-1})'(s).
\end{align*}
Note that $f$ is an analytic function on an open neighborhood containing $\psi(\gamma_{t+})$. Hence $f$ has an absolutely convergent power series
\begin{align*}
f(s) = \sum_{m=0}^\infty c_m(z) s^m. 
\end{align*}
Using the inverse function theorem one can show that
\begin{align*}
c_0(z) = \lim_{s\to 0} f(s) = 
\begin{cases} -(-\phi'(1))^{\frac{\gamma}{2}-1} \frac{\phi(z)-\phi(1)}{z-1} (1-x^2)^\frac{\gamma}{2}, & z\neq 1\\
	(-\phi'(1))^{\frac{\gamma}{2}}(1-x^2)^\frac{\gamma}{2}, & z=1. \end{cases}
\end{align*} 
The factor $(1-x^2)^\frac{\gamma}{2}$ can be absorbed into $\phi$ to create $\tilde\phi$. 
The integrand of the integral
\begin{align*}
\oint_{\psi(\gamma_{t+})} s^{-\frac{\gamma}{2}} e^{-n s} \left(f(s) - \sum_{m=0}^{p-1} c_m s^m\right) \frac{ds}{s-\psi(z)} 
\end{align*}
is of order $\mathcal O(s^{p-\frac{\gamma}{2}})$ as $s\to 0$, and we may thus deform the integration contour to one that passes through $s=0$. Then Laplace's method tells us that
\begin{align*}
\oint_{\psi(\gamma_{t+})} s^{-\frac{\gamma}{2}} e^{-n s} \left(f(s) - \sum_{m=0}^{p-1} c_m s^m\right) \frac{ds}{s-\psi(z)} = 
\mathcal O(n^{\re\frac{\gamma}{2}-p})
\end{align*}
as $n\to\infty$. We still have $p$ remaining terms of the form 
\begin{align*}
\oint_{\psi(\gamma_{t+})} s^{-\frac{\gamma}{2}+m} e^{-n s} \frac{ds}{s-\psi(z)}.
\end{align*}
Extending the integration contour $\psi(\gamma_{t+})$ from both end points to $+\infty$, we get a closed contour $\gamma_\infty$ passing through $+\infty$. The difference between the integrals over $\psi(\gamma_{t+})$ and $\gamma_\infty$ is exponentially small. 
\end{proof}

\begin{lemma} \label{lem:ointsnue-kssw}
Let $\gamma_\infty$ be any closed loop with negative orientation enclosing $[0,\infty)$ and passing through $+\infty$. Let $\nu\in\mathbb C$ and assume that $w\not\in \gamma_{\infty}$.  Then we have for $k>0$
\begin{multline}
\oint_{\gamma_\infty} s^{\nu-1} e^{-k s} \frac{ds}{s-w}
= 2\pi i w^{\nu-1} e^{-k w} 
\begin{cases} \frac{\gamma(1-\nu, -k w)}{\Gamma(1-\nu)} & w\in \operatorname{Int} \gamma_{\infty} \\ \frac{\Gamma(1-\nu, -k w)}{\Gamma(1-\nu)}, & w\in \operatorname{Ext} \gamma_{\infty}\end{cases}.
\end{multline}
Here $\gamma(\cdot, \cdot)$ and $\Gamma(\cdot, \cdot)$ denote the lower and upper incomplete gamma function (defined with principal branch). 
\end{lemma}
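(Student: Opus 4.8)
The identity is a residue-and-branch-cut computation, and this lemma is self-contained (it does not use the properties of $\phi$). The plan is to contract $\gamma_\infty$ onto the branch cut $[0,\infty)$ of $s^{\nu-1}$ and to recognise the resulting real integral as a classical representation of the incomplete gamma function. It is convenient to assume first that $\re\nu>0$ (so that $s^{\nu-1}$ is integrable at the origin) and that $w\notin[0,\infty)$; the general case then follows by analytic continuation, since the left-hand side is entire in $\nu$ and analytic in $w$ on each of $\operatorname{Int}\gamma_\infty$ and $\operatorname{Ext}\gamma_\infty$, while on the right-hand side the quotients $\Gamma(1-\nu,\cdot)/\Gamma(1-\nu)$ and $\gamma(1-\nu,\cdot)/\Gamma(1-\nu)$ are entire in $\nu$ (the factor $1/\Gamma(1-\nu)$ cancels the poles of the incomplete gamma functions), and $\gamma_\infty$ may be moved freely within its homotopy class.

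First I would treat $w\in\operatorname{Ext}\gamma_\infty$, where the pole $s=w$ is not enclosed. Contracting $\gamma_\infty$ onto $[0,\infty)$, the two edges of the cut are traversed in opposite senses; with the branch $\arg s\in(0,2\pi)$ one has $s^{\nu-1}=t^{\nu-1}$ on the upper edge and $s^{\nu-1}=t^{\nu-1}e^{2\pi i\nu}$ on the lower edge, while the small circle about $s=0$ contributes $\mathcal O(\delta^{\re\nu})\to 0$. This gives $\oint_{\gamma_\infty}s^{\nu-1}e^{-ks}\,\frac{ds}{s-w}=\pm(1-e^{2\pi i\nu})\int_0^\infty t^{\nu-1}e^{-kt}\,\frac{dt}{t-w}$, the overall sign being fixed by the negative orientation of $\gamma_\infty$. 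After rescaling $t\mapsto t/k$, the remaining integral is evaluated from the classical formula $\int_0^\infty \frac{t^{\nu-1}e^{-t}}{t+z}\,dt=\Gamma(\nu)\,z^{\nu-1}e^{z}\,\Gamma(1-\nu,z)$, valid for $\re\nu>0$ and $z\notin(-\infty,0]$ (one may also prove it in one line by inserting $(t+z)^{-1}=\int_0^\infty e^{-(t+z)u}\,du$ and integrating in $t$ first), applied with $z=-kw$. Finally, using the reflection formula in the form $(1-e^{2\pi i\nu})\Gamma(\nu)=-2\pi i\,e^{i\pi\nu}/\Gamma(1-\nu)$ together with $(-w)^{\nu-1}=-e^{-i\pi\nu}w^{\nu-1}$ collapses the prefactors to $2\pi i/\Gamma(1-\nu)$ and yields the exterior formula.

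For $w\in\operatorname{Int}\gamma_\infty$ I would not repeat the contraction but instead use the jump of the Cauchy-type integral across $\gamma_\infty$: by Sokhotski--Plemelj the limits of $w\mapsto\oint_{\gamma_\infty}s^{\nu-1}e^{-ks}\,\frac{ds}{s-w}$ from the two sides of $\gamma_\infty$, taken at a smooth point away from $[0,\infty)$, differ by $\pm2\pi i\,w^{\nu-1}e^{-kw}$; equivalently, contracting $\gamma_\infty$ onto $[0,\infty)$ now also sheds a residue circle about $s=w$, clockwise because $\gamma_\infty$ is negatively oriented. Subtracting this residue from the exterior expression and using $\gamma(1-\nu,z)+\Gamma(1-\nu,z)=\Gamma(1-\nu)$ converts $\Gamma(1-\nu,-kw)$ into (a multiple of) $\gamma(1-\nu,-kw)$, which is the interior formula. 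Removing the restrictions $\re\nu>0$ and $w\notin[0,\infty)$ by analytic continuation completes the proof.

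The computation itself is short; the only delicate point, and the one requiring genuine care, is the consistent bookkeeping of branches and orientations: $s^{\nu-1}$ carries the cut $[0,\infty)$ with $\arg\in(0,2\pi)$, whereas $w^{\nu-1}$, $\Gamma(1-\nu,\cdot)$ and $\gamma(1-\nu,\cdot)$ are principal branches, and one must check that the spurious factors $e^{i\pi\nu}$ and $e^{2\pi i\nu}$ cancel and that the residue circle inherits the orientation of $\gamma_\infty$. A convenient internal check is that the two final formulas must satisfy $\oint_{w\in\operatorname{Ext}\gamma_\infty}-\oint_{w\in\operatorname{Int}\gamma_\infty}=2\pi i\,w^{\nu-1}e^{-kw}$, which is exactly the Plemelj jump and reduces to $\gamma(1-\nu,z)+\Gamma(1-\nu,z)=\Gamma(1-\nu)$.
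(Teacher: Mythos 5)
Your proof is correct and follows essentially the same route as the paper's: reduce to $\re\nu>0$ and $w$ exterior, collapse $\gamma_\infty$ onto the cut $[0,\infty)$ to pick up a factor $(1-e^{\pm 2\pi i\nu})$ times a real integral, identify that integral with $\Gamma(\nu)(-kw)^{\nu-1}e^{-kw}\Gamma(1-\nu,-kw)$ (the paper cites the same classical formula from Gradshteyn--Ryzhik), obtain the interior case from the residue at $s=w$ together with $\gamma+\Gamma=\Gamma$, and finish by analytic continuation in $\nu$ and $w$. Your treatment is somewhat more explicit about the branch and orientation bookkeeping and about why the right-hand side is entire in $\nu$, but the argument is the same.
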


\begin{proof}
Under the given conditions the integral determines a function that is analytic in both $\nu$ and $w$ (but $w\not\in\gamma_{\infty}$). We shall temporarily assume that $\re \nu>0$ and that $w\in \operatorname{Ext} \gamma_\infty$. In that case we can take the bandwidth of $\gamma_\infty$ to $0$ and get
\begin{align*}
\oint_{\gamma_\infty} s^{\nu-1} e^{-k s} \frac{ds}{s-w} &= (1-e^{-2\pi i\nu}) \int_0^\infty s^{\nu-1} e^{-k s} \frac{ds}{s-w}\\
&= (1-e^{-2\pi i\nu}) (-w)^{\nu-1} \Gamma(\nu) e^{-k w} \Gamma(1-\nu, -k w)\\
&= 2\pi i w^{\nu-1} e^{-k w} \frac{\Gamma(1-\nu, -k w)}{\Gamma(1-\nu)}
\end{align*}
according to EH II 137(3) in \cite{gradshteyn2007table}. By a similar argument, taking into account an extra residue contribution at $s=w$ we find the formula when $w\in\operatorname{Int} \gamma_{\infty}$ and $w\not\in [0,\infty)$. By analytic continuation the two formulae actually hold for all $\nu\in\mathbb C$, and all $w\not\in \gamma_{\infty}$. 
\end{proof}

\begin{proposition} \label{prop:intGammar1}
Let $t\in(1,z_0)$. 
Let $\gamma\in\mathbb C$ and $z\not\in \Gamma_{t}$. 
\begin{itemize}
\item[(i)] When $|n (z-1)|\to\infty$ as $n\to\infty$ we have uniformly for $z\in \operatorname{Int} \Gamma_t$
\begin{align*}
\frac{1}{2\pi i}\oint_{\Gamma_{t}}e^{n\tilde\phi(s)} h_{-\gamma}(s) \frac{ds}{s-z} 
&=\left(\frac{z}{z-1}\right)^\frac{\gamma}{2} e^{n\tilde\phi(z)}\\
&\quad + \frac{1}{z-1} \frac{1}{\Gamma(\gamma/2)} \phi'(1)^{\frac{\gamma}{2}} n^{\frac{\gamma}{2}-1} e^{n\tilde\phi(1)} (1+\mathcal O(n^{-1})).
\end{align*}
\item[(ii)] When $z=1-a/(\phi'(1) n)$ and $a\not\in [0,\infty)$ we have uniformly for $a\in\mathbb C$ in compact sets that as $n\to\infty$
\begin{align*}
\frac{1}{2\pi i}\oint_{\Gamma_{t}}e^{n\tilde\phi(s)} h_{-\gamma}(s) \frac{ds}{s-z} 
= e^{\pi i\gamma/2} (\phi'(1) n)^{\frac{\gamma}{2}} a^{-\frac{\gamma}{2}} e^{-a} \frac{\Gamma(\frac{\gamma}{2}, a)}{\Gamma(\frac{\gamma}{2})} e^{n\tilde\phi(1)} (1+\mathcal O(n^{-1})). 
\end{align*}
\item[(iii)] When $|n (z-1)|\to\infty$ as $n\to\infty$ we have uniformly for $z\in \operatorname{Ext} \Gamma_{t}$
\begin{align*}
\frac{1}{2\pi i}\oint_{\Gamma_{t}}e^{n\tilde\phi(s)} h_{-\gamma}(s) \frac{ds}{s-z} 
&= \frac{1}{z-1} \frac{1}{\Gamma(\gamma/2)} \phi'(1)^{\frac{\gamma}{2}} n^{\frac{\gamma}{2}-1} e^{n\tilde\phi(1)} (1+\mathcal O(n^{-1})).
\end{align*}
\end{itemize}
\end{proposition}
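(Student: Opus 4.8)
The plan is to combine Lemma~\ref{lem:ointsnue-ksswExpansion} with Lemma~\ref{lem:ointsnue-kssw} and then analyse the resulting incomplete gamma functions as $n\to\infty$. First I reduce the contour $\Gamma_{t}$ in the statement to the contour $\gamma_{t}$ of Lemma~\ref{lem:ointsnue-ksswExpansion}: choosing $\gamma_{t}$ tightly around $[0,1]$, the deformation of $\Gamma_{t}$ to $\gamma_{t}$ sweeps the lune $\operatorname{Int}\Gamma_{t}\setminus\operatorname{Int}\gamma_{t}$, across which the integrand $e^{n\tilde\phi(s)}h_{-\gamma}(s)/(s-z)$ has a simple pole at $s=z$ with residue $\left(\tfrac{z}{z-1}\right)^{\gamma/2}e^{n\tilde\phi(z)}$. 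Hence for $z$ in that lune (but not for $z$ deep inside $\gamma_{t}$, nor for $z\in\operatorname{Ext}\Gamma_{t}$) this term is picked up; this is the origin of the first summand in case~(i).

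Next, I apply Lemma~\ref{lem:ointsnue-ksswExpansion} with a fixed integer $p\ge 2$ and $p>\operatorname{Re}(\gamma/2)$, so its remainder $\mathcal O(e^{n\phi(1)}n^{\operatorname{Re}(\gamma/2)-p})$ is $\mathcal O(n^{-1})$ relative to the anticipated leading scale $n^{\gamma/2-1}e^{n\tilde\phi(1)}$. To each of the $p$ integrals $\oint_{\gamma_{\infty}}s^{m-\gamma/2}e^{-ns}\,\frac{ds}{s+\phi(z)-\phi(1)}$ I apply Lemma~\ref{lem:ointsnue-kssw} with $w=\phi(1)-\phi(z)$, $k=n$ and $\nu=m+1-\tfrac{\gamma}{2}$, producing incomplete gamma functions of argument $-nw=n(\phi(z)-\phi(1))$. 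The dichotomy $w\in\operatorname{Int}\gamma_{\infty}$ versus $w\in\operatorname{Ext}\gamma_{\infty}$ in Lemma~\ref{lem:ointsnue-kssw} corresponds, via the orientation-reversing conformal isomorphism $\psi=\phi(1)-\phi$ of Lemma~\ref{lem:phiAnalyticIso} (which maps $\gamma_{t}$ near $1$ onto $\gamma_{\infty}$ near $0$), to $z$ lying on the interior, resp. exterior, side of $\gamma_{t}$; thus one gets the lower incomplete gamma for $z$ ``close to $[0,1]$'' and the upper one otherwise. In the interior case the extra full-$\Gamma$ term, summed against the coefficients $c_{m}(z)$ and using that $\sum_{m}c_{m}(z)w^{m}$ is the value at $s=w$ of the analytic function $f$ from the proof of Lemma~\ref{lem:ointsnue-ksswExpansion}, reassembles exactly the residue $\left(\tfrac{z}{z-1}\right)^{\gamma/2}e^{n\tilde\phi(z)}$; so the two sub-cases (interior of $\gamma_{t}$ versus the lune) give the same final formula, uniformly up to $\Gamma_{t}$.

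It then remains to expand. In cases (i) and (iii) the hypothesis $|n(z-1)|\to\infty$ gives $|{-nw}|=n|\phi(z)-\phi(1)|\to\infty$; since $\phi$ is injective on (a neighbourhood of) $\operatorname{Int}\Gamma_{t}$, resp. $\operatorname{Ext}\Gamma_{t}$, with bounded imaginary part there, $\arg(-nw)$ stays in a sector of half-angle $<\tfrac{3\pi}{2}$, so $\Gamma(\tfrac{\gamma}{2}-m,\zeta)=\zeta^{\gamma/2-m-1}e^{-\zeta}(1+\mathcal O(\zeta^{-1}))$ applies uniformly. Only the $m=0$ term survives at order $n^{\gamma/2-1}$; inserting $c_{0}(z)=-(-\phi'(1))^{\gamma/2-1}\frac{\phi(z)-\phi(1)}{z-1}$, the powers $w^{-\gamma/2}$ and $(-nw)^{\gamma/2-1}$ combine to $n^{\gamma/2-1}/(z-1)$ times a constant power of $\phi'(1)$, yielding the claimed $\frac{1}{z-1}\frac{1}{\Gamma(\gamma/2)}\phi'(1)^{\gamma/2}n^{\gamma/2-1}e^{n\tilde\phi(1)}$; the terms $m\ge 1$ and the Lemma~\ref{lem:ointsnue-ksswExpansion} remainder are $\mathcal O(n^{-1})$ relative. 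In case (ii), with $z=1-a/(\phi'(1)n)$, one has instead $-nw=-a+\mathcal O(n^{-1})$ and $w=a/n+\mathcal O(n^{-2})$, so the incomplete gamma does not simplify: keeping it, and using $c_{0}(1)=(-\phi'(1))^{\gamma/2}$ together with the precise branch of $w^{-\gamma/2}$ (which produces the $e^{\pi i\gamma/2}$ prefactor and the factor $(\phi'(1)n)^{\gamma/2}a^{-\gamma/2}$) and $e^{-nw}=e^{-a}(1+\mathcal O(n^{-1}))$, one obtains the stated formula, the $m\ge1$ terms again being lower order.

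I expect the main obstacle to be the branch bookkeeping. The fractional powers $s^{m-\gamma/2}$ (cut $[0,\infty)$), $h_{-\gamma}$ (cut $[0,1]$), $(-\phi'(1))^{\gamma/2}$, $(-nw)^{\gamma/2-1}$ and $w^{-\gamma/2}$ must be tracked through the $\psi$-substitution in Lemma~\ref{lem:ointsnue-ksswExpansion} and through the evaluation of Lemma~\ref{lem:ointsnue-kssw} so that they recombine into the clean closed forms above (this is exactly what generates the $e^{\pi i\gamma/2}$ of case (ii)); alongside this, one must verify the interior/exterior-$\gamma_{\infty}$ versus interior/exterior-$\Gamma_{t}$ correspondence and the exactness of the residue reconstruction, which is what makes the three estimates match in the transition region near $z=1$ and remain valid --- by the cancellation noted in the Remark following Theorem~\ref{thm:mainPoly} --- across the spurious branch cut on $(0,1)$.
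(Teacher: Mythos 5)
Your proposal is correct and follows essentially the same route as the paper: feed the expansion of Lemma \ref{lem:ointsnue-ksswExpansion} into Lemma \ref{lem:ointsnue-kssw} with $\nu=m+1-\tfrac{\gamma}{2}$, $w=\phi(1)-\phi(z)$, then use the large-argument expansion of the upper incomplete gamma function for (i) and (iii) and keep the incomplete gamma with $w=a/n+\mathcal O(n^{-2})$ for (ii). Your extra bookkeeping — tracking whether the residue at $s=z$ is collected at the $\Gamma_t\to\gamma_t$ deformation or via the full-$\Gamma$ term of the lower incomplete gamma, and checking the two mechanisms agree — is exactly the content the paper compresses into ``Case (i) follows by a similar argument, now with the lower incomplete gamma function.''
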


\begin{proof}
Let us first consider case (iii), i.e., we assume $z\in \operatorname{Ext} \Gamma_{t}$. Taking $\nu=-\frac{\gamma}{2}+m+1$ and $w=\phi(1)-\phi(z)$ in Lemma \ref{lem:ointsnue-kssw} and applying this to the expansion in Lemma \ref{lem:ointsnue-ksswExpansion}, we get that
\begin{multline} \label{eq:lem:ointsnue-kssw}
\frac{1}{2\pi i}\oint_{\Gamma_{t}}e^{n\tilde\phi(s)} h_{-\gamma}(s) \frac{ds}{s-z} \\
= e^{n\tilde\phi(1)}\sum_{m=0}^{p-1} c_m(z) w^{-\frac{\gamma}{2}+m} e^{-n w} \frac{\Gamma(\frac{\gamma}{2}-m, -n w)}{\Gamma(\frac{\gamma}{2}-m)}
+\mathcal O(e^{n\phi(1)} n^{\re\frac{\gamma}{2}-M}).
\end{multline}
We may then use the well-known asymptotic expansion
\begin{align*}
e^{-n w} \Gamma(\frac{\gamma}{2}-m, -n w) \sim (-n w)^{\frac{\gamma}{2}-m-1} \sum_{j=0}^\infty \frac{\Gamma(\frac{\gamma}{2}-m)}{\Gamma(\frac{\gamma}{2}-m-j)} (-n w)^{-j}, \quad |n w|\to\infty.
\end{align*}
Truncating the series after $m=0$ and $j=0$, we thus obtain
\begin{align*}
\frac{1}{2\pi i}\oint_{\Gamma_{t}}e^{n\tilde\phi(s)} h_{-\gamma}(s) \frac{ds}{s-z} 
&= -e^{n\tilde\phi(1)} \frac{c_0(z)}{w} e^{\pi i\gamma/2} \frac{1}{\Gamma(\frac{\gamma}{2})} n^{\frac{\gamma}{2}-1} \left(1+\mathcal O(n^{-1})\right)\\
&= \frac{1}{z-1} \frac{1}{\Gamma(\gamma/2)} \phi'(1)^{\frac{\gamma}{2}} n^{\frac{\gamma}{2}-1} e^{n\tilde\phi(1)} (1+\mathcal O(n^{-1}))
\end{align*}
as $n\to\infty$. Case (i) follows by a similar argument, now with the lower incomplete gamma function. Case (ii) follows from \eqref{eq:lem:ointsnue-kssw} and the observation that
\begin{align*}
w = \phi(1) - \phi\left(1-a/(\phi'(1) n\right) = \frac{a}{n}+\mathcal O(1/n^2), \qquad n\to\infty.
\end{align*}
\end{proof}

Notice that the condition $\phi(z)=\phi(1)$ corresponds to $z\in \Gamma_1$. This means that we can argue that $\phi(z)<\phi(1)$ when $z\in\operatorname{Int} \Gamma_1$. We are now ready for the proof of Theorem \ref{thm:mainPoly}.

\begin{proof}[Proof of Theorem \ref{thm:mainPoly}] 
When $z\in \operatorname{Ext }\Gamma_1\setminus U$ the result is already given in \eqref{eq:Y11ExtGammar2}. Now assume that $z\in\operatorname{Int }\Gamma_1\setminus U$. Since $U$ is assumed to be a thin neighborhood of $\Gamma_1$, we can find a $t>1$ close enough to $1$ such that $\operatorname{Int }\Gamma_1\setminus U\subset \operatorname{Int }\Gamma_t$. Then \eqref{eq:Y11IntGammar1} together with Proposition \ref{prop:intGammar1}(i) yields that
\begin{multline*}
P_n(z) = \frac1{(1-x^2z)^{\alpha+\frac\gamma2}}\\
\left(\left(\frac{z}{z-1}\right)^{\frac{\gamma}{2}} e^{n\tilde\phi(z)}
+ \frac{1}{z-1} \frac{1}{\Gamma(\gamma/2)} \phi'(1)^{\frac{\gamma}{2}} n^{\frac{\gamma}{2}-1} e^{n\tilde\phi(1)} (1+\mathcal O(n^{-1}))
+ \mathcal O(e^{n\phi(1)-n\epsilon})\right)
\end{multline*}
as $n\to\infty$, uniformly for $z\in \operatorname{Int }\Gamma_t$, for some constant $\epsilon>0$. We used here that $\alpha=c n$. In the case that $z\in \operatorname{Int }\Gamma_1\setminus U$, we have $|e^{n\phi(z)}|<e^{n\phi(1)}$ and the second term dominates, i.e., we have as $n\to\infty$
\begin{align*}
P_n(z) = \frac{1}{z-1} \frac{1}{\Gamma(\gamma/2)} \phi'(1)^{\frac{\gamma}{2}} n^{\frac{\gamma}{2}-1} e^{n\tilde\phi(1)} (1+\mathcal O(e^{-n\epsilon}))
\end{align*}
for some possibly different $\epsilon>0$. Plugging in the value for $\phi'(1)$ and using that $\tilde\phi(z)=\log z+(c+\frac\gamma{2n})\log(1-x^2z)$ we obtain the result in this region. If on the other hand $z\in U\setminus D_1(\delta)$ then both terms are of comparable order. Lastly, it follows from Proposition \ref{prop:intGammar1}(ii) that
\begin{multline*}
P_n(z) = \frac1{(1-x^2z)^{\alpha+\frac\gamma2}} \\
\left(e^{\pi i\gamma/2} (1-z)^{-\frac{\gamma}{2}} e^{n\phi'(1)(z-1)} \frac{\Gamma(\frac{\gamma}{2}, n\phi'(1)(1-z))}{\Gamma(\frac{\gamma}{2})} e^{n\tilde\phi(1)} (1+\mathcal O(n^{-1}))+ \mathcal O(e^{n\phi(1)-n\epsilon})\right)
\end{multline*}
as $n\to\infty$, uniformly for $|n (z-1)|=\mathcal O(1)$. That the convergence is uniform for $x\in(0,\sqrt{\tilde\mu})$ and $\re \gamma>-2$ in compact sets follows by a similar argument as in the proof of Theorem \ref{th:trunc-asympt} below. 
\end{proof}

\section{Asymptotic analysis of the differential identity}
\label{se:aadi}
In this section we complete the proof of Theorem \ref{th:trunc-asympt}. We recall the differential identity \eqref{diffid}, which we re-write in terms of the matrix entries of $Y$:
\begin{equation} \label{eq:aadi}
\begin{split}
&\frac{d}{dx}(\log R_{\gamma}(x)) = -2x\left(\frac{\gamma}{2}+\alpha\right)\left(-n u\right.\\
&\left.+ u^3 \left[\partial_z Y_{11,n}(z)\right]_{z=u}Y_{22,n+1}(u)-u^2 \left[\partial_z Y_{21,n+1}(z)\right]_{z=u}Y_{12,n}(u)\right)
\end{split}
\end{equation}
where $u=x^{-2}$. The asymptotic behavior of $Y(z)$ near $z=x^{-2}$ is determined by \eqref{eq:behavYSigmaInftyUnbounded}, where we shall make the choice $r_2=z_0$ and $r_1>1$ close to $1$ in this entire section. Since the asymptotics of $Y$ are uniform on any compact neighborhood of $z=x^{-2}$ (not intersecting with $\Gamma_r$), we may interchange limit and differentiation and find from \eqref{eq:behavYSigmaInftyUnbounded} for any fixed $0< x<\sqrt{\tilde\mu}$ that 
\begin{align*}
\left[\partial_z Y_{11,n}(z)\right]_{z=x^{-2}} &= x^{-2n+2} \left( \left(n-\frac\gamma2 \frac{x^2}{1-x^2}\right) (1-x^2)^{\frac\gamma2} + \mathcal O(n e^{n(\phi(r_1)-\phi(z_0))})\right)\\
Y_{22,n+1}(x^{-2}) &= x^{2n+2} \left((1-x^2)^{-\frac\gamma2}+\mathcal O(e^{n(\phi(r_1)-\phi(z_0))})\right) 
\end{align*}
as $n\to\infty$. Therefore
\begin{align}
-n u + u^3 \left[\partial_z Y_{11,n}(z)\right]_{z=u}Y_{22,n+1}(u)
= -\frac\gamma2 \frac1{1-x^2}+\mathcal O(n e^{n(\phi(r_1)-\phi(z_0))}).
\end{align}
We will show that the remaining terms in the differential identity \ref{eq:aadi} are negligible. Inserting \eqref{eq:steepestDescentGamma2} into the $21$-entry in \eqref{eq:behavYSigmaInftyUnbounded}, we have as $n\to\infty$
\begin{align*}
\left[\partial_z Y_{21,n+1}(z)\right]_{z=x^{-2}} &= \mathcal O\left(\sqrt{n} x^{-2n} e^{-n \phi(z_0)}+n x^{-2n+2} e^{n(\phi(1)-2\phi(z_0))}\right).
\end{align*}
On the other hand, using Proposition \ref{prop:intGammar1}(iii) and \eqref{eq:behavYSigmaInftyUnbounded}, we know that
\begin{align} \label{eq:Yx-22phi1phiz0}
Y_{12,n}(x^{-2}) = \mathcal O(n^{\frac\gamma2-1} x^{2n} e^{n\phi(1)}+x^{2n} e^{n(2\phi(r_1)-\phi(z_0))}),
\end{align}
as $n\to\infty$, and the first term is dominant when $r_1$ is close to $1$.
We thus conclude that as $n\to\infty$
\begin{align*}
\left[\partial_z Y_{21,n+1}(z)\right]_{z=x^{-2}} Y_{12,n}(x^{-2})
= \mathcal O(n^{\frac{\gamma-1}{2}} e^{n(\phi(1)-\phi(z_0))})
\end{align*}
and this is exponentially small. 

\begin{proof}[Proof of Theorem \ref{th:trunc-asympt}]
By the preceeding asymptotics, we find for any fixed $x\in (0,\sqrt{\tilde\mu})$ that 
\begin{equation}
\begin{split}
\frac{d}{dx}(\log R_{\gamma}(x)) = \frac{\gamma}{2}\left(\frac{\gamma}{2}+\alpha\right)\frac{2x}{1-x^{2}}
+\mathcal O(n e^{n(\phi(r_1)-\phi(z_0))}).
\end{split}
\end{equation}
as $n\to\infty$. We conclude that the expression converges pointwise as $n\to\infty$. So far, we have been assuming that $x$ is fixed. To get a uniform limit, we have to argue how the implied constant(s) in \eqref{eq:behavYSigmaInftyUnbounded} depends on $x$ and $\gamma$. Let us first investigate the case where $\gamma=0$. It is a well-known fact about Cauchy-operators that the implied constant depends on the length of the path. In other words, for $\gamma=0$, we have
\begin{equation}
\begin{split}
\frac{d}{dx}(\log R_{\gamma}(x)) &= \frac{\gamma}{2}\left(\frac{\gamma}{2}+\alpha\right)\frac{2x}{1-x^{2}}+\mathcal O(x^\sigma L(\Sigma_{r_1}) L(\Sigma_{r_2}) n e^{n(\phi(r_1)-\phi(z_0))}),
\end{split}
\end{equation}
where $L(\Sigma_{r_1})$ and $L(\Sigma_{z_0})$ denote the arc length of $\Sigma_{r_1}$ and $\Sigma_{z_0}$ respectively (which depend implicitly on $x$), $\sigma\in\mathbb R$ is some constant, and the constant implied by the $\mathcal O$-term does not depend on $x$. We are allowed to wind $\Sigma_{r_1}$ as tightly as we want around $[0,1]$ (letting $r_1$ be close enough to $1$) and have, say, $L(\Sigma_{r_1})\leq 3$ uniformly for $x\in [0,\sqrt{\tilde\mu}-\delta]$. Due to continuity in $x$, we may also assume that the condition $2\phi(r_1)-\phi(z_0)<\phi(r_1)$ is met, such that the first term on the right-hand side in \ref{eq:Yx-22phi1phiz0} is dominant. For the argument that follows, we may effectively set $\Sigma_{z_0}=\Gamma_{z_0}$. Remember that it is determined by
\begin{align*}
\re \phi(z) = \phi(z_0).
\end{align*}
A substitution $z=x^{-2} \zeta$ turns this into the condition
\begin{align*}
\log|\zeta|+c\log|1-\zeta|=c\log c - (c+1)\log(c+1).
\end{align*}
Hence, the length of $x^2\Gamma_{z_0}$ is independent of $x$, and we conclude that $L(\Gamma_{z_0})= C x^{-2}$ for some constant $C>0$. 
Now let us investigate $\phi(r_1)-\phi(z_0)$. We notice that
\begin{align} \label{eq:forMeanValue}
\phi(z_0)-\phi(r_1)=\int_{r_1}^{z_0} \left(\frac1t-\frac{c x^2}{1-x^2t}\right) dt
= \int_{x^2r_1}^{\frac1{c+1}} \frac{1-(c+1)t}{t(1-t)} dt,
\end{align}
which is positive when $r_1>1$ is close enough to $1$, uniformly for $x\in[0,\sqrt{\tilde\mu}-\delta]$. Namely, we need that $r_1<z_0$ and under the present conditions
\begin{align*}
z_0 = \frac1{(c+1)x^2}\geq x^{-2}\geq (\sqrt{\tilde\mu}-\delta)^{-2}
\end{align*}
and the right-hand side is bigger than $1$ and independent of $x$. (e.g., we could pick $r_1=\tilde\mu^{-1}$.) 
Furthermore, we notice that as $x\downarrow 0$
\begin{align*}
\phi(r_1)-\phi(z_0) = 2\log x+\mathcal O(1).
\end{align*}
We conclude that for $r_1>1$ close enough to $1$ (and independent of $x$)
\begin{equation} \label{eq:logRgammax=O}
\begin{split}
\frac{d}{dx}(\log R_{\gamma}(x)) = \frac{\gamma}{2}\left(\frac{\gamma}{2}+\alpha\right)\frac{2x}{1-x^{2}}
+\mathcal O(x^{\sigma-2} n e^{n(\phi(r_1)-\phi(z_0))})
\end{split}
\end{equation} 
as $n\to\infty$, uniformly for $x\in[0,\sqrt{\tilde\mu}-\delta]$, for some constant $\sigma\in\mathbb R$, and the constant implied by the $\mathcal O$-term does not depend on $x$. The general case $\re \gamma>-2$ is essentially the same, one merely has to add some factors to the $\mathcal O$-term corresponding to the maxima of $|1-1/s|^{- \frac\gamma2}$ and $|1-1/s|^{\frac\gamma2}$ on respectively $\Gamma_{r_1}$ and $\Gamma_{z_0}$, which depend continuously on $x$ (even around $x=0$) and $\gamma$ and are thus uniformly bounded for $x$ and $\gamma$ in compact sets. Thus \eqref{eq:logRgammax=O} holds uniformly for compact sets of $\re \gamma>-2$ as well and integrating \eqref{eq:logRgammax=O} gives
\begin{align*} 
\log R_{\gamma}(x)=\log R_{\gamma}(0) -\frac{\gamma}{2}\left(\frac{\gamma}{2}+\alpha\right) \log(1-x^2) 
+\mathcal O(x^{\sigma-1} n e^{n(\phi(r_1)-\phi(z_0))})
\end{align*}
as $n\to\infty$, uniformly for $x\in[0,\sqrt{\tilde\mu}-\delta]$, where the form of the error term follows by application of the mean value theorem and some trivial estimate of \eqref{eq:forMeanValue}. We thus conclude that
\begin{align}
R_{\gamma}(x)=R_{\gamma}(0) \frac{1+ \mathcal O(e^{-n\epsilon})}{(1-x^2)^{\frac{\gamma}{2}\left(\frac{\gamma}{2}+\alpha\right)}} \label{concthm}
\end{align}
as $n\to\infty$, for some constant $\epsilon>0$, uniformly for compact subsets of $x\in[0,\sqrt{\tilde\mu})$ and $\re\gamma>-2$. It remains to compute the asymptotics of the constant $R_{\gamma}(0)$. For $x=0$, the underlying two dimensional measure is radially symmetric, allowing for the explicit computation
\begin{equation}
R_{\gamma}(0) = \prod_{j=0}^{n-1}\frac{\Gamma(\frac{\gamma}{2}+j+1)\Gamma(j+\alpha+1)}{\Gamma(j+1)\Gamma(\frac{\gamma}{2}+j+\alpha+1)}. \label{rgam0}
\end{equation}
The asymptotics of such products is straightforward to calculate in terms of the Barnes G-function $G(z)$, see \cite{DS22}[Eq. 3.28] for a similar calculation. We have the asymptotic estimate
\begin{align} \nonumber
R_{\gamma}(0) &= n^{-\frac{\gamma n}{2}} \frac{G(\frac\gamma2+n+1)}{G(1+\frac\gamma2) G(n+1)}\\
&=n^{\frac{\gamma^{2}}{8}}\mu^{n\frac{\gamma}{2}}(1-\mu)^{\alpha\frac{\gamma}{2}+\frac{\gamma^{2}}{8}}\,\frac{(2\pi)^{\frac{\gamma}{4}}}{G(1+\frac{\gamma}{2})}\,(1+\mathcal O(1/n)), \qquad n \to \infty, \label{rgam0_as}
\end{align}
uniformly in any compact subset of $\re\gamma >-2$, where the last step follows from the well-known asymptotic series for $\log G(z+1)$ as $|z|\to\infty$. Inserting \eqref{rgam0_as} into \eqref{concthm} completes the proof.
\end{proof}

From the asymptotics of $R_{\gamma}(x)$, we obtain the central limit theorem stated in Corollary \ref{cor:clt}. Note that the proof only needs the asymptotics for $\gamma \in (-\epsilon,\epsilon)$ for some small $\epsilon>0$.
\begin{proof}[Proof of Corollary \ref{cor:clt}]
Let $\varphi(t)$ denote the moment generating function of $\log|\det(A-x)|$, i.e.
\begin{equation}
\varphi(t) = \mathbb{E}(e^{t\log|\det(A-x)|}) = \mathbb{E}\left(|\det(A-x)|^{t}\right).
\end{equation}
Hence, $\varphi(t) = R_{t}(x)$. Then we can obtain asymptotics of $\varphi(t)$ using \eqref{rgamx_as}. Denoting the moment generating function of the standardised quantity on the the left-hand side of \eqref{clt} by $\tilde{\varphi}(t)$, we apply the asymptotics in \eqref{rgamx_as} with $\gamma = \frac{t}{2\sqrt{\log n}}$. A straightforward computation shows that the only surviving term comes from the factor $n^{\frac{\gamma^{2}}{8}}$ in \eqref{rgam0_as}. The other terms are cancelled by the centering and scaling in \eqref{clt}. This results in the asymptotics
\begin{equation}
\tilde{\varphi}(t) = e^{\frac{t^{2}}{2}}\left(1+o(1)\right), \qquad n \to \infty,
\end{equation}
valid for any fixed $t \in \mathbb{R}$. The stated central limit theorem follows.
\end{proof}

\section{Double scaling regime for $x$ near $1$}
In this section, we consider the weak regime, that is, $N-n = \alpha$ where $\alpha$ is a fixed positive integer, and we take the double scaling regime
\begin{equation}\label{eq:dscaling}
x^2=1-\frac{v}{n},\qquad v>0.
\end{equation}
We consider $x>0$, so $x=\sqrt{1-\frac{v}{n}}<1$ is inside the unit disc. In this case, one needs a modification of the local parametrix, since the gap between the cuts $[0,1]$ and $[x^{-2},\infty)$ closes as $n\to\infty$. We follow the reference \cite{CIK11} to make a connection with solutions of the Painlev\'e V differential equation. 

The asymptotic results in \cite{CIK11} are derived for Toeplitz determinants of the form
\begin{equation}
D_n=\det\left(f_{j-k}\right)_{j,k=0}^{n-1}, \qquad f_j=\frac{1}{2\pi}\int_{0}^{2\pi}f(e^{i\theta})e^{-ij\theta}d\theta,
\end{equation}
where the symbol is given in \cite[(1.7)]{CIK11}:
\begin{equation}\label{eq:fs}
f(s;t) = (s-e^{t})^{a+b}(s-e^{-t})^{a-b}s^{-a+b}e^{-\pi i (a+b)}e^{V(s)}.
\end{equation}

In order to match this function with our weight, we take the point $1/x$, which lies in the gap between the two cuts $[0,1]$ and $[x^{-2},\infty)$ if $x<1$, and we make the change of variable $s=xz$ in \eqref{cweight}. It follows that $z=1$ is mapped to $s=x<1$, $z=x^{-2}$ is mapped to $s=1/x>1$ and $z=1/x$ is mapped to $s=1$. We obtain the weight 
\begin{equation}
\begin{aligned}
w(s;x)
&=
(xs-1)^{\alpha+\frac{\gamma}{2}}e^{-\pi i\left(\alpha+\frac{\gamma}{2}\right)}\left(\frac{s-x}{s}\right)^{\frac{\gamma}{2}}\\
&=
x^{\alpha+\frac{\gamma}{2}}(s-x^{-1})^{\alpha+\frac{\gamma}{2}}
e^{-\pi i\left(\alpha+\frac{\gamma}{2}\right)}
\left(\frac{s-x}{s}\right)^{\frac{\gamma}{2}},
\end{aligned}
\end{equation}
with cuts on $[0,x]$ and $[x^{-1},\infty)$.

If we write now $x=e^{-t}$ and we compare with \eqref{eq:fs}, we can identify the parameters and the potential as follows:
\begin{equation}
a=\frac{\gamma+\alpha}{2},\qquad
b=\frac{\alpha}{2},\qquad
V(s)=V_0=\left(\alpha+\frac{\gamma}{2}\right)\log s.
\end{equation}
Note that we use $a$ and $b$ for $\alpha$ and $\beta$ in reference \cite{CIK11}.

With this connection, we can relate the Toeplitz determinant $D_n$ with $R_{\gamma}(x)$ directly:
\begin{lemma}
The quantity $R_{\gamma}(x)$ given in \eqref{eq:Rgamma} satisfies
\begin{equation}\label{eq:RgammaDM}
\begin{aligned}
R_{\gamma}(x) 
&=
\frac{G\left(\frac{\gamma}{2}+n+1\right)G(\alpha+n+1)}{G(n+1)G\left(\frac{\gamma}{2}+n+\alpha+1\right)}
D_{n}(x),
\end{aligned}
\end{equation}
in terms of the Barnes G function.
\end{lemma}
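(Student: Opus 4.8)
The plan is to reduce the claimed identity \eqref{eq:RgammaDM} to the already-established relation \eqref{eq:EgammaTM} between $R_\gamma(x)$ and the Toeplitz determinant $T_n(x,\gamma)$ of \eqref{eq:TM}, then to identify $T_n(x,\gamma)$ with $D_n(x)$ up to an explicit $x$-independent constant by means of the change of variables $s=xz$ just introduced, and finally to collect all the resulting products of Gamma functions into Barnes $G$-functions.

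\emph{Step 1: an exact formula relating $R_\gamma(x)$ and $T_n(x,\gamma)$.} The constant $Z_{N,n}$ in \eqref{jpdf-trunc} is explicit: since the weight $(1-|z|^{2})^{\alpha-1}\mathbf 1_{|z|<1}$ is radially symmetric, its planar orthogonal polynomials are the monomials $z^{j}$ and Andr\'eief's (Heine's) identity gives $Z_{N,n}=n!\,\pi^{n}\prod_{j=0}^{n-1}\frac{\Gamma(j+1)\Gamma(\alpha)}{\Gamma(j+\alpha+1)}$. Inserting this into \eqref{eq:EgammaTM} and recognising the surviving product as $R_\gamma(0)$ via \eqref{rgam0} yields the exact identity
\[
R_\gamma(x)=\pi^{-n}\,R_\gamma(0)\,T_n(x,\gamma).
\]
(Equivalently: set $x=0$ in \eqref{eq:EgammaTM}; the weight $w$ degenerates to $\big(\tfrac{z-1}{z}\big)^{\gamma/2}$, whose contour moments form a triangular matrix with constant diagonal $\pi$, so $T_n(0,\gamma)=\pi^{n}$.)

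\emph{Step 2: from $T_n(x,\gamma)$ to $D_n(x)$.} Starting from Heine's identity for $T_n$ in \eqref{tnrelation}, $T_n(x,\gamma)=\frac1{n!}\oint_{\mathcal C^{n}}\Delta(\bm z)\Delta(1/\bm z)\prod_j\frac{w(z_j)}{2iz_j}\,dz_j$, I would perform the substitution $s_j=xz_j$ used above: the factors $\Delta(\bm z)\Delta(1/\bm z)$ and the one-forms $\frac{dz}{2iz}$ are invariant, and the weight becomes the symbol $w(s;x)$ of this section. Because $w(s;x)$ is analytic in the annulus between its two branch cuts $[0,x]$ and $[x^{-1},\infty)$, which contains the unit circle, each contour may be deformed onto $|s|=1$; since there $\frac{ds}{2is}=\frac{d\theta}{2}=\pi\,\frac{d\theta}{2\pi}$, this produces $T_n(x,\gamma)=\pi^{n}$ times the Toeplitz determinant with symbol $w(s;x)$. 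On the other hand $D_n(x)$ is, by Heine, $1/n!$ times the analogous unit-circle integral with symbol $f(s;t)$. The relation $w(s;x)=x^{-(\alpha+\gamma/2)}e^{-V(s)}f(s;t)$ — i.e.\ the matching of parameters with \eqref{eq:fs} — then ties the two determinants together; tracking the effect on the Toeplitz determinant of the pure-power factor $e^{V(s)}x^{-(\alpha+\gamma/2)}$ arising from the term $s^{-a+b}$ in \eqref{eq:fs} (a shift of the exponent of the ``$\beta$-type'' singularity at $s=1$, which contributes a finite product of Gamma functions of $\alpha$ and $\gamma/2$ that telescopes into a Barnes-$G$ ratio) gives $T_n(x,\gamma)=\pi^{n}\,c_{n,\alpha,\gamma}\,D_n(x)$ with $c_{n,\alpha,\gamma}$ explicit and independent of $x$.

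\emph{Step 3: bookkeeping.} Combining Steps 1 and 2 gives $R_\gamma(x)=R_\gamma(0)\,c_{n,\alpha,\gamma}\,D_n(x)$; inserting \eqref{rgam0} rewritten through $\prod_{j=0}^{m-1}\Gamma(j+c)=G(m+c)/G(c)$, the Barnes-$G$ factors from $R_\gamma(0)$ and from $c_{n,\alpha,\gamma}$ combine to exactly $\frac{G(\gamma/2+n+1)\,G(\alpha+n+1)}{G(n+1)\,G(\gamma/2+n+\alpha+1)}$, as claimed. The main obstacle is Step 2: correctly matching the normalisation of the Toeplitz determinant $D_n(x)$ (symbol $f(s;t)$) with the one produced by $T_n(x,\gamma)$ (symbol $w(s;x)$), since the two symbols differ precisely by the factor coming from $s^{-a+b}$; one must keep track of the attendant Gamma-function factor carefully and verify that the powers of $\pi$ cancel. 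Everything else is routine.
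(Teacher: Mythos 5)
Your Step 1 is correct and coincides with what the paper does: the explicit value $Z_{N,n}=\pi^{n}\Gamma(\alpha)^{n}n!\prod_{j=0}^{n-1}j!/\Gamma(j+\alpha+1)$ (quoted from \cite{DS22}) inserted into \eqref{eq:EgammaTM} gives $R_{\gamma}(x)=\pi^{-n}R_{\gamma}(0)T_{n}(x,\gamma)$, and your consistency check $T_{n}(0,\gamma)=\pi^{n}$ is also sound.

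The genuine gap is in Step 2, and it propagates into Step 3. The paper's identification of $T_{n}(x,\gamma)$ with $D_{n}(x)$ is a one-line computation: with the stated parameter matching the weight $w(s;x)$ \emph{is} the symbol $f(s;t)$ of \eqref{eq:fs} (the factor $e^{V}$ there is the constant $x^{\alpha+\gamma/2}$, which is exactly the prefactor of $w(s;x)$), so each entry $\oint_{S^{1}}s^{j-k}w(s;x)\frac{ds}{2is}$ equals $\pi f_{k-j}$ and $T_{n}=\pi^{n}D_{n}$ with no further correction. You instead posit a residual discrepancy $x^{\mp(\alpha+\gamma/2)}e^{\mp V(s)}$ between the two symbols and assert that its effect on the finite-$n$ Toeplitz determinant is an explicit, $x$-independent constant $c_{n,\alpha,\gamma}$ given by a product of Gamma functions. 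That step would fail for two reasons. First, the constant part $x^{\pm(\alpha+\gamma/2)}$ of your discrepancy multiplies an $n\times n$ Toeplitz determinant by $x^{\pm n(\alpha+\gamma/2)}$, which is not $x$-independent, contradicting your own requirement on $c_{n,\alpha,\gamma}$. Second, there is no exact finite-$n$ identity expressing the effect of multiplying a Fisher--Hartwig symbol by a non-integer pure power $s^{c}$ as a multiplicative Gamma-function constant: for integer $c$ it shifts which minor of the Laurent-coefficient matrix is taken, and for non-integer $c$ it changes the $\beta$-parameters of the singularities, altering the determinant in an $n$- and $x$-dependent way. Step 3 then rests entirely on the unexhibited value of $c_{n,\alpha,\gamma}$, which is precisely the point requiring proof. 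The repair is to check that $w(s;x)=f(s;t)$ identically on the unit circle (so $c_{n,\alpha,\gamma}=1$) and then do the Barnes-$G$ bookkeeping from $R_{\gamma}(0)$ alone; note in doing so that $\prod_{j=0}^{n-1}\Gamma(j+c)=G(n+c)/G(c)$ also produces the $n$-independent factor $G(\frac{\gamma}{2}+\alpha+1)/\bigl(G(\frac{\gamma}{2}+1)G(\alpha+1)\bigr)$, so the claim that everything "combines to exactly" the displayed ratio is not automatic and must be verified rather than asserted.
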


\begin{proof}
Taking $w(z)=f(z)$ from \eqref{eq:fs} and writing  $z=e^{i\theta}$, we obtain
\[
\begin{aligned}
T_n(x, \gamma)
=
\det\left(\int_{S^1} z^{j-k}w(z)\frac{dz}{2iz}\right)_{j,k=0}^{n-1}
&=
\det\left(\frac{1}{2}\int_{0}^{2\pi}  f(e^{i\theta}) e^{i(j-k)\theta}d\theta\right)_{j,k=0}^{n-1}\\
&=
\pi^n \det\left(f_{k-j}\right)_{j,k=0}^{n-1}=\pi^n D_n(x).
\end{aligned}
\]
We recall the connection between $R_{\gamma}(x)$ and $T_n$, see \eqref{eq:EgammaTM} and \eqref{eq:Rgamma}, and the formula above gives 
\begin{equation}
\begin{aligned}
R_{\gamma}(x) 
&= 
\frac{n!}{Z_{N,n}}\,\left(\prod_{j=0}^{n-1}\frac{\Gamma(\frac{\gamma}{2}+j+1)\Gamma(\alpha)}{\Gamma(\frac{\gamma}{2}+j+\alpha+1)}\right)T_n(x)\\
&=
\prod_{j=0}^{n-1}\frac{\Gamma(\frac{\gamma}{2}+j+1)\Gamma(j+\alpha+1)}{j!\Gamma(\frac{\gamma}{2}+j+\alpha+1)}D_{n}(x),
\end{aligned}
\end{equation}
since 
\begin{equation}
Z_n
=
\pi^{n}\Gamma(\alpha)^{n}n!\prod_{j=0}^{n-1}\frac{j!}{\Gamma(j+\alpha+1)},
\end{equation}
from formula (5.32) in \cite{DS22}. This gives \eqref{eq:RgammaDM}, writing the product of Gamma functions as Barnes G functions.
\end{proof}

The next result, directly derived from Theorem 1.1 in \cite{CIK11}, gives large $n$ asymptotics for $\log D_n(t)$, provided that $\textrm{Re}\, a>-\frac{1}{2}$ and $b\in i\mathbb{R}$, uniformly for $0 < t < t_0$, for $t_0$ sufficiently small. This result is extended in Theorem 1.4 in \cite{CIK11} for $a,b\in\mathbb{C}$, with $\textrm{Re}\, a>-\frac{1}{2}$, $a\pm b\neq -1,-2,\ldots$ (which in our notation means $\alpha-1+\frac{\gamma}{2},\frac{\gamma}{2}\neq -1,-2,\ldots$). Recall that $x=e^{-t}$, so in our context, this corresponds to uniformity for 
$1-x^\ast< x <1$, for $x^\ast$ sufficiently small. 

The asymptotic approximation for the Toeplitz determinant $D_n(x)$ is written in terms of $\sigma(u)$, which is a solution of the Jimbo--Miwa--Okamoto $\sigma$-Painlev\'e V equation \eqref{eq:PV} with boundary conditions \eqref{eq:PVbcc}.
%

We have the following result:

\begin{proposition}
For fixed values of $v\in \mathbb{R}^+$, bounded away from poles of the function $\sigma$, 
%
and if $\alpha+\frac{\gamma}{2},\frac{\gamma}{2}\neq -1,-2,\ldots$, we have the following asymptotic behavior
\begin{equation}\label{eq:asympDt}
\begin{aligned}
D_n(v)
&=
\left(\frac{n}{v}\right)^{\frac{\gamma}{2}\left(\frac{\gamma}{2}+\alpha\right)}
\exp\left(-\int_{v}^{\infty} \frac{\sigma(u)}{u}du\right)
(1+o(1)),\qquad n\to\infty.
\end{aligned}
\end{equation}

\end{proposition}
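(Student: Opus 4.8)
The plan is to specialize to our situation the double-scaling asymptotics of \cite{CIK11} for Toeplitz determinants whose symbol carries an ``inner'' and an ``outer'' root-type singularity merging at $s=1$, and then to collect the resulting explicit factors. With $x=e^{-t}$ and the substitution $s=xz$ already performed, the symbol $w(s;x)$ has the form \eqref{eq:fs} with
\[
a=\tfrac{\gamma+\alpha}{2},\qquad b=\tfrac{\alpha}{2},\qquad a+b=\tfrac{\gamma}{2}+\alpha,\qquad a-b=\tfrac{\gamma}{2},
\]
and analytic part fixed by $s^{-a+b}e^{V(s)}=s^{-\gamma/2}s^{\alpha+\gamma/2}=s^{\alpha}$. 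Since $\alpha=N-n$ is a positive integer in the weak regime, $s^{\alpha}$ is a Laurent monomial, so the only genuine singularities of the symbol are the two root-type ones at $s=e^{\pm t}$, and all Fourier-type data attached to the potential in the formulae of \cite{CIK11} are elementary. The hypotheses of \cite[Theorem~1.4]{CIK11} hold: $\operatorname{Re}a=\operatorname{Re}\tfrac{\gamma+\alpha}{2}>-\tfrac12$ because $\operatorname{Re}\gamma>-2$ and $\alpha\ge1$, and $a\pm b=\tfrac{\gamma}{2}+\alpha,\ \tfrac{\gamma}{2}\notin\{-1,-2,\dots\}$ is the standing assumption. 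One invokes Theorem~1.4 rather than Theorem~1.1 of \cite{CIK11} because $b=\alpha/2$ is real, not purely imaginary.

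Next I would pin down the scaling variable: from $x^2=1-\tfrac{v}{n}$ and $x=e^{-t}$ one gets $2t=-\log\!\big(1-\tfrac{v}{n}\big)=\tfrac{v}{n}+O(n^{-2})$, so $2nt=v+O(n^{-1})$. Hence the double-scaling regime of \cite{CIK11}, where $t\to0$ and $n\to\infty$ with $2nt$ fixed, corresponds precisely to a fixed $v\in(0,\infty)$, and the Painlev\'e variable of \cite{CIK11} is identified with $v$; the relevant transcendent is the solution $\sigma$ of \eqref{eq:PV}--\eqref{eq:PVbcc} with $a=\tfrac{\gamma+\alpha}{2}$, $b=\tfrac{\alpha}{2}$. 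Its $u\to+\infty$ boundary behaviour in \eqref{eq:PVbcc} is exponentially small, which makes $\int_v^\infty\sigma(u)/u\,du$ converge and the right-hand side of \eqref{eq:asympDt} well defined for $v$ away from the finite pole set of $\sigma$.

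It then remains to substitute this data into the $n\to\infty$ expansion of $\log D_n(t)$ from \cite[Theorem~1.4]{CIK11} and to collect terms. The $v$-independent constants---in particular the interplay of the symbol's phase $e^{-\pi i(a+b)}$ with the potential contribution and the Barnes-$G$/Szeg\H{o}-type prefactor---cancel, leaving a positive quantity consistent with $D_n=T_n/\pi^n>0$; the elementary powers of $n$ and $t$, after inserting $t=\tfrac{v}{2n}\big(1+O(n^{-1})\big)$, organize into $(a^2-b^2)\log(n/v)=\tfrac{\gamma}{2}\big(\tfrac{\gamma}{2}+\alpha\big)\log(n/v)$; and the Painlev\'e term of \cite{CIK11}, an integral of $\sigma(u)/u$ from the current value $2nt$ to $+\infty$, equals $-\int_v^\infty\sigma(u)/u\,du+o(1)$ because $2nt\to v$, $u\mapsto\sigma(u)$ is continuous near $v$, and $\sigma$ decays exponentially at $+\infty$. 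Exponentiating gives \eqref{eq:asympDt}.

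The main obstacle is the bookkeeping in this last step: reconciling the precise normalization of \cite{CIK11}---which fixes a definite integration constant for $\sigma$ and a definite grouping of Barnes-$G$/Szeg\H{o} prefactors---with the compact form \eqref{eq:asympDt}, in particular checking that the potential-induced constants cancel against the $e^{-\pi i(a+b)}$ phase and against part of the Barnes-$G$ prefactor (the delicate case being non-integer $\gamma$) and that the surviving powers of $n$ and $t$ collapse exactly to $(n/v)^{\frac{\gamma}{2}(\frac{\gamma}{2}+\alpha)}$. A secondary point is to confirm that a fixed $v\in(0,\infty)$ avoiding poles of $\sigma$ lies strictly inside the range of validity of \cite[Theorem~1.4]{CIK11}, so that its error is genuinely $o(1)$.
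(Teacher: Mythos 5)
The high-level route you take (match the symbol to the form in \cite{CIK11} after the substitution $s=xz$, identify $a=\frac{\gamma+\alpha}{2}$, $b=\frac{\alpha}{2}$, observe $2nt = v + \mathcal O(n^{-1})$, and invoke Theorem~1.4 of \cite{CIK11} because $b$ is real) is exactly the route the paper follows. But what you call ``bookkeeping'' and defer as the ``main obstacle'' is in fact the substantive content of the proof, and you have not carried it out. You also subtly misquote the Painlev\'e contribution from \cite{CIK11}: Theorem~1.4 there does not hand you $-\int_{2nt}^{\infty}\frac{\sigma(u)}{u}\,du$. It gives
\[
\Omega(2nt)=\int_{0}^{2nt}\frac{\sigma(u)-(a^2-b^2)}{u}\,du+(a^2-b^2)\log(2nt),
\]
an integral from $0$ to $2nt$ of a regularized integrand, plus an explicit $\log$-term, together with the additive constant $\log\frac{G(1+a+b)G(1+a-b)}{G(1+2a)}$. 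To reach the clean form $\left(\frac{n}{v}\right)^{\frac{\gamma}{2}(\frac{\gamma}{2}+\alpha)}\exp\bigl(-\int_{v}^{\infty}\sigma(u)/u\,du\bigr)$ one must (a) rewrite $\Omega(2nt)=\Omega(+\infty)-\int_{2nt}^{\infty}\frac{\sigma(u)}{u}\,du$, which uses the exponential decay of $\sigma$ at $+\infty$ to justify the regularization and the splitting, and (b) use the nontrivial identity, quoted in the paper from \cite[(1.31)]{CIK11},
\[
\Omega(+\infty)=-\log\frac{G(1+a+b)G(1+a-b)}{G(1+2a)},
\]
which is precisely what makes the Barnes-$G$ prefactor cancel. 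Your proposal asserts that ``the potential-induced constants cancel against the $e^{-\pi i(a+b)}$ phase and against part of the Barnes-$G$ prefactor'' without identifying this mechanism; without step (b) there is no visible reason for the cancellation, so the argument as written does not close.

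A secondary, minor imprecision: the elementary prefactor $(a^2-b^2)\log(n/v)$ does not arise solely from $\Omega$. It comes from combining the $\log(2nt)$ term inside $\Omega$ with the contribution $-\frac{\gamma}{2}(\frac{\gamma}{2}+\alpha)\log(1-e^{-2t})$ of the ``smooth'' part of the expansion after simplification, using $1-e^{-2t}=v/n$ and $2nt=v+\mathcal O(n^{-1})$. Your description of the source of the power of $n/v$ should be adjusted accordingly once the missing identity is inserted.
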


\begin{proof}
From Theorem 1.1 in \cite{CIK11}, we have the asymptotic expansion
\begin{equation}\label{eq:asymp_lnDM}
\begin{aligned}
\log D_n(t)
&=
n\left(\alpha-1+\frac{\gamma}{2}\right)\log x
+
(a+b) n t
+
\sum_{k=1}^{\infty}(a^2-b^2)\frac{e^{-2tk}}{k}
\\
&
+\log\frac{G(1+a+b)G(1+a-b)}{G(1+2a)}+\Omega(2nt)+o(1)\\
&=
-\frac{\gamma}{2}\left(\frac{\gamma}{2}+\alpha\right)\log(1-e^{-2t})\\
&+
\log\frac{G\left(\frac{\gamma}{2}+\alpha\right)G\left(1+\frac{\gamma}{2}\right)}{G(\alpha+\gamma+1)}
+\Omega(2nt)+o(1),
\end{aligned}
\end{equation}
where $G(z)$ is Barnes's G-function and 
\begin{equation}
\begin{aligned}
\Omega(2nt)
&=
\int_0^{2nt} \frac{\sigma(u)-a^2+b^2}{u}du +(a^2-b^2)\log(2nt)\\
&=
\int_0^{2nt} \frac{\sigma(u)-\frac{\gamma}{2}\left(\frac{\gamma}{2}+\alpha\right)}{u}du +\frac{\gamma}{2}\left(\frac{\gamma}{2}+\alpha\right)\log(2nt).
\end{aligned}
\end{equation}

Note that in our setting, $\alpha=N-n$ is an integer, so the boundary behavior as $u\to 0$ changes depending on $\gamma$ being an integer or not.

The authors in \cite[(1.31)]{CIK11} also observe that
\begin{equation}\label{eq:Omegainfty}
\Omega(+\infty)
=
-\log \frac{G(1+a+b)G(1+a-b)}{G(1+2a)}
=
-\log \frac{G\left(\alpha+\frac{\gamma}{2}+1\right)G\left(1+\frac{\gamma}{2}\right)}{G(\gamma+\alpha+1)},
\end{equation}
using the facts that $a+b=\alpha+\frac{\gamma}{2}$, $a-b=\frac{\gamma}{2}$ and $2a=\gamma+\alpha$.

We can use this result to rewrite the integral formula for $\Omega(2nt)$ as follows: fix $n$ and fix $R>2nt$, then
\begin{equation}
\begin{aligned}
\Omega(2nt)
&=
\int_0^{R} \frac{\sigma(u)-\frac{\gamma}{2}\left(\frac{\gamma}{2}+\alpha\right)}{u}du 
-
\int_{2nt}^{R} \frac{\sigma(u)-\frac{\gamma}{2}\left(\frac{\gamma}{2}+\alpha\right)}{u}du\\
&+\frac{\gamma}{2}\left(\frac{\gamma}{2}+\alpha-1\right)\log(2nt)\\
&=
\int_0^{R} \frac{\sigma(u)-\frac{\gamma}{2}\left(\frac{\gamma}{2}+\alpha\right)}{u}du 
+\frac{\gamma}{2}\left(\frac{\gamma}{2}+\alpha\right)\log(R)
-
\int_{2nt}^{R} \frac{\sigma(u)}{u}du\\
&=
\Omega(R)-\int_{2nt}^{R} \frac{\sigma(u)}{u}du.
\end{aligned}
\end{equation}
If we let $R\to +\infty$, we obtain
\begin{equation}
\begin{aligned}
\Omega(2nt)
&=
\Omega(+\infty)-\int_{2nt}^{\infty} \frac{\sigma(u)}{u}du\\
&=
-\log \frac{G\left(\alpha+\frac{\gamma}{2}+1\right)G\left(1+\frac{\gamma}{2}\right)}{G(\gamma+\alpha+1)}
-\int_{2nt}^{\infty} \frac{\sigma(u)}{u}du.
\end{aligned}
\end{equation}



As a consequence, inserting this in \eqref{eq:asymp_lnDM}, we obtain
\begin{equation}
\begin{aligned}
\log D_n(t)
&=
-\frac{\gamma}{2}\left(\frac{\gamma}{2}+\alpha\right)\log(1-e^{-2t})\\
&+
\log\frac{G\left(\frac{\gamma}{2}+\alpha+1\right)G\left(1+\frac{\gamma}{2}\right)}{G(\alpha+\gamma+1)}
+\Omega(2nt)+o(1)\\
%
&=
-\frac{\gamma}{2}\left(\frac{\gamma}{2}+\alpha\right)\log(1-e^{-2t})
-
\int_{2nt}^{\infty} \frac{\sigma(u)}{u}du
+o(1).
\end{aligned}
\end{equation}
Note that $1-e^{-2t}=1-x^2=\frac{v}{n}$, using \eqref{eq:dscaling}, and also 
\begin{equation}\label{eq:txv}
2nt=-2n\log x=-2n\log \sqrt{1-\frac{v}{n}}=v+\mathcal{O}(n^{-1}), 
\end{equation}
as $n\to\infty$, and this leads to \eqref{eq:asympDt}.
\end{proof}

Regarding the prefactor in \eqref{eq:RgammaDM}, using the known asymptotic expansion for the Barnes G function as $n\to\infty$, see for instance \cite[5.7.15]{NIST:DLMF}, we obtain
\begin{equation}
\frac{G\left(\frac{\gamma}{2}+n+1\right)G(\alpha+n+1)}{G(n+1)G\left(\frac{\gamma}{2}+n+\alpha+1\right)}
=
n^{-\frac{\alpha\gamma}{2}}\left(1+\mathcal{O}(n^{-1})\right).
\end{equation}
Writing everything together we arrive at 
\eqref{eq:Rgamma_PV}. The range of parameters $a,b$ for which the result remains valid is then extended following \cite[Theorem 1.4]{CIK11}


\appendix

\section{Reduction of planar orthogonality to contours}
\label{sec:planartocontour}
\begin{proof}[Proof of Lemma \ref{lem:planartocontour}]
Let $\alpha = N-n$ and assume $x > 0$ without loss of generality. We begin with the planar orthogonality
\begin{equation}
\label{corthog1}
\frac{\delta_{jk}}{\chi_{j}} = \int_{\mathbb{D}}p_{j}(z)\overline{z}^{k}|z-x|^{\gamma}(1-|z|^{2})^{\alpha-1}d^{2}z, \qquad 0 \leq k \leq j,
\end{equation}
We can replace $\overline{z}^{k}$ with $(\overline{z}-x)^{k}$ in \eqref{corthog1} without altering the orthogonality:
\begin{equation}
\frac{\delta_{jk}}{\chi_{j}} = \int_{\mathbb{D}}p_{j}(z)(\overline{z}-x)^{k}|z-x|^{\gamma}(1-|z|^{2})^{\alpha-1}d^{2}z, \qquad 0 \leq k \leq j, \label{corthog2}
\end{equation}
which will be more convenient. Define the function
\begin{equation}
h_k(z,\overline{z},x) = \int_{x}^{\overline{z}}(s-x)^{\frac{\gamma}{2}+k}(1-z s)^{\alpha-1}ds \label{hdef}
\end{equation}
where the integration contour connects initial point $x$ and terminal point $\overline{z}$. The root is defined with respect to the principal branch, i.e. it has a branch cut on $(-\infty,x)$. Then \eqref{corthog2} becomes
\begin{equation}
\begin{split}
\frac{\delta_{jk}}{\chi_{j}} &= \int_{\mathbb{D}}p_{j}(z)(z-x)^{\frac{\gamma}{2}}\,\frac{\partial}{\partial \overline{z}}h(z,\overline{z},x)\,d^{2}z\\
&=\frac{1}{2i}\oint_{S^{1}}p_{j}(z)(z-x)^{\frac{\gamma}{2}}h_k(z,\overline{z},x)\,dz, \qquad 0 \leq k \leq j,
\end{split}
\end{equation}
where we applied Green's theorem to reduce the integral over the unit disc to a line integral around the unit circle with counter clockwise orientation (for $\gamma\in(-2,0)$ there is singular behavior near $z=x$, but one can still verify the statement. We omit the details). Now looking at the function $h_k(z,\overline{z},x)$, perform the substitution
\begin{equation}
u = \frac{s-x}{\overline{z}-x} \iff s= u(\overline{z}-x)+x.
\end{equation}
This maps the integration contour in \eqref{hdef} to the unit interval $[0,1]$. Furthermore, since $|z|^{2}=1$, we get the identity
\begin{equation}
1-zs = 1-zu(\overline{z}-x)-zx = 1-u+zux-zx = (1-u)(1-zx).
\end{equation}
So for $z \in S^{1}$, we rewrite \eqref{hdef} as 
\begin{equation}
\begin{split}
h(z,\overline{z},x) &= (\overline{z}-x)^{\frac{\gamma}{2}+k+1}(1-zx)^{\alpha-1}\int_{0}^{1}u^{\frac{\gamma}{2}+k}(1-u)^{\alpha-1}du\\
&= c_{\alpha,\gamma,k}(\overline{z}-x)^{\frac{\gamma}{2}+k+1}(1-zx)^{\alpha-1}
\end{split}
\end{equation}
where
\begin{equation}
\begin{split}
c_{\alpha,\gamma,k} &:= \int_{0}^{1}u^{\frac{\gamma}{2}+k}(1-u)^{\alpha-1}du\\
&= \frac{\Gamma(\frac{\gamma}{2}+k+1)\Gamma(\alpha)}{\Gamma(\frac{\gamma}{2}+k+\alpha+1)}
\end{split}
\end{equation}
is the Euler beta integral. Then the obtained orthogonality reads
\begin{equation}
\begin{split}
\frac{\delta_{jk}}{\chi_{j}} &= \frac{c_{\alpha,\gamma,k}}{2i}\oint_{S^{1}}p_{j}(z)(z-x)^{\frac{\gamma}{2}}(\overline{z}-x)^{\frac{\gamma}{2}+k+1}(1-zx)^{\alpha-1}\,dz\\
&= \frac{c_{\alpha,\gamma,k}}{2i}\oint_{S^{1}}p_{j}(z) |z-x|^{\gamma}\left(\frac1z-x\right)^{k}(1-zx)^{\alpha}\, \frac{dz}{z},
\end{split}
\end{equation}
for $0 \leq k \leq j$. Since for any $\ell\geq 0$ one has the trivial identity
\begin{align*}
\frac1{z^{\ell}} = \sum_{k=0}^{\ell} \binom{\ell}{k} x^{\ell-k} \left(\frac1z-x\right)^k
\end{align*}
one can take linear combinations and equivalently express the orthogonality as
\begin{equation}
\frac{\delta_{jk}}{\chi_{j}} = \frac{c_{\alpha,\gamma,j}}{2i}\oint_{S^{1}}p_{j}(z)z^{-k}|z-x|^{\gamma}(1-zx)^{\alpha}\, \frac{dz}{z}
\end{equation}
for $0 \leq k \leq j$. We write, for $z \in S^{1}$,
\begin{equation}
|z-x|^{\gamma} = (z-x)^{\frac{\gamma}{2}}(1/z-x)^{\frac{\gamma}{2}}
\end{equation}
with cuts on $(-\infty,x)$ and $(1/x,\infty)$. Scaling $z \to zx$ so that the new roots have cuts on $(-\infty,1)$ and $(1/x^{2},\infty)$, the orthogonality becomes
\begin{equation}
\frac{\delta_{jk}}{\chi_{j}} = \frac{c_{\alpha,\gamma,j}}{2i}\oint_{S^{1}/x}\tilde p_{j}(z)z^{-k}(z-1)^{\frac{\gamma}{2}}(1/z-x^{2})^{\frac{\gamma}{2}}(1-x^{2}z)^{\alpha}\, \frac{dz}{z}
\end{equation}
where we used that the factor $x^{j-k}$ can be replaced with $1$. Then we get the advertised weight after noting that $(1/z-x^{2})^{\frac{\gamma}{2}} = z^{-\frac{\gamma}{2}}(1-x^{2}z)^{\frac{\gamma}{2}}$ with a cut on $[-\infty,0)$. This combines with the previous cut to give a cut on $(0,1)$ and a cut on $(1/x^{2},\infty)$. Since $x \in [0,1]$, the contour $S^{1}/x$ goes between the two branch cuts and we can shrink it back to $S^{1}$ by analyticity.
\end{proof}

\section{Differential identity and partition function}
\label{app:diffid}
In this appendix we give the proof of Lemma \ref{lem:diffid}. For convenience we restate it here. Consider the partition function in \eqref{momsgin}, which we denote
\begin{equation}
R_{\gamma}(x) = \mathbb{E}\left(|\det(B_{n}-x)|^{\gamma}\right).
\end{equation}
\begin{lemma}
The following differential identity holds:
\begin{equation}
\begin{split}
\frac{d}{dx}(\log R_{\gamma}(x)) &= -2x\left(\frac{\gamma}{2}+\alpha\right)\,\left( -nu^{n+1}q_{n}(u^{-1})I_{12}(u)\right.\\
&\left.-nu+u^{3}[\partial_{z}\tilde{p}_{n}(z)]_{z=u}\,I_{22}(u)\right.\\
&\left.-u^{n+2}[\partial_{z}q_{n}(z^{-1})]_{z=u}\,I_{12}(u)\right) \label{app:diffid}
\end{split}
\end{equation}
where $u=x^{-2}$ and
\begin{equation}\label{eq:I12I22}
\begin{split}
I_{12}(u) &= \oint_{\Sigma}\frac{z^{-n+1}\tilde{p}_{n}(z)}{u-z}\frac{w(z)}{2iz}dz\\
I_{22}(u) &= \oint_{\Sigma}\frac{q_{n}(z^{-1})}{u-z}\frac{w(z)}{2iz}dz.
\end{split}
\end{equation}
\end{lemma}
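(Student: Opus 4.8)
The plan is to differentiate the Toeplitz determinant $T_n(x,\gamma)$ through its reproducing kernel and then to collapse the resulting contour integral onto the point $u:=x^{-2}$ using the contour orthogonality. By \eqref{eq:EgammaTM}, $R_\gamma(x)$ differs from $T_n(x,\gamma)$ only by a factor independent of $x$, so $\frac{d}{dx}\log R_\gamma(x)=\frac{d}{dx}\log T_n(x,\gamma)$; and by \eqref{tnrelation}, $T_n=\prod_{j=0}^{n-1}\chi_j^{-1}\hat\chi_j^{-1}$, hence $\frac{d}{dx}\log T_n=-\sum_{j=0}^{n-1}\left(\tfrac{\chi_j'}{\chi_j}+\tfrac{\hat\chi_j'}{\hat\chi_j}\right)$. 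Differentiating the bi-orthonormality $\oint_{\mathcal C}\tilde p_j(z)q_j(z^{-1})w(z)\frac{dz}{2iz}=1$ in $x$, and noting that $\partial_x\tilde p_j-\tfrac{\chi_j'}{\chi_j}\tilde p_j$ has degree $\le j-1$ (so it is annihilated against $q_j(z^{-1})$ by \eqref{qorthog}), while $\partial_x q_j(z^{-1})-\tfrac{\hat\chi_j'}{\hat\chi_j}q_j(z^{-1})$ is a polynomial of degree $\le j-1$ in $z^{-1}$ (so it is annihilated against $\tilde p_j$ by \eqref{constr}), one gets $\tfrac{\chi_j'}{\chi_j}+\tfrac{\hat\chi_j'}{\hat\chi_j}=-\oint_{\mathcal C}\tilde p_j(z)q_j(z^{-1})\,\partial_x\log w(z)\,w(z)\frac{dz}{2iz}$, and summing over $j$,
\begin{equation*}
\frac{d}{dx}\log R_\gamma(x)=\oint_{\mathcal C}K_n(z,z)\,\partial_x\log w(z)\,w(z)\,\frac{dz}{2iz},\qquad K_n(z,\zeta):=\sum_{j=0}^{n-1}\tilde p_j(z)\,q_j(\zeta^{-1}).
\end{equation*}
(Equivalently one reaches the same integral from the Riemann--Hilbert variational identity $\frac{d}{dx}\log T_n=\frac{1}{2\pi i}\oint_{\mathcal C}(Y_-^{-1}\partial_zY_-)_{21}\,\partial_xw_n\,dz$, the two being identified via Christoffel--Darboux.)

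From \eqref{cweight} we compute $\partial_x\log w(z)=-2x\left(\tfrac\gamma2+\alpha\right)\frac{z}{1-x^2z}=-2x\left(\tfrac\gamma2+\alpha\right)\frac{zu}{u-z}$, whose only singularity is the simple pole at $z=u=x^{-2}$, lying in the exterior of $\mathcal C$. Inserting the Christoffel--Darboux formula for the bi-orthogonal pair $\{\tilde p_j,q_j\}$ and its confluent ($\zeta\to z$) form expresses $K_n(z,z)$ as a combination of $\tilde p_n(z)q_n(z^{-1})$, $[\partial_z\tilde p_n(z)]\,q_n(z^{-1})$ and $\tilde p_n(z)[\partial_zq_n(z^{-1})]$, all with an extra factor that is a power of $z$.

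Substituting this into the integral, each resulting contour integral carries the common factor $z/(u-z)$; in each one replace the "wrong-degree'' factor ($q_n(z^{-1})$, $[\partial_z\tilde p_n(z)]$, or $[\partial_zq_n(z^{-1})]$) by its value at $z=u$ plus a remainder. Using that $p(z^{-1})-p(u^{-1})$ is divisible by $z^{-1}-u^{-1}=\frac{u-z}{uz}$ and $p(z)-p(u)$ by $z-u$, the remainder divided by $u-z$ becomes a Laurent polynomial of exactly the right degree to be killed by the orthogonality relations \eqref{constr}, \eqref{qorthog}, apart from a single leading contribution which, via the bi-orthonormality $\oint_{\mathcal C}\tilde p_nq_n(z^{-1})w\frac{dz}{2iz}=1$, produces the free term $-nu$. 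The "frozen'' values $q_n(u^{-1})$, $[\partial_z\tilde p_n(z)]_{z=u}$, $[\partial_zq_n(z^{-1})]_{z=u}$ then multiply Cauchy-type integrals which, after the elementary partial fractions $\frac{z^n}{u-z}=\frac{u^n}{u-z}-\sum_{k=0}^{n-1}u^kz^{\,n-1-k}$ and $\frac{z}{u-z}=\frac{u}{u-z}-1$ (and once more the orthogonality to discard the polynomial parts), collapse precisely onto $I_{12}(u)$ and $I_{22}(u)$ accompanied by explicit powers of $u$. Assembling these contributions and keeping the signs and powers of $u$ straight yields \eqref{app:diffid}.

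The main obstacle is this last step: iterating the orthogonality relations together with the partial-fraction decompositions so that the Cauchy transforms land exactly on $I_{12}(u)$ and $I_{22}(u)$ with the correct powers of $u$, while simultaneously isolating the free term $-nu$. Pinning down the exact form of the (confluent) Christoffel--Darboux identity for this non-self-adjoint bi-orthogonal system on the circle — the one nonroutine ingredient — and tracking all constants and signs through it, is where the real work lies; the rest is bookkeeping.
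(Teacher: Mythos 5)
Your proposal follows essentially the same route as the paper's proof in Appendix~\ref{app:diffid}: reduce $\frac{d}{dx}\log R_\gamma$ to $-2\sum_j \partial_x\chi_j/\chi_j$, pass the $x$-derivative onto the weight via the bi-orthonormality, apply the Christoffel--Darboux formula at coincident points, and then collapse the three resulting integrals onto $I_{12}(u)$ and $I_{22}(u)$ using partial fractions in $z/(u-z)$ together with the orthogonality relations, with the leftover leading terms producing $-nu$. The final bookkeeping you defer is exactly the computation the paper carries out explicitly (its integrals $I_1$, $I_2$, $I_3$), and the mechanism you describe for it is the correct one.
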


\begin{proof}
Starting with identity \eqref{momsgin}, we write
\begin{equation}
\begin{split}
\frac{d}{dx}(\log R_{\gamma}(x)) &= -2\sum_{j=0}^{n-1}\frac{\partial_{x}\chi_{j}}{\chi_{j}}. \label{dgam}
\end{split}
\end{equation}
We will make use of the following pair of identities
\begin{align}
\oint_{\Sigma}[\partial_{x} \tilde{p}_{j}(z)]q_{j}(z^{-1})w(z)\,\frac{dz}{2iz} &= \frac{\partial_{x} \chi_{j}}{\chi_{j}}, \\
\oint_{\Sigma}\tilde{p}_{j}(z)[\partial_{x} q_{j}(z^{-1})]w(z)\,\frac{dz}{2iz} &= \frac{\partial_{x} \hat{\chi}_{j}}{\hat{\chi}_{j}}.
\end{align}
These follow immediately from taking the leading terms, of $\tilde{p}_{j}(z)$ in the first identity and of $q_{j}(z^{-1})$ in the second and applying the orthogonality relations \eqref{constr}, \eqref{qorthog} and \eqref{biorth}. Furthermore note that
\begin{equation}
\frac{\partial_{x} \chi_{j}}{\chi_{j}} = \frac{\partial_{x} \hat{\chi}_{j}}{\hat{\chi}_{j}},
\end{equation}
which follows directly from \eqref{norm-rel}. From the above identities, we get
\begin{equation}
-2\frac{\partial_{x}\chi_{j}}{\chi_{j}} = -\oint_{\Sigma}[\partial_{x}(\tilde{p}_{j}(z)q_{j}(z^{-1}))]w(z)\,\frac{dz}{2iz}.
\end{equation}
Then \eqref{dgam} becomes
\begin{equation}
\frac{d}{dx}(\log R_{\gamma}(x)) = -\sum_{j=0}^{n-1}\oint_{\Sigma}[\partial_{x}(\tilde{p}_{j}(z)q_{j}(z^{-1}))]w(z)\,\frac{dz}{2iz}. \label{dgam2}
\end{equation}
Since \eqref{biorth} is constant in $x$, we may pass the derivative in \eqref{dgam2} onto the weight and obtain
\begin{equation}
\begin{split}
\frac{d}{dx}(\log R_{\gamma}(x)) &= \sum_{j=0}^{n-1}\oint_{\Sigma}\tilde{p}_{j}(z)q_{j}(z^{-1})[\partial_{x} w(z)]\,\frac{dz}{2iz}\\
&= -2x\left(\frac{\gamma}{2}+\alpha\right)\sum_{j=0}^{n-1}\oint_{\Sigma}\frac{z\tilde{p}_{j}(z)q_{j}(z^{-1})}{1-x^{2}z}\,\frac{w(z)dz}{2iz}. \label{log-deriv}
\end{split}
\end{equation}
In obtaining \eqref{log-deriv}, we used that
\begin{equation}
\partial_{x}w(z) = \frac{-2xz(\frac{\gamma}{2}+\alpha)}{1-x^{2}z}\,w(z)
\end{equation}
which follows from the definition of the weight in \eqref{cweight}. Next we perform the summation in \eqref{log-deriv}. The Christoffel-Darboux formula for bi-orthogonal polynomials (see e.g. Appendix A in \cite{WW19}) tells us that
\begin{equation}
\begin{split}
&\sum_{j=0}^{n-1}\tilde{p}_{j}(z)q_{j}(z^{-1}) = -n\tilde{p}_{n}(z)q_{n}(z^{-1})\\
&+z(q_{n}(z^{-1})\partial_{z}\tilde{p}_{n}(z)-\tilde{p}_{n}(z)\partial_{z}q_{n}(z^{-1})).
\end{split}
\end{equation}
Inserting this into \eqref{log-deriv} results in three integrals to evaluate, which we write as
\begin{equation}
\frac{d}{dx}(\log R_{\gamma}(x)) = -2x\left(\frac{\gamma}{2}+\alpha-1\right)\,(I_{1}+I_{2}+I_{3}). \label{ijformula}
\end{equation}
In what follows we shall see that each integral $I_{j}$ gives rise to the main contribution on line $j$ of the identity \eqref{diffid}, for $j=1,2,3$. The first is
\begin{equation}
\begin{split}
I_{1} &= -n\oint_{\Sigma}\frac{z\tilde{p}_{n}(z)q_{n}(z^{-1})}{1-x^{2}z}\,\frac{w(z)dz}{2iz}\\
&= \frac{n}{x^{2}}\oint_{\Sigma}\tilde{p}_{n}(z)q_{n}(z^{-1})\,\frac{w(z)dz}{2iz}-\frac{n}{x^{2}}\oint_{\Sigma}\frac{\tilde{p}_{n}(z)q_{n}(z^{-1})}{1-x^{2}z}\,\frac{w(z)dz}{2iz}\\
&=un-u^{2}n\oint_{\Sigma}\frac{\tilde{p}_{n}(z)q_{n}(z^{-1})}{u-z}\,\frac{w(z)dz}{2iz},
\end{split}
\end{equation}
where in the first line we wrote $z = -\frac{1}{x^{2}}(1-x^{2}z)+\frac{1}{x^{2}}$ and the last line follows from orthogonality. Then we write
\begin{equation}
\begin{split}
\label{qmdiv}
&\frac{q_{n}(z^{-1})}{u-z} = \frac{z^{-n+1}z^{n-1}q_{n}(z^{-1})}{u-z}\\
&= u^{n-1}q_{n}(u^{-1})\frac{z^{-n+1}}{u-z} + z^{-n+1}\,\frac{z^{n-1}q_{n}(z^{-1})-u^{n-1}q_{n}(u^{-1})}{u-z}
\end{split}
\end{equation}
and notice that 
\begin{equation}
z^{-n+1}\,\frac{z^{n-1}q_{n}(z^{-1})-u^{n-1}q_{n}(u^{-1})}{u-z} = \frac{\hat{\chi}_{n}}{u}z^{-n}+\mathcal{P}_{n-1}(z^{-1})
\end{equation}
where $\mathcal{P}_{n-1}$ is some degree $n-1$ polynomial, which does not contribute because of orthogonality. The contribution of the leading term above cancels the first one in $I_{1}$ and we find that 
\begin{equation}
I_{1} = -n u^{n+1}q_{n}(u^{-1})\oint_{\Sigma}\frac{\tilde{p}_{n}(z)z^{-n+1}}{u-z}\,\frac{w(z)}{2iz}dz. \label{i1final}
\end{equation}
The next integral is 
\begin{equation}
I_{2} = u\oint_{\Sigma}\frac{z^{2}q_{n}(z^{-1})[\partial_{z}\tilde{p}_{n}(z)]}{u-z}\frac{w(z)}{2iz}dz.
\end{equation}
We write $z^{2} = -z(u-z)+uz$ so that
\begin{equation}
\begin{split}
I_{2} &= -u\oint_{\Sigma}zq_{n}(z^{-1})[\partial_{z}\tilde{p}_{n}(z)]\frac{w(z)}{2iz}dz\\
&+ u^{2}\oint_{\Sigma}\frac{zq_{n}(z^{-1})[\partial_{z}\tilde{p}_{n}(z)]}{u-z}\frac{w(z)}{2iz}dz
\end{split}
\end{equation}
and by orthogonality
\begin{equation}
I_{2} = -n u + u^{2}\oint_{\Sigma}\frac{q_{n}(z^{-1})z[\partial_{z}\tilde{p}_{n}(z)]}{u-z}\frac{w(z)}{2iz}dz.
\end{equation}
Now similarly, write
\begin{equation}
\begin{split}
\frac{z[\partial_{z}\tilde{p}_{n}(z)]}{u-z} &= \frac{u[\partial_{u}\tilde{p}_{n}(u)]}{u-z}+\frac{z[\partial_{z}\tilde{p}_{n}(z)]-u[\partial_{u}\tilde{p}_{n}(u)]}{u-z}\\
&=\frac{u[\partial_{u}\tilde{p}_{n}(u)]}{u-z}+\mathcal{P}_{n-1}(z)
\end{split}
\end{equation}
where $\mathcal{P}_{n-1}(z)$ is some degree $n-1$ polynomial. By orthogonality, only the first term above can contribute and so
\begin{equation}
I_{2} = -n u+u^{3}[\partial_{u}\tilde{p}_{n}(u)]\oint_{\Sigma}\frac{q_{n}(z^{-1})}{u-z}\frac{w(z)}{2iz}dz. \label{i2final}
\end{equation}
We have
\begin{equation}
I_{3} = -u\int_{\Sigma}\frac{z^{2}\tilde{p}_{n}(z)[\partial_{z}q_{n}(z^{-1})]}{u-z}\frac{w(z)}{2iz}dz
\end{equation}
Writing $z^{2} = -(u-z)z+uz$ we obtain $I_{3} = I_{3,1} + I_{3,2}$, where
\begin{equation}
\begin{split}
I_{3,1} &=  u\int_{\Sigma}\tilde{p}_{n}(z)z[\partial_{z}q_{n}(z^{-1})]\frac{w(z)}{2iz}dz\\
I_{3,2} &= -u^{2}\int_{\Sigma}\frac{\tilde{p}_{n}(z)z[\partial_{z}q_{n}(z^{-1})]}{u-z}\frac{w(z)}{2iz}dz.
\end{split}
\end{equation}
We have $I_{3,1} = -n u$ and analogously to the steps in \eqref{qmdiv}, we obtain
\begin{equation}
I_{3,2} = n u-u^{n+2}[\partial_{u}q_{n}(u^{-1})]\int_{\Sigma}\frac{z^{-n+1}\tilde{p}_{n}(z)}{u-z}\frac{w(z)}{2iz}dz
\end{equation}
so that
\begin{equation}
I_{3} = -u^{n+2}[\partial_{u}q_{n}(u^{-1})]\int_{\Sigma}\frac{z^{-n+1}\tilde{p}_{n}(z)}{u-z}\frac{w(z)}{2iz}dz. \label{i3final}
\end{equation}
Inserting \eqref{i1final}, \eqref{i2final} and \eqref{i3final} into \eqref{ijformula} completes the proof.
\end{proof}

\section{The inverse image of $\phi_r$} \label{Apx:phiAnalyticIso}

In this section we prove that $\phi_r=\phi-\ell$, as defined in \eqref{eq:phir}, is an analytic isomorphism when restricted to certain sets. Equivalently, we may instead investigate $e^\phi$, and we may then ignore the cut at $(-\infty,0]$. For any set $S\subset \mathbb C$ we denote the set of its complex conjugate elements by $S^*$. 

\begin{lemma} \label{lem:phiAnalyticIso}
Let $c\in(0,1)$.
There exist pairwise disjoint open sets $\mathcal S_1\subset\mathbb C$ and $\mathcal S_2, \mathcal S_3\subset\mathbb H$ such that $e^\phi$ is an analytic isomorphism from:
\begin{itemize}
\item[(i)] $\mathcal S_1$ to $\{\zeta\in\mathbb C\setminus\{0\} : \arg \zeta\in (-\pi,-\pi c)\cup (\pi c,\pi]\}$.
\item[(ii)] $\mathcal S_2$ and $\mathcal S_3^*$ to $\{\zeta\in\mathbb C\setminus\{0\} : \arg \zeta\in (0,\pi c)\}$
\item[(iii)] $\mathcal S_2^*$ and $\mathcal S_3$ to $\{\zeta\in\mathbb C\setminus\{0\} : \arg \zeta\in (-\pi c,0)\}$.
\end{itemize}
We have $(-\infty,0)\subset \mathcal S_1, (0,z_0)\subset \partial \mathcal S_2, (z_0,\infty)\subset \partial \mathcal S_3$, and $\arg z$ is arbitrarily close to values in the interval $[\frac{\pi c}{c+1}, \frac{2\pi c}{c+1}]$ as $|z|\to\infty$ in $\mathcal S_2$ (see Figure \ref{fig:S1S2S3}). 
\end{lemma}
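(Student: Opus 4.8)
The plan is to study the conformal map $\zeta = e^{\phi(z)} = z(1-x^2z)^c$ on $\mathbb{C}\setminus([0,z_0]\cup[z_0,x^{-2}])$, forgetting the cut on $(-\infty,0]$ since it plays no role for $e^\phi$. First I would record the local data: $\phi$ has a unique critical point at $z_0 = 1/(x^2(1+c))$ with $\phi''(z_0)\ne 0$, so near $z_0$ the map $e^\phi$ is two-to-one, branching over $e^{\phi(z_0)}$; near $z=0$ we have $\zeta\sim z$; near $z=x^{-2}$ we have $\zeta\sim (1-x^2z)^c$ so $\zeta\to 0$ with the argument scaled by $c$; and as $|z|\to\infty$, $\zeta\sim -x^{2c}z^{1+c}$, so $\arg\zeta$ runs over an interval of length $(1+c)\cdot(2\pi)$ as $\arg z$ makes a full loop, meaning the exterior of a large circle is wrapped around the origin $1+c$ times (with $0<c<1$ this is between one and two times). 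Then I would trace boundary correspondences: the segment $(0,z_0)$ maps to $(0,\phi(z_0)]$ on the positive real $\zeta$-axis approached from $\arg\zeta=0$, the segment $(z_0,x^{-2})$ maps to the same interval approached from the opposite side, the ray $(x^{-2},\infty)$ maps (on each side of its cut) to rays at $\arg\zeta = \pm\pi c$, and $(-\infty,0)$ maps to the ray at $\arg\zeta=\pi$ (equivalently $-\pi$). The curve emanating from $z_0$ on which $\phi$ is real and larger than $\phi(z_0)$—call it $\mathcal{B}$, together with $\mathcal{B}^*$—maps onto $[\phi(z_0),\infty)$ and separates the plane into the sheets.

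Next I would assemble the sets. Define $\mathcal{S}_1$ to be the connected component of $\mathbb{C}\setminus(\mathcal{B}\cup\mathcal{B}^*\cup[0,x^{-2}])$ containing $(-\infty,0)$; by the boundary correspondences just listed and the argument principle (counting preimages of a generic point via a contour integral of $\phi'$, which has no zeros off $z_0$), $e^\phi$ restricts to a bijection from $\mathcal{S}_1$ onto the sector $\{\arg\zeta\in(-\pi,-\pi c)\cup(\pi c,\pi]\}$, and since $\phi'\ne 0$ on $\mathcal{S}_1$ this bijection is biholomorphic. The remaining two components $\mathcal{S}_2\subset\mathbb{H}$ (bounded by part of $(0,z_0)$, by $\mathcal{B}$, and by a piece of the large-circle image) and $\mathcal{S}_3\subset\mathbb{H}$ (bounded by part of $(z_0,x^{-2})$, by $\mathcal{B}$, and again by the far-field piece) map biholomorphically onto $\{\arg\zeta\in(0,\pi c)\}$; to see which maps to which I track the boundary: on $\mathcal{S}_2$ the side $(0,z_0)$ gives $\arg\zeta\to 0^+$ and the side $\mathcal{B}$ gives $\arg\zeta$ approaching something in $(0,\pi c)$ from the far field, while on $\mathcal{S}_3$ the side $(z_0,x^{-2})$ near $x^{-2}$ forces $\arg\zeta\to \pi c^-$. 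The conjugation symmetry $\phi(\bar z)=\overline{\phi(z)}$ (valid since $x,c$ are real and the relevant cuts are conjugation-symmetric) then hands over the statements for $\mathcal{S}_2^*,\mathcal{S}_3^*$ for free. The claims $(-\infty,0)\subset\mathcal{S}_1$, $(0,z_0)\subset\partial\mathcal{S}_2$, $(z_0,x^{-2})\subset\partial\mathcal{S}_3$ are built into the construction, and the far-field statement that $\arg z$ approaches values in $[\tfrac{\pi c}{c+1},\tfrac{2\pi c}{c+1}]$ in $\mathcal{S}_2$ comes from solving $\arg\zeta\in(0,\pi c)$ against $\arg\zeta = \pi+(1+c)\arg z + o(1)$ modulo $2\pi$ for $|z|$ large.

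The main obstacle I anticipate is not any single estimate but the bookkeeping of the global branch structure: because $(1+c)\in(1,2)$, the map $e^\phi$ is not injective on large annuli, and one must verify carefully that the level curve $\mathcal{B}=\{\phi(z)\in[\phi(z_0),\infty)\}$ (together with $\mathcal{B}^*$) is exactly the right "cut" that unfolds the map into the three univalent pieces, i.e. that $\mathcal{B}$ really does run monotonically from $z_0$ out to infinity without closing up or self-intersecting. I would establish this by noting that on $\mathcal{B}$ the function $\phi$ is real with $\phi'$ real and nonvanishing away from $z_0$ (so $\phi$ is strictly monotone along $\mathcal{B}$, parametrizing it by $\phi\in(\phi(z_0),\infty)$), and that $\phi(z)\to\infty$ forces $|z|\to\infty$; combined with $\phi'\ne 0$ off $z_0$ and an argument-principle preimage count on each of the three candidate regions, this pins down the global picture. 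The verification that a generic $\zeta$ in each target sector has exactly one preimage in the claimed region is the technical heart, and it follows from a winding-number computation along the images of the region boundaries using the boundary correspondences above.
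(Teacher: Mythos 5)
Your proposal takes essentially the same route as the paper: argument-principle preimage counting, the level set $\{z : \phi(z)\in[\phi(z_0),\infty)\}$ (your $\mathcal B\cup\mathcal B^*$, the paper's $\gamma_{z_0}$) as the natural branch cut unfolding $e^\phi$ into univalent pieces, conjugation symmetry, boundary correspondences, and a far-field expansion. The only substantive packaging difference is that the paper substitutes $w=x^{-2}-z$, which turns $e^{\phi(z)}=\zeta$ into the algebraic equation $w^{c+1}-x^{-2}w^c=-x^{-2c}\zeta$ and lets one carry out the preimage count by an explicit keyhole contour integral; you stay in the $z$ variable and defer the count to a sketched winding-number argument. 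That is a fair trade and the skeleton is correct.

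There is, however, a concrete error in your far-field computation that would give the wrong final interval. With $\log(1-x^2z)$ cut on $[x^{-2},\infty)$ and normalized to be real for $z$ large negative, one has for $\arg z=\theta\in(0,\pi)$ and $|z|$ large that $\arg(1-x^2z)=\theta-\pi+o(1)$, hence
\begin{equation*}
e^{\phi(z)}=z(1-x^2z)^c = x^{2c}\,e^{-i\pi c}\,z^{1+c}\bigl(1+\mathcal O(1/z)\bigr),
\qquad \arg\zeta = (1+c)\theta-\pi c + o(1).
\end{equation*}
You wrote $\zeta\sim -x^{2c}z^{1+c}$, i.e.\ replaced the phase $e^{-i\pi c}$ by $-1=e^{i\pi}$, which is only correct at $c=1$, and then solved $\arg\zeta=\pi+(1+c)\arg z$ against $\arg\zeta\in(0,\pi c)$. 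Modulo $2\pi$ this yields $\arg z\in\bigl(\tfrac{\pi}{1+c},\pi\bigr)$, which does not match the statement $\bigl[\tfrac{\pi c}{c+1},\tfrac{2\pi c}{c+1}\bigr]$ (for $c=\tfrac12$ the two intervals are $(\tfrac{2\pi}{3},\pi)$ versus $(\tfrac{\pi}{3},\tfrac{2\pi}{3})$, disjoint). Using the correct phase $e^{-i\pi c}$ one instead solves $(1+c)\theta-\pi c\in(0,\pi c)$ and recovers $\theta\in\bigl(\tfrac{\pi c}{1+c},\tfrac{2\pi c}{1+c}\bigr)$ as claimed. Everything else in the proposal is fine once the argument-principle count is spelled out; I would also suggest making explicit that $\phi'$ has no zero other than $z_0$, so that your sheets really are univalent, which you implicitly use but do not state.
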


\begin{proof}
Let $\zeta\in\mathbb C\setminus\{0\}$ and impose $\arg \zeta\neq \pm \pi c$. Defining $w=x^{-2}-z$, the equation $e^{\phi(z)}=\zeta$ is equivalent to
\begin{align} \label{eq:solutionForGammarw}
f(w):= w^{c+1}-x^{-2} w^c = -x^{-2c} \zeta,
\end{align}
where the cut for $w^c$ and $w^{c+1}$ has to be placed at $(-\infty,0]$. We will find out how many solutions $w\in \mathbb C\setminus (-\infty,0]$ (or equivalently $z\in\mathbb C\setminus [x^{-2},\infty)$) to  \eqref{eq:solutionForGammarw} exist. Consider the contour $\gamma$ one gets by taking $|w|=R$, and replacing a part by a key hole contour around $(-R,0]$, to create a closed contour. The number of solutions to \eqref{eq:solutionForGammarw} in the interior of $\gamma$ is
\begin{align} \label{eq:numberZeros}
\frac1{2\pi i} \oint_\gamma \frac{f'(w)}{f(w)+x^{-2c}\zeta} dw.
\end{align}
Collapsing the horizontal line segments to the negative real line and verifying that the contribution near $w=0$ is negligible, we obtain that the number of solutions to \eqref{eq:solutionForGammarw} in $|w|<R$ (excluding $w\in [-R,0]$) is given by
\begin{multline*}
\frac1{2\pi i}\oint_{|w|=R} \frac{(c+1)w^c-c x^{-2}w^{c-1}}{w^{c+1}-x^{-2} w^c+x^{-2c}\zeta} dw 
+ \frac1{2\pi i}\int_0^R \frac{(c+1)t^c+c x^{-2} t^{c-1}}{t^{c+1}+x^{-2} t^c-x^{-2c} e^{\pi i c}\zeta} dt\\
- \frac1{2\pi i}\int_0^R \frac{(c+1)t^c+c x^{-2} t^{c-1}}{t^{c+1}+x^{-2} t^c-x^{-2c} e^{-\pi i c}\zeta} dt.
\end{multline*}
Since $\arg \zeta\neq \pm \pi c$ this expression is well-defined and it equals
\begin{align*}
&\frac1{2\pi i}\oint_{|w|=R}  \frac{(c+1)w^c-c x^{-2}w^{c-1}}{w^{c+1}-x^{-2} w^c+x^{-2c}\zeta} dw 
-\arg(-e^{\pi i c} \zeta)+\arg(-e^{-\pi i c}\zeta)\\
&+\frac1{2\pi i}\log (R^{c+1}+x^{-2}R^c-x^{-2c} e^{\pi i c}\zeta)
-\frac1{2\pi i}\log(R^{c+1}+x^{-2}R^c-x^{-2c} e^{-\pi ic}\zeta),
\end{align*}
where the logarithms (and arguments) have cut $(-\infty,0]$.  Now we use that 
\begin{align*}
\arg(-e^{\pi i c} \zeta)-\arg(-e^{-\pi i c}\zeta)=2\pi
\begin{cases}
c, & \arg\zeta\in (-\pi,-\pi c)\cup (\pi c,\pi],\\
c-1, & \arg\zeta\in (-\pi c, \pi c).
\end{cases}
\end{align*}
Thus \eqref{eq:numberZeros} has limit $1$ as $R\to\infty$ when $|\arg \zeta|\in(\pi c,\pi]$ and limit $2$ when $|\arg \zeta|\in[0,\pi c)$. We conclude that \eqref{eq:solutionForGammarw} has a unique solution for each $\zeta\in \mathbb C$ with $\arg\zeta\in(-\pi,-\pi c)\cup (\pi c, \pi]$. $\mathcal S_1$ is defined as the collection of these solutions. Clearly, $(-\infty,0)\subset \mathcal S_1$.  

\begin{figure}[h!]
\centerline{\includegraphics[scale=1]{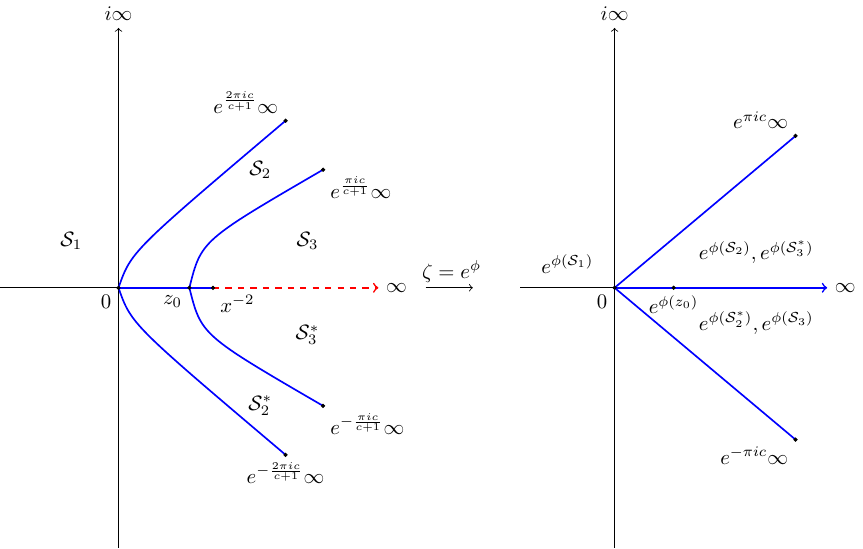}}
\caption{Schematic representation of the sets $\mathcal S_1, \mathcal S_2, \mathcal S_2^*, \mathcal S_3$ and $\mathcal S_3^*$ from Lemma \ref{lem:phiAnalyticIso}. $\partial \mathcal S_2\cap \partial \mathcal S_3$ and $\partial \mathcal S_2^*\cap \partial \mathcal S_3^*$ are mapped to $[e^{\phi(z_0)}, \infty)$. $[0,z_0]$ and $[z_0,x^{-2}]$ are mapped to $[0,e^{\phi(z_0)}]$. $\partial \mathcal S_1$ and $[x^{-2},\infty)$ correspond to $\arg \zeta=\pm \pi c$.}
\label{fig:S1S2S3}
\end{figure}

For cases (ii) and (iii), note that $e^{\phi}$ maps both $(0,z_0)$ and $(z_0,x^{-2})$ bijectively to $(0,e^{\phi(z_0)})$. Hence, the inverse image of $(-\infty,e^{\phi(z_0)}]$ under $e^\phi$ is $(-\infty,x^{-2}]$. When $\zeta>e^{\phi(z_0)}$, solutions $z=x^{-2}-w$ to \eqref{eq:solutionForGammarw} cannot be real and necessarily come in complex conjugate pairs, thus the inverse image of $[e^{\phi(z_0)},\infty)$ under $e^\phi$ is a curve $\gamma_{z_0}$ which is symmetric with respect to the horizontal axis. The curves $\gamma_{z_0}$ and $[0,\infty)$ divide $\mathbb C\setminus \overline{\mathcal S_1}$ into four parts $\mathcal S_2, \mathcal S_3\subset \mathbb H$ and their complex conjugate sets. We define $\mathcal S_2$ as the component such that $[0,z_0]\subset \partial S_2$ and $\mathcal S_3$ as the component such that $[z_0,\infty)\subset \partial S_3$. Inspecting the boundaries of the four regions, and noting that $e^{\phi(\overline z)}=\overline{e^{\phi(z)}}$ we infer that the inverse image of $\{\zeta\in\mathbb C\setminus\{0\} : \arg \zeta\in (0,\pi c)\}$ is the union of two of $\mathcal S_1, \mathcal S_2, \mathcal S_1^*$ and $\mathcal S_2^*$ and the inverse image of $\{\zeta\in\mathbb C\setminus\{0\} : \arg \zeta\in (-\pi c,0)\}$ is the union of the remaining two. For any fixed $\sigma\in(0,z_0)\cup (z_0,x^{-2})$ we see that $\operatorname{Im} e^{\phi(\sigma+i\epsilon)} = (1-x^2\sigma)^{c-1} (1-\sigma/z_0) \epsilon+\mathcal O(\epsilon^2)$ for small $\epsilon$, hence $\mathcal S_2$ is mapped to $\{\zeta\in\mathbb C\setminus\{0\} : \arg \zeta\in (0,\pi c)\}$ while $\mathcal S_3$ is mapped to $\{\zeta\in\mathbb C\setminus\{0\} : \arg \zeta\in (0,\pi c)\}$.

The final statement concerning the boundary of $\mathcal S_2$ follows from the fact that $e^{\phi(z)}=x^{-2c}e^{-\pi i c} z^{c+1}(1+\mathcal O(1/z))$ for large $z$ in $\mathcal S_2$. Thus, solving \eqref{eq:solutionForGammarw} for $\arg \zeta=0$ and $\arg \zeta$ slightly smaller than $\pi c$, we obtain the statement. 
\end{proof}

\bibliographystyle{plain}
\bibliography{bibliography}

\begin{thebibliography}{10}

\bibitem{AKS21}
Y.~Ameur, N.-G. Kang, and S.-M. Seo.
\newblock The random normal matrix model: insertion of a point charge.
\newblock {\em Potential Anal.}, 58:331--372, 2021.

\bibitem{BBLM14}
F.~Balogh, M.~Bertola, S.-Y. Lee, and K.~D. T.-R. McLaughlin.
\newblock Strong asymptotics of the orthogonal polynomial with respect to a
  measure supported on the plane.
\newblock {\em Commun. Pure Appl. Math.}, 67(8):1275--1338, 2014.

\bibitem{BGM15}
F.~Balogh, T.~Grava, and D.~Merzi.
\newblock Orthogonal polynomials for a class of measures with discrete
  rotational symmetries in the complex plane.
\newblock {\em Const. Approx.}, 46(1):109--169, 2017.

\bibitem{BKP23}
S.~Berezin, A.~B.~J. Kuijlaars, and I.~Parra.
\newblock Planar orthogonal polynomials as type {I} multiple orthogonal
  polynomials.
\newblock {\em SIGMA}, 19:020, 2023.

\bibitem{BRG18}
M.~Bertola, J.~G. Elias~Rebelo, and T.~Grava.
\newblock Painlev\'e {IV} critical asymptotics for orthogonal polynomials in
  the complex plane.
\newblock {\em SIGMA}, 091:34 pages, 2018.

\bibitem{BY21a}
S.~Byun, S.-Y. Lee, and M.~Yang.
\newblock Lemniscate ensembles with spectral singularity, 2021.
\newblock arXiv:2107.07221 [math.PR].

\bibitem{B24}
S.-S. Byun.
\newblock Planar equilibrium measure problem in the quadratic fields with a
  point charge.
\newblock {\em Comput. Methods Funct. Theory}, 24(2):303--332, 2024.

\bibitem{byun2025orthogonal}
S.-S. Byun, P.~J. Forrester, A.~B.~J. Kuijlaars, and S.~Lahiry.
\newblock Orthogonal polynomials in the spherical ensemble with two insertions,
  2025.
\newblock arXiv:2503.15732 [math.CA].

\bibitem{byun2025properties}
S.-S. Byun, P.~J. Forrester, and S.~Lahiry.
\newblock Properties of the one-component coulomb gas on a sphere with two
  macroscopic external charges, 2025.
\newblock arXiv:2501.05061 [math-ph].

\bibitem{BKSY25}
S.-S. Byun, N.-G. Kang, S.-M. Seo, and M.~Yang.
\newblock Free energy of spherical {C}oulomb gases with point charges, 2025.
\newblock arXiv:2501.07284 [math.PR].

\bibitem{byun2024free}
S.-S. Byun, S.-M. Seo, and M.~Yang.
\newblock Free energy expansions of a conditional {G}in{UE} and large
  deviations of the smallest eigenvalue of the {LUE}, 2024.
\newblock arXiv:2402.18983 [math.PR].

\bibitem{BY23}
S.-S. Byun and M.~Yang.
\newblock Determinantal {C}oulomb gas ensembles with a class of discrete
  rotational symmetric potentials.
\newblock {\em SIAM J. Math. Anal.}, 55(6):6867--6897, 2023.

\bibitem{byun2025three}
S.-S. Byun and E.~Yoo.
\newblock Three topological phases of the elliptic {G}inibre ensembles with a
  point charge, 2025.
\newblock arXiv:2502.02948 [math-ph].

\bibitem{CIK11}
T.~Claeys, A.~Its, and I.~Krasovsky.
\newblock {Emergence of a singularity for Toeplitz determinants and Painlevé
  V}.
\newblock {\em Duke Math. J.}, 160(2):207 -- 262, 2011.

\bibitem{DS22}
A.~Dea\~{n}o and N.~Simm.
\newblock {Characteristic Polynomials of Complex Random Matrices and Painlev\'e
  Transcendents}.
\newblock {\em Int. Math. Res. Not.}, 2022(1):210--264, 05 2020.

\bibitem{DIK11}
P.~Deift, A.~Its, and I.~Krasovsky.
\newblock Asymptotics of {T}oeplitz, {H}ankel, and {T}oeplitz+{H}ankel
  determinants with {F}isher-{H}artwig singularities.
\newblock {\em Ann. of Math. (2)}, 174(2):1243--1299, 2011.

\bibitem{NIST:DLMF}
{\it NIST Digital Library of Mathematical Functions}.
\newblock http://dlmf.nist.gov/, Release 1.1.6 of 2022-06-30.
\newblock F.~W.~J. Olver, A.~B. {Olde Daalhuis}, D.~W. Lozier, B.~I. Schneider,
  R.~F. Boisvert, C.~W. Clark, B.~R. Miller, B.~V. Saunders, H.~S. Cohl, and
  M.~A. McClain, eds.

\bibitem{gradshteyn2007table}
I.S. Gradshteyn and I.M. Ryzhik.
\newblock {\em Table of Integrals, Series, and Products}.
\newblock Elsevier/Academic Press, Amsterdam, 7th edition, 2007.

\bibitem{H24}
H.~Hedenmalm.
\newblock Soft {R}iemann-{H}ilbert problems and planar orthogonal polynomials.
\newblock {\em Comm. Pure Appl. Math.}, 77(4):2413--2451, 2024.

\bibitem{HW21}
H.~Hedenmalm and A.~Wennman.
\newblock Planar orthogonal polynomials and boundary universality in the random
  normal matrix model.
\newblock {\em Acta Math.}, 227(2):309--406, 2021.

\bibitem{Keating_Snaith_2000}
J.~P. Keating and N.~C. Snaith.
\newblock Random matrix theory and $\zeta(1/2+it)$.
\newblock {\em Comm. Math. Phys.}, 214(1):57--89, 2000.

\bibitem{LY17}
S-Y. Lee and M.~Yang.
\newblock Discontinuity in the asymptotic behavior of planar orthogonal
  polynomials under a perturbation of the {G}aussian weight.
\newblock {\em Comm. Math. Phys.}, 355(1):303--338, 2017.

\bibitem{lee2023strong}
S.-Y. Lee and M.~Yang.
\newblock Strong asymptotics of planar orthogonal polynomials: Gaussian weight
  perturbed by finite number of point charges.
\newblock {\em Commun. Pure Appl. Math.}, 76(10):2888--2956, 2023.

\bibitem{Molag2023Edge}
L.~D. Molag.
\newblock Edge behavior of higher complex-dimensional determinantal point
  processes.
\newblock {\em Ann. Henri Poincaré}, 24(12):4405--4437, 2023.

\bibitem{SS23}
A.~Serebryakov, N.~Simm, and G.~Dubach.
\newblock Characteristic polynomials of random truncations: Moments, duality
  and asymptotics.
\newblock {\em Random Matrices Theory Appl.}, 12(01):2250049, 2023.

\bibitem{WW19}
C.~Webb and M.~D. Wong.
\newblock On the moments of the characteristic polynomial of a {G}inibre random
  matrix.
\newblock {\em Proc. Lond. Math. Soc.}, 118(5):1017--1056, 2019.

\bibitem{ZS00}
K.~\.{Z}yczkowski and H-J. Sommers.
\newblock Truncations of random unitary matrices.
\newblock {\em J. Phys. A}, 33(10):2045--2057, 2000.

\end{thebibliography}

\end{document}